\numberwithin{equation}{section}
\numberwithin{figure}{section}
\numberwithin{table}{section}
\theoremstyle{remark}
\newtheorem*{notation*}{\protect\notationname}
\theoremstyle{plain}
\newtheorem{assumption}{\protect\assumptionname}
\theoremstyle{plain}
\newtheorem{thm}{\protect\theoremname}[section]   
\newtheorem{lem}[thm]{\protect\lemmaname}         
\newtheorem{prop}[thm]{\protect\propositionname}  
\theoremstyle{definition}
\theoremstyle{remark}
\setlist[enumerate,1]{label=\upshape{(\roman*)}, ref=(\roman*)}
\setlist[enumerate,2]{label=\upshape{(\alph*)}, ref=(\alph*)}
\setlist[enumerate,3]{label=\upshape{\roman*.}, ref=\roman*}
\date{}
\newcommand\smaller[2][0.85]{{\scalefont{#1}#2}}
\newcommand{\ass}[1]{{\upshape{\smaller[0.76]{#1}}}}
\newcommand{\assumpname}[1]{%
  \renewcommand{\theassumption}{\ass{#1}}%
}
\newsavebox{\@brx}
\newcommand{\dbllangle}[1][]{\savebox{\@brx}{\(\m@th{#1\langle}\)}%
  \mathopen{\copy\@brx\kern-0.5\wd\@brx\usebox{\@brx}}}
\newcommand{\dblrangle}[1][]{\savebox{\@brx}{\(\m@th{#1\rangle}\)}%
  \mathclose{\copy\@brx\kern-0.5\wd\@brx\usebox{\@brx}}}
\newcounter{subremark}[rem]
\newcommand{\authaffil}[2]{\Shortunderstack{#1\\\small{#2}}}
\providecommand{\assumptionname}{Assumption}
\providecommand{\examplename}{Example}
\providecommand{\lemmaname}{Lemma}
\providecommand{\notationname}{Notation}
\providecommand{\propositionname}{Proposition}
\providecommand{\remarkname}{Remark}
\providecommand{\theoremname}{Theorem}
\begin{document}


\global\long\def\uwrite#1#2{\underset{#2}{\underbrace{#1}} }%

\global\long\def\blw#1{\ensuremath{\underline{#1}}}%

\global\long\def\abv#1{\ensuremath{\overline{#1}}}%

\global\long\def\vect#1{\mathbf{#1}}%


\global\long\def\smlseq#1{\{#1\} }%

\global\long\def\seq#1{\left\{  #1\right\}  }%

\global\long\def\smlsetof#1#2{\{#1\mid#2\} }%

\global\long\def\setof#1#2{\left\{  #1\mid#2\right\}  }%


\global\long\def\goesto{\ensuremath{\rightarrow}}%

\global\long\def\ngoesto{\ensuremath{\nrightarrow}}%

\global\long\def\uto{\ensuremath{\uparrow}}%

\global\long\def\dto{\ensuremath{\downarrow}}%

\global\long\def\uuto{\ensuremath{\upuparrows}}%

\global\long\def\ddto{\ensuremath{\downdownarrows}}%

\global\long\def\ulrto{\ensuremath{\nearrow}}%

\global\long\def\dlrto{\ensuremath{\searrow}}%


\global\long\def\setmap{\ensuremath{\rightarrow}}%

\global\long\def\elmap{\ensuremath{\mapsto}}%

\global\long\def\compose{\ensuremath{\circ}}%

\global\long\def\cont{C}%

\global\long\def\cadlag{D}%

\global\long\def\Ellp#1{\ensuremath{\mathcal{L}^{#1}}}%


\global\long\def\naturals{\ensuremath{\mathbb{N}}}%

\global\long\def\reals{\mathbb{R}}%

\global\long\def\complex{\mathbb{C}}%

\global\long\def\rationals{\mathbb{Q}}%

\global\long\def\integers{\mathbb{Z}}%


\global\long\def\abs#1{\ensuremath{\left|#1\right|}}%

\global\long\def\smlabs#1{\ensuremath{\lvert#1\rvert}}%
 
\global\long\def\bigabs#1{\ensuremath{\bigl|#1\bigr|}}%
 
\global\long\def\Bigabs#1{\ensuremath{\Bigl|#1\Bigr|}}%
 
\global\long\def\biggabs#1{\ensuremath{\biggl|#1\biggr|}}%

\global\long\def\norm#1{\ensuremath{\left\Vert #1\right\Vert }}%

\global\long\def\smlnorm#1{\ensuremath{\lVert#1\rVert}}%
 
\global\long\def\bignorm#1{\ensuremath{\bigl\|#1\bigr\|}}%
 
\global\long\def\Bignorm#1{\ensuremath{\Bigl\|#1\Bigr\|}}%
 
\global\long\def\biggnorm#1{\ensuremath{\biggl\|#1\biggr\|}}%

\global\long\def\floor#1{\left\lfloor #1\right\rfloor }%
\global\long\def\smlfloor#1{\lfloor#1\rfloor}%

\global\long\def\ceil#1{\left\lceil #1\right\rceil }%
\global\long\def\smlceil#1{\lceil#1\rceil}%


\global\long\def\Union{\ensuremath{\bigcup}}%

\global\long\def\Intsect{\ensuremath{\bigcap}}%

\global\long\def\union{\ensuremath{\cup}}%

\global\long\def\intsect{\ensuremath{\cap}}%

\global\long\def\pset{\ensuremath{\mathcal{P}}}%

\global\long\def\clsr#1{\ensuremath{\overline{#1}}}%

\global\long\def\symd{\ensuremath{\Delta}}%

\global\long\def\intr{\operatorname{int}}%

\global\long\def\cprod{\otimes}%

\global\long\def\Cprod{\bigotimes}%


\global\long\def\smlinprd#1#2{\ensuremath{\langle#1,#2\rangle}}%

\global\long\def\inprd#1#2{\ensuremath{\left\langle #1,#2\right\rangle }}%

\global\long\def\orthog{\ensuremath{\perp}}%

\global\long\def\dirsum{\ensuremath{\oplus}}%


\global\long\def\spn{\operatorname{sp}}%

\global\long\def\rank{\operatorname{rk}}%

\global\long\def\proj{\operatorname{proj}}%

\global\long\def\tr{\operatorname{tr}}%

\global\long\def\vek{\operatorname{vec}}%

\global\long\def\diag{\operatorname{diag}}%

\global\long\def\col{\operatorname{col}}%


\global\long\def\smpl{\ensuremath{\Omega}}%

\global\long\def\elsmp{\ensuremath{\omega}}%

\global\long\def\sigf#1{\mathcal{#1}}%

\global\long\def\sigfield{\ensuremath{\mathcal{F}}}%
\global\long\def\sigfieldg{\ensuremath{\mathcal{G}}}%

\global\long\def\flt#1{\mathcal{#1}}%

\global\long\def\filt{\mathcal{F}}%
\global\long\def\filtg{\mathcal{G}}%

\global\long\def\Borel{\ensuremath{\mathcal{B}}}%

\global\long\def\cyl{\ensuremath{\mathcal{C}}}%

\global\long\def\nulls{\ensuremath{\mathcal{N}}}%

\global\long\def\gauss{\mathfrak{g}}%

\global\long\def\leb{\mathfrak{m}}%


\global\long\def\prob{\ensuremath{\mathbb{P}}}%

\global\long\def\Prob{\ensuremath{\mathbb{P}}}%

\global\long\def\Probs{\mathcal{P}}%

\global\long\def\PROBS{\mathcal{M}}%

\global\long\def\expect{\ensuremath{\mathbb{E}}}%

\global\long\def\Expect{\ensuremath{\mathbb{E}}}%

\global\long\def\probspc{\ensuremath{(\smpl,\filt,\Prob)}}%


\global\long\def\iid{\ensuremath{\textnormal{i.i.d.}}}%

\global\long\def\as{\ensuremath{\textnormal{a.s.}}}%

\global\long\def\asp{\ensuremath{\textnormal{a.s.p.}}}%

\global\long\def\io{\ensuremath{\ensuremath{\textnormal{i.o.}}}}%

\newcommand\independent{\protect\mathpalette{\protect\independenT}{\perp}}
\def\independenT#1#2{\mathrel{\rlap{$#1#2$}\mkern2mu{#1#2}}}

\global\long\def\indep{\independent}%

\global\long\def\distrib{\ensuremath{\sim}}%

\global\long\def\distiid{\ensuremath{\sim_{\iid}}}%

\global\long\def\asydist{\ensuremath{\overset{a}{\distrib}}}%

\global\long\def\inprob{\ensuremath{\overset{p}{\goesto}}}%

\global\long\def\inprobu#1{\ensuremath{\overset{#1}{\goesto}}}%

\global\long\def\inas{\ensuremath{\overset{\as}{\goesto}}}%

\global\long\def\eqas{=_{\as}}%

\global\long\def\inLp#1{\ensuremath{\overset{\Ellp{#1}}{\goesto}}}%

\global\long\def\indist{\ensuremath{\overset{d}{\goesto}}}%

\global\long\def\eqdist{=_{d}}%

\global\long\def\wkc{\ensuremath{\rightsquigarrow}}%

\global\long\def\wkcu#1{\overset{#1}{\ensuremath{\rightsquigarrow}}}%

\global\long\def\plim{\operatorname*{plim}}%


\global\long\def\var{\operatorname{var}}%

\global\long\def\lrvar{\operatorname{lrvar}}%

\global\long\def\cov{\operatorname{cov}}%

\global\long\def\corr{\operatorname{corr}}%

\global\long\def\bias{\operatorname{bias}}%

\global\long\def\MSE{\operatorname{MSE}}%

\global\long\def\med{\operatorname{med}}%

\global\long\def\avar{\operatorname{avar}}%

\global\long\def\se{\operatorname{se}}%

\global\long\def\sd{\operatorname{sd}}%


\global\long\def\nullhyp{H_{0}}%

\global\long\def\althyp{H_{1}}%

\global\long\def\ci{\mathcal{C}}%


\global\long\def\simple{\mathcal{R}}%

\global\long\def\sring{\mathcal{A}}%

\global\long\def\sproc{\mathcal{H}}%

\global\long\def\Wiener{\ensuremath{\mathbb{W}}}%

\global\long\def\sint{\bullet}%

\global\long\def\cv#1{\left\langle #1\right\rangle }%

\global\long\def\smlcv#1{\langle#1\rangle}%

\global\long\def\qv#1{\left[#1\right]}%

\global\long\def\smlqv#1{[#1]}%


\global\long\def\trans{\mathsf{T}}%

\global\long\def\indic{\ensuremath{\mathbf{1}}}%

\global\long\def\Lagr{\mathcal{L}}%

\global\long\def\grad{\nabla}%

\global\long\def\pmin{\ensuremath{\wedge}}%
\global\long\def\Pmin{\ensuremath{\bigwedge}}%

\global\long\def\pmax{\ensuremath{\vee}}%
\global\long\def\Pmax{\ensuremath{\bigvee}}%

\global\long\def\sgn{\operatorname{sgn}}%

\global\long\def\argmin{\operatorname*{argmin}}%

\global\long\def\argmax{\operatorname*{argmax}}%

\global\long\def\Rp{\operatorname{Re}}%

\global\long\def\Ip{\operatorname{Im}}%

\global\long\def\deriv{\ensuremath{\mathrm{d}}}%

\global\long\def\diffnspc{\ensuremath{\deriv}}%

\global\long\def\diff{\ensuremath{\,\deriv}}%

\global\long\def\i{\ensuremath{\mathrm{i}}}%

\global\long\def\e{\mathrm{e}}%

\global\long\def\sep{,\ }%

\global\long\def\defeq{\coloneqq}%

\global\long\def\eqdef{\eqqcolon}%

\global\long\def\vec#1{\boldsymbol{#1}}%

\global\long\def\jsr{{\scriptstyle \mathrm{JSR}}}%

\global\long\def\lpar{p}%

\global\long\def\lcof{r}%

\global\long\def\parset{{\cal P}}%

\global\long\def\ucc{L^{\mathrm{ucc}}}%

\global\long\def\lad{{\scriptscriptstyle \mathsf{L}}}%

\global\long\def\mle{{\scriptscriptstyle \mathsf{M}}}%

\global\long\def\S{\mathbb{S}}%

\global\long\def\W{\mathbb{W}}%

\global\long\def\U{\mathbb{U}}%

\global\long\def\M{\mathbb{M}}%

\global\long\def\N{\mathbb{N}}%

\global\long\def\L{\mathbb{L}}%

\global\long\def\Z{\mathcal{Z}}%

\title{Estimation of a Dynamic Tobit Model\\ with a Unit Root
\thanks{The authors would like to thank Yiting Fu for excellent research assistance.}}
\author{\authaffil{Anna Bykhovskaya}{Duke University}\hspace{3cm}
\authaffil{James A.\ Duffy}{University of Oxford}}

\date{\vspace*{0.3cm}\today}

\maketitle




\setcounter{footnote}{0}
\begin{abstract}
\noindent

This paper studies robust estimation in the dynamic Tobit model under local-to-unity (LUR) asymptotics. We show that both Gaussian maximum likelihood (ML) and censored least absolute deviations (CLAD) estimators are consistent, extending results from the stationary case where ordinary least squares (OLS) is inconsistent. The asymptotic distributions of MLE and CLAD are derived; for the short-run parameters they are shown to be Gaussian, yielding standard normal $t$-statistics. In contrast, although OLS remains consistent under LUR, its $t$-statistics are not standard normal. These results enable reliable model selection via sequential $t$-tests based on ML and CLAD, paralleling the linear autoregressive case. Applications to financial and epidemiological time series illustrate their practical relevance.

\medskip

\textbf{Keywords}: dynamic Tobit, unit root, MLE, CLAD.
\end{abstract}
\vfill

\thispagestyle{plain}

\pagenumbering{roman}

\newpage{}

\thispagestyle{plain}

\setcounter{tocdepth}{2}

\tableofcontents{}

\newpage{}

\pagenumbering{arabic}

\section{Introduction}
Many macroeconomic and financial time series are highly persistent,
often exhibiting the random wandering that is the hallmark of a stochastic
trend. While it is relatively straightforward to accommodate these
series within the framework of a linear (vector) autoregressive model,
even in this setting such persistence creates significant challenges
for inference, giving rise to limiting distributions that are non-standard
and dependent on nuisance parameters. Accordingly, a substantial literature has developed on inference methods that are broadly robust to temporal dependence, remaining valid for both stationary and stochastically trending data (see e.g.\ \citealp{Hansen99REStat};
\citealp{JM06Ecta}; \citealp{Mik07Ecta,mikusheva2012one}; \citealp{EMW15Ecta}).

Though this literature has now attained a high degree of maturity,
a significant limitation is that it is entirely devoted to linear
models. Although nonlinear autoregressive models have been widely
studied, this has almost exclusively been in the context of stationary
processes (as in e.g.\ \citealp{Tong90}; \citealp{Chan09}; \citealp{TTG10}).
Indeed, for a long time the only available dynamic models that permitted
some conjunction of nonlinearities with stochastic trends were those
in which the nonlinearities were confined to the short-run dynamics,
i.e.\ to first differences and equilibrium error components (as in
the nonlinear VECM models considered by \citealp{BF97IER}; \citealp{HS02JoE};
\citealp{KR10JoE}). This excluded, for example, nonlinear autoregressive
models with regimes endogenously dependent on the \emph{level} of
a stochastically trending series. Only recently has it
been shown possible to configure nonlinear (vector) autoregressive
models so as to accommodate both nonlinearity in levels and stochastic
trends (\citealp{cavaliere2005}; \citealp{BD22}; \citealp{DM24}; \citealp{DMW22}).

A major impetus for this recent work has come from the desire to adequately
model highly persistent series that are subject to occasionally binding
constraints, of which the zero lower bound on nominal interest rates
is a leading example (\citealp{Mav21Ecta}, \citealp{AMSV21JoE}).
For such series, a plausible descriptive model is the dynamic Tobit,
which in a stationary setting has a lengthy pedigree in the empirical
analysis of constrained, i.e.\ censored, time series (see e.g.\ \citealp{Demiralp_Jorda2002},
\citealp{stat_paper}, \citealp{DSK12AE}, \citealp{LMS19}, \citealp{BMMV21JBF},
\citealp{bykh_JBES}). \citet{BD22} recently determined conditions
under which this model may also generate stochastically trending,
$I(1)$-like series, and obtained the asymptotic distribution of OLS
estimators in a local-to-unity framework. However, while their results
are sufficient for the purposes of conducting unit root tests, there
is no possibility of extending the validity of OLS beyond their setting,
since even its consistency is known to fail for the stationary dynamic
Tobit.

The present work is thus motivated by the need to develop methods
of estimation and inference, in a \emph{nonlinear} time series setting,
that are robust to the degree of persistence of the data generating
mechanism. More specifically, we seek methods that enjoy validity
for the dynamic Tobit model on a wider domain that allows for both
stochastically trending and stationary data, exactly as OLS enjoys
in a linear autoregressive model. This leads us to consider two alternative
estimators that are known to be consistent and asymptotically normal
for the stationary dynamic Tobit: Gaussian maximum likelihood (ML)
and \citeauthor{Pow84JoE}'s \citeyearpar{Pow84JoE} censored least
absolute deviations (CLAD; see \citealp{stat_paper}, and \citealp{bykh_JBES}).

The principal contribution of this paper is to derive the asymptotic
distributions of the ML and CLAD estimators for a dynamic Tobit model,
in a local-to-unity framework. When the model is rendered in ADF form,
the limiting distribution of the estimator of the sum of the autoregressive
coefficients is non-standard, but takes a known form that is suitable
for inference. Moreover, estimates of `short run' coefficients (i.e.\ those
on lagged differences) have an asymptotically Gaussian distribution,
so that inferences on these and on certain related functionals, such
as short-horizon impulse responses, remain standard irrespective of
the degree of persistence of the regressor (exactly as is the case
for OLS estimation in a linear autoregressive model). These results
facilitate the determination of the model lag order on the basis of
sequential $t$-tests.

At a technical level, our results contribute to the literature on
the asymptotics of nonlinear extremum estimators with highly peristent
processes (see e.g.\ \citealp{PP00Ecta,PP01Ecta}; \citealp{PJH07JoE};
\citealp{Xiao09JoE}; \citealp{CW15JoE}). Relative to previous work,
the analysis is complicated by two aspects of our problem. Firstly,
we consider a nonlinear \emph{autoregressive} model, whereas the literature
has been concerned with regression (or discrete choice) models where
the regressor $x_{t}$ is a \emph{linear} unit root process, and the
object is to estimate some (possibly) nonlinear transformation of
$x_{t}$. Secondly, in our analysis of the CLAD estimator, we cannot
rely on the convexity of the criterion function, something which greatly
facilitates the derivation of the asymptotics of the \emph{uncensored}
LAD estimator, as in \citet{Poll91ET}, \citet{Herce96ET}, \citet{LL09ET}
and \citet{Xiao09JoE}. (By contrast, convexity may be fruitfully
exploited in the analysis of the Gaussian MLE.) Nor may approaches
employed in the i.i.d.\ or stationary cases \citep{Pow84JoE,stat_paper,bykh_JBES}
be readily transposed to our setting. Thus deriving the asymptotics
of the (centred) CLAD criterion function, in particular, requires
some delicate and novel arguments. We expect these arguments will
also prove useful for the analysis of other nonlinear extremum estimators,
in the presence of highly persistent data.

The remainder of this paper is organised as follows. Section~\ref{sec_model} introduces the dynamic Tobit model and our assumptions. Section~\ref{sec_estimation} derives the asymptotic distributions of ML and CLAD estimators. A discussion of the results, and
an empirical illustration, is presented in Section~\ref{sec_discussion}. Finally, Section~\ref{sec_conclusion} concludes. All proofs appear in the appendices.

\begin{notation*}

All limits are taken as $T\goesto\infty$ unless otherwise stated.
$\inprob$ and $\indist$ respectively denote convergence in probability
and in distribution (weak convergence). We write `$X_{T}(\lambda)\indist X(\lambda)$
on $D_{\reals^{m}}[0,1]$' to denote that $\{X_{T}\}$ converges
weakly to $X$, where these are considered as random elements of $D_{\reals^{m}}[0,1]$,
the space of cadlag functions $[0,1]\setmap\reals^{m}$, equipped
with the uniform topology; we denote this as $D[0,1]$ whenever the
value of $m$ is clear from the context. $\smlnorm{\cdot}$ denotes
the Euclidean norm on $\reals^{m}$, and the matrix norm that it induces.
For $X$ a random vector and $p\geq1$, $\smlnorm X_{p}\defeq(\expect\smlnorm X^{p})^{1/p}$.
\end{notation*}

\section{Model}\label{sec_model}
We suppose $\{y_t\}$ is generated by a dynamic Tobit model of order $k$, written in
augmented Dickey--Fuller (ADF) form as
\begin{equation}\label{eq:tobitark}
y_{t}=\left[\alpha_0+\beta_0 y_{t-1}+\vec{\phi}_0^{\trans}\Delta\vec y_{t-1}+u_{t}\right]_{+},\qquad t=1,\ldots T,
\end{equation}
where $[x]_+:=\max\{x,0\}$, $\vec y_{t-1}:=(y_{t-1},\ldots, y_{t-k+1})^{\trans}$, $\vec{\phi}_0=(\phi_{1,0},\ldots,\phi_{k-1,0})^{\trans}$ and $\Delta y_{t}:=y_t-y_{t-1}$. The model has two attractive features. Firstly, it is Markovian -- with the state vector defined by $k$ consecutive lags of $y_t$ -- making it well-suited for forecasting. Secondly, the presence of the positive part on the right of the equality enforces a lower bound of zero on $y_t$, making the model appropriate for series that are constrained to take non-negative values.

We assume that the parameters $\rho_0:=(\alpha_0,\beta_0,\vec{\phi}_0^{\trans})^{\trans}\in\mathbb{R}^{k+1}$, innovations $\{u_t\}$ and initial conditions satisfy the following assumptions. 

\assumpname{A1}
\begin{assumption}
\label{ass:INIT} $\{y_{t}\}$ is initialised by (possibly) random
variables $\{y_{-k+1},\ldots,y_{0}\}$. Moreover, $T^{-1/2}y_{0}\inprob b_{0}$
for some $b_{0}\geq0$.
\end{assumption}
\assumpname{A2}
\begin{assumption}
\label{ass:DGP} $\{y_{t}\}$ is generated according to (\ref{eq:tobitark}),
where:
\begin{enumerate}[label=\ass{\arabic*.}, ref=\ass{.\arabic*}]
\item \label{enu:DGP:ut}$\{u_{t}\}_{t\in\integers}$ is independently
and identically distributed with $\expect u_{t}=0$ and
$\expect u_{t}^{2}=\sigma_{0}^{2}$.
\item \label{enu:DGP:LU}$\alpha_0=\alpha_{T,0}\defeq T^{-1/2}a_{0}$ and
$\beta_0=\beta_{T,0}=1+T^{-1}c_{0}$, for some $a_{0},c_{0}\in\reals$.
\end{enumerate}
\end{assumption}

\assumpname{A3}
\begin{assumption}
\label{ass:MOM} There exist $\delta_{u}>0$ and $C<\infty$ such
that:
\begin{enumerate}[label=\ass{\arabic*.}, ref=\ass{.\arabic*}]
\item $\expect\smlabs{u_{t}}^{2+\delta_{u}}<C$.
\item $\expect\smlabs{T^{-1/2}y_{0}}^{2+\delta_{u}}<C$, and $\expect\smlabs{\Delta y_{i}}^{2+\delta_{u}}<C$
for $i\in\{-k+2,\ldots,0\}$.
\end{enumerate}
\end{assumption}

Assumption \ref{ass:INIT} allows the initialisation to be of the same order of magnitude as the process $y_t$. This is appropriate because the first observation in any sample is unlikely to be the `true' starting point of the process, but simply the point at which data collection begins. Therefore, it is natural to allow $y_0$ to be of the same scale as any later $y_t$. Assumptions \ref{ass:DGP}\ref{enu:DGP:ut} and \ref{ass:MOM} are standard conditions on the innovations. Assumption \ref{ass:DGP}\ref{enu:DGP:LU} ensures that the data is highly persistent, as a consequence of the autoregressive polynomial having a root local to unity.

As shown in \citet[Appendix C]{BD22}, owing to the nonlinearity of the model additional technical conditions -- which go beyond the classical requirements on the stability of the roots of the autoregressive polynomial -- are needed to rule out explosive behaviour of the first differences $\{\Delta y_t\}$. Here we ensure this through a constraint on the joint spectral radius of two auxiliary companion-form matrices. For $\delta\in[0,1]$, define
\begin{equation}
F_{\delta}\defeq\begin{bmatrix}\phi_{1,0}\delta & \phi_{2,0} & \cdots & \phi_{k-2,0} & \phi_{k-1,0}\\
\delta & 0 & \cdots & 0 & 0\\
0 & 1\\
 &  & \ddots\\
 &  &  & 1 & 0
\end{bmatrix}\label{eq:Fdef}
\end{equation}
and let $\lambda_{\jsr}({\cal A})$ denote the joint spectral radius
of a bounded collection of matrices ${\cal A}$, which may
be defined as
\[
\lambda_{\jsr}(\mathcal{A})\defeq\limsup_{n\goesto\infty}\sup_{M\in\mathcal{A}^{n}}\lambda(M)^{1/n}
\]
where $\lambda(M)$ denotes the spectral radius of $M$, and $\mathcal{A}^{n}\defeq\{\prod_{i=1}^{n}A_{i}\mid A_{i}\in\mathcal{A}\}$
(cf., \citealp[Defn.~1.1]{Jungers09}).

\assumpname{A4}
\begin{assumption}
\label{ass:JSR}$\lambda_{\jsr}(\{F_{0},F_{1}\})<1$.
\end{assumption}
%


Approximate upper bounds for the joint spectral radius (JSR) can be computed numerically to an arbitrarily high degree of accuracy using semidefinite programming \citep{PJ08LAA}, making it possible to verify whether the condition is satisfied for a given set of parameter values (see \citealp{DMW23stat}, for a further discussion). Assumption \ref{ass:JSR} ensures that the process $y_t$ does not exhibit explosive behavior by jointly controlling the dynamics across both the censored ($\delta=0$) and uncensored ($\delta=1$) `regimes'. Without this restriction the interaction between the two regimes could generate a `bouncing' effect, where hitting the lower bound triggers a strong rebound, potentially leading to exponential growth.

\citet[Theorem 3.2]{BD22} show that under Assumptions \ref{ass:INIT}-\ref{ass:JSR}
\begin{equation}\label{eq_LUR_limit}
\frac1{\sqrt{T}}y_{\lfloor \tau T\rfloor} \indist \phi(1)^{-1}e^{c_0\tau/\phi(1)}
   \left(K(\tau) +\sup_{\tau^{\prime}\leq \tau}\left[-K(\tau^{\prime})\right]_{+} \right)
   \eqdef Y(\tau),
\end{equation}
on $D[0,1]$, where
\begin{equation*}
K(\tau)\defeq\phi(1)b_0+a_0\int_0^{\tau}e^{-c_0 r / \phi(1)}\diff r  +\sigma_0\int_0^{\tau}e^{-c_0 r / \phi(1)} \diff W(r)
\end{equation*}
for $\phi(1)=1-\sum_{i=1}^{k-1}\phi_{i,0}>0$, and $W(\cdot)$ a standard Brownian motion. The limiting distribution of the estimators of $\alpha$ and $\beta$, though not of $\vec{\phi}$, will be shown below to depend on the process $Y(\cdot)$ -- a dependence similar to that of the OLS estimators on the limiting Brownian Motion (or Ornstein--Uhlenbeck process) in a linear autoregressive model with a root at (or local to) unity. In fact, the process $K(\cdot)$ -- which $Y(\cdot)$ is the constrained counterpart of -- corresponds exactly to the limit of a local-to-unity autoregressive process. So surprisingly, despite the Tobit model's added complexity due to censoring, a similar asymptotic structure emerges: the limiting distributions of our estimators will closely resemble those of a linear autoregression, albeit with a nonlinear limiting process $Y(\cdot)$.

\section{Estimation and inference}\label{sec_estimation}
This section introduces two estimation methods -- Gaussian maximum likelihood (ML) and censored least absolute deviations (CLAD) -- and establishes their asymptotic properties. We show that the estimators of the coefficients $\vec{\phi}_0$ on the stochastically bounded (though not stationary) lags $\Delta \vec {y}_{t-1}$ are asymptotically normal under both methods, permitting standard inferences to be drawn. As discussed further in Section~\ref{sec_comaprison_OLS}, this contrasts sharply with the behaviour of ordinary least squares estimators, which have some limiting bias (of order $T^{-1/2}$) due to the censoring, even in the presence of a local to unit root.

\subsection{Maximum likelihood}

To permit estimation by maximum likelihood, we need to make a parametric assumption regarding the distribution of the innovations $\{u_{t}\}$. A conventional choice -- which is particularly attractive
for computational reasons, because it yields a loglikelihood with a convex reparametrisation -- is for $u_{t}$ to be Gaussian, as per the following.

\assumpname{B}
\begin{assumption}
\label{ass:TOB} $u_{t}\sim N[0,\sigma_{0}^{2}]$, and $\delta_{u}>2$ in \ref{ass:MOM}.
\end{assumption}

Note that for $\{u_{t}\}$ itself, the maintained Gaussianity makes the condition on $\delta_{u}$ redundant; this condition is nonetheless still needed instead to ensure that the initial conditions also have a little more than four finite moments.

Under assumption \ref{ass:TOB}, when $\{y_{t}\}$ is generated according to (\ref{eq:tobitark}),
the conditional density of $y_{t}$ given $(y_{t-1},\ldots,y_{t-k})$,
evaluated at the values $y_{t-i}=\mathsf{y}_{t-i}$ for $i \in {0, \ldots, k}$, is
\begin{equation}
f_{(\alpha,\beta,\vec{\phi},\sigma)}(\mathsf{y}_{t}\mid\mathsf{y}_{t-1},\ldots,\mathsf{y}_{t-k})=\begin{cases}
\sigma^{-1}\varphi[\sigma^{-1}(\mathsf{y}_{t}-\alpha-\beta\mathsf{y}_{t-1}-\vec{\phi}^{\trans}\Delta\vec{\mathsf{y}}_{t-1})] & \text{if }\mathsf{y}_{t}>0,\\
1-\Phi[\sigma^{-1}(\alpha+\beta\mathsf{y}_{t-1}+\vec{\phi}^{\trans}\Delta\vec{\mathsf{y}}_{t-1})] & \text{if }\mathsf{y}_{t}=0;
\end{cases}\label{eq:conddens}
\end{equation}
where $\mathsf{y}_{t-1} = (\mathsf{y}_{t-1},\ldots,\mathsf{y}_{t-k+1})^{\trans}$, and $\Phi(x)$ and $\varphi(x)$ denote the standard normal cdf and pdf functions, i.e.\ $\varphi(x)\defeq\tfrac1{\sqrt{2\pi}}\e^{-x^{2}/2}$. We consider a conditional ML estimator that maximises the loglikelihood of the $\{y_{t}\}_{t=1}^{T}$ conditional on the initial values
$\{y_{i}\}_{i=-k+1}^{0}$, i.e.\ which maximises
\begin{equation*}
{\cal L}_{T}(\alpha,\beta,\vec{\phi},\sigma)\defeq\sum_{t=1}^{T}\log f_{(\alpha,\beta,\vec{\phi},\sigma)}(y_{t}\mid y_{t-1},\ldots,y_{t-k})
\end{equation*}
with respect to $\alpha,\beta,\vec{\phi},\sigma$. Define
\[
(\hat{\alpha}_{T}^{\mle},\hat{\beta}_{T}^{\mle},\hat{\vec{\phi}}{}_{T}^{\mle},\hat{\sigma}_{T}^{\mle})\in\argmax_{(\alpha,\beta,\vec{\phi},\sigma)\in\reals^{k+2}\times\reals_{+}}{\cal L}_{T}(\alpha,\beta,\vec{\phi},\sigma)
\]
to be a sequence of maximisers of ${\cal L}_{T}$. Let $\Omega\defeq\plim_{T\to\infty}\frac1{T}\sum_{t=1}^{T}\Delta\mathbf{y}_{t-1}\Delta\mathbf{y}_{t-1}^{\trans}$, which exists as a consequence of Lemma~B.4 in \citet{BD22}.

\begin{thm}\label{thm:tobitmle}
Suppose \ref{ass:INIT}--\ref{ass:JSR} and \ref{ass:TOB} hold. Then
\begin{equation}\label{eq_MLE_LUR}
\begin{bmatrix}T^{1/2}(\hat{\alpha}_{T}^{\mle}-\alpha_{0})\\
T(\hat{\beta}_{T}^{\mle}-\beta_{0})
\end{bmatrix}\xrightarrow[T\to\infty]{d}\sigma_0\begin{bmatrix}1 & \int_{0}^{1}Y(\tau)\diff\tau\\
\int_{0}^{1}Y(\tau)\diff\tau & \int_{0}^{1}Y^{2}(\tau)\diff\tau
\end{bmatrix}^{-1}\begin{bmatrix}W(1)\\
\int_{0}^{1}Y(\tau)\diff W(\tau)
\end{bmatrix}
\end{equation}
jointly with
\begin{equation}\label{eq_MLE_short_run}
T^{1/2}\begin{bmatrix}\hat{\vec{\phi}}{}_{T}^{\mle}-\vec{\phi}_{0}\\
(\hat{\sigma}_{T}^{\mle})^{2}-\sigma_{0}^{2}
\end{bmatrix}\xrightarrow[T\to\infty]{d}\begin{bmatrix}\sigma_{0}^{2}\Omega^{-1} & 0\\
0 & (2\sigma_{0}^{2})^{-1}
\end{bmatrix}^{1/2}\xi
\end{equation}
where $\xi\sim \mathcal{N}[0,I_{k}]$ is independent of $W$ (and therefore also $Y$).
\end{thm}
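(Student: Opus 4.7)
The overall strategy is a convex M-estimator argument in the spirit of \citet{Poll91ET}. Since the Gaussian Tobit loglikelihood is globally concave under Olsen's reparametrisation $(\alpha,\beta,\vec\phi,\sigma)\mapsto h\cdot(\alpha,\beta,\vec\phi,1)$ with $h:=1/\sigma$, and the Olsen map is a diffeomorphism with a deterministic Jacobian at $\theta_0$, it suffices to analyse the normalised criterion
\[
Q_T(\vec u):=\mathcal{L}_T(\theta_0+D_T^{-1}\vec u)-\mathcal{L}_T(\theta_0), \qquad D_T:=\diag(T^{1/2},\,T,\,T^{1/2}I_{k-1},\,T^{1/2}),
\]
and then translate back to the original parameters via the delta method. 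A pointwise second-order Taylor expansion gives $Q_T(\vec u) = \vec u^{\trans} S_T - \tfrac12\vec u^{\trans}\Gamma_T \vec u + o_p(1)$ for each $\vec u\in\reals^{k+2}$, where $S_T:=D_T^{-1}\nabla\mathcal{L}_T(\theta_0)$ and $\Gamma_T:=-D_T^{-1}\nabla^2\mathcal{L}_T(\theta_0)D_T^{-1}$; the remainder is controlled by a uniform bound on third derivatives of $\ell_t$ on compacts, readily available from the smoothness of the Gaussian and $\Phi$-based pieces of the density. Once $(S_T,\Gamma_T)\indist(S,\Gamma)$ is established jointly with $\Gamma$ a.s.\ positive definite, the convex argmax theorem delivers $D_T(\hat\theta_T-\theta_0)\indist\Gamma^{-1}S$, and both displays in the theorem follow by reading off the relevant blocks.

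Routine differentiation gives $\nabla\ell_t(\theta_0)=\sigma_0^{-2}\eta_t\vec x_t$ with $\vec x_t:=(1,y_{t-1},\Delta\vec y_{t-1}^{\trans})^{\trans}$ and $\eta_t:=d_t u_t-(1-d_t)\sigma_0\lambda(z_t^*)$, where $d_t:=\indic\{y_t>0\}$, $\lambda:=\varphi/\Phi$, and $z_t^*:=-\mu_t^*/\sigma_0$ for $\mu_t^*$ the true conditional mean; a one-line calculation using truncated-normal moments confirms that $\{\eta_t\}$ is a martingale difference sequence. The key observation that makes the analysis tractable in this LUR setting is that, by \eqref{eq_LUR_limit}, $T^{-1/2}y_{t-1}\indist Y(\cdot)$ with $Y(\tau)>0$ for Lebesgue-a.e.\ $\tau$; consequently $z_t^*\to-\infty$ at a.e.\ rescaled time, so $(1-d_t)\sigma_0\lambda(z_t^*)$ is negligible in every normalised sum entering $S_T$ and $\Gamma_T$, and to leading order the analysis reduces to that of an \emph{uncensored} Gaussian AR. Joint weak convergence of $(T^{-1/2}y_{[\tau T]},\,T^{-1/2}\sum_{t\le[\tau T]}u_t)\indist(Y,\sigma_0 W)$ on $D[0,1]^2$, together with a continuous-mapping argument giving $T^{-1}\sum y_{t-1}u_t\indist\sigma_0\int_0^1 Y\diff W$ and a martingale CLT for $T^{-1/2}\sum u_t\Delta\vec y_{t-1}$ and for the $\sigma^2$-score, then delivers block-diagonal limits: the $(\alpha,\beta)$ block of $\Gamma$ is $\sigma_0^{-2}\bigl[\begin{smallmatrix}1 & \int Y\\ \int Y & \int Y^2\end{smallmatrix}\bigr]$, the $\vec\phi$ block is $\sigma_0^{-2}\Omega$, and the $\sigma^2$ entry is the deterministic value required by the theorem. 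Cross terms vanish because both $T^{-3/2}\sum y_{t-1}\Delta\vec y_{t-1}^{\trans}$ and $T^{-1}\sum\Delta\vec y_{t-1}$ are $o_p(1)$ under the moment and JSR assumptions.

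The main technical obstacle is to establish \emph{unconditional} Gaussianity of the short-run estimator, i.e.\ independence of the Gaussian limit of $T^{-1/2}\sum u_t\Delta\vec y_{t-1}$ from the Brownian motion $W$, despite both being built from the same innovation sequence $\{u_t\}$. This calls for a stable martingale CLT whose key input is that the conditional covariance $\sigma_0^2\cdot T^{-1}\sum_{t=1}^T \Delta\vec y_{t-1}\Delta\vec y_{t-1}^{\trans}\inprob\sigma_0^2\Omega$ is \emph{deterministic} in the limit; this, together with a Lindeberg condition supplied by \ref{ass:TOB}, gives stable convergence of the short-run score to a Gaussian with deterministic limiting variance, hence to a limit independent of any functional of $W$. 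A secondary difficulty is the rigorous treatment of the censoring contribution near the boundary $\{Y=0\}$: although this set has Lebesgue measure zero a.s., one must uniformly control the total time the rescaled process $T^{-1/2}y_{t-1}$ spends within $\varepsilon$ of zero, which is handled via tightness of the rescaled process combined with the absence of atoms in the local time of the reflected diffusion $Y$ at the origin.
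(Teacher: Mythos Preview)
Your plan is essentially the paper's own argument: localise, Taylor-expand the loglikelihood, show joint convergence of score and Hessian with the censored contributions asymptotically negligible because $\{y_t=0\}$ occurs only $o_p(T)$ times, and then invoke Olsen's concave reparametrisation together with a convex argmax lemma. The identification of the key difficulty --- that the short-run Gaussian limit must be shown independent of $W$ --- and your proposed resolution via the deterministic limiting conditional variance $\sigma_0^2\Omega$ are exactly right.

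Two small technical points where the paper proceeds slightly differently. First, rather than bounding third derivatives to control the Taylor remainder, the paper shows directly that $\mathcal{H}_T(p)\indist\mathcal{H}$ \emph{uniformly on compacta} in the local parameter; this sidesteps having to track the growth of $\lambda''$ against the unbounded quantities $v_t$ that appear when $y_t=0$. Your third-derivative route is viable (since $\lambda''$ is bounded), but you would still need moment control of the form $\sum_t\indic\{y_t=0\}|v_t|^m=o_p(T)$, which is precisely what the paper establishes and uses. Second, for the independence of the $\vec\phi$- and $\sigma$-score limits from $W$, the paper does not invoke a stable CLT as such; it argues by a joint martingale CLT applied to $(\mathcal{U}_T(\lambda_{i-1},\lambda_i),\,T^{-1/2}\sum\Delta\vec y_{t-1}u_t,\,T^{-1/2}\sum(\vartheta_0^{-1}-\vartheta_0 u_t^2))$ for arbitrary partition points $\{\lambda_i\}$, showing that all conditional cross-covariances vanish. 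This is the same mechanism you describe, packaged more elementarily.
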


In contrast to the classical autoregressive setting, where the Gaussian MLE is numerically identical to OLS, and therefore has the same asymptotic distribution, Theorem~\ref{thm:tobitmle} and simulations in Section~\ref{sec_comaprison_OLS} show that censoring breaks this equivalence (see \citet[Theorem 3.4]{BD22} for the OLS limit theory). Interestingly, though, up to a change of the rescaled limit of $y_t$, the ML distributions in Theorem~\ref{thm:tobitmle}, match their linear analogues, see, e.g., \citet[(17.7.25), (17.7.27)]{Hamilton94}.

The asymptotic variance in \eqref{eq_MLE_short_run} is given by the inverse of the information matrix, i.e., the inverse of the negative Hessian (see Proposition~\ref{prop:mlescorehess}), matching the corresponding result in the stationary case \citep[Theorem 3]{stat_paper}. Unlike in the stationary case, where censoring binds much more frequently ($O(T)$ rather than $O(\sqrt{T})$ times), here we can obtain an explicit form for the Hessian, which is unaffected by censored observations apart from their impact on the asymptotic distribution of $y_t$ in \eqref{eq_MLE_LUR}.

\subsection{Censored least absolute deviations}

The censored least absolute deviations (CLAD) estimator of (\ref{eq:tobitark}), which we denote as $(\hat{\alpha}_{T}^{\lad}, \hat{\beta}_{T}^{\lad}, \hat{\vec{\phi}}{}_T^{\lad})$,  minimises
\begin{equation}\label{eq_LAD_min}
  S_T(\alpha,\beta, \vec{\phi}) \defeq \sum_{t=1}^T  \left| y_t-
 \left[\alpha+\beta y_{t-1}+\vec{\phi}^{\trans}\Delta\vec y_{t-1}\right]_+\right|
\end{equation}
as per \citet{Pow84JoE}.
The presence of the positive part -- reflecting Tobit-type censoring as in \eqref{eq:tobitark} -- within the objective function \eqref{eq_LAD_min} renders the criton function non-convex, making the analysis of the CLAD estimator rather more challenging than that of its uncensored counterpart. A further complication arises from the high persistence in $\{y_t\}$, which prevents the application of maximal inequalities or uniform central limit theorems appropriate to stationary processes.

\assumpname{C}
\begin{assumption}
\label{ass:LAD}~
\begin{enumerate}[label=\ass{\arabic*.}, ref=\ass{.\arabic*}]
\item $u_{t}$ is continuously distributed, with density $f_{u}$ that is positive at zero, bounded,
and continuously differentiable with bounded derivatives, $\med(u_{t})=0$
and $\delta_{u}>2$ in \ref{ass:MOM}; and
\item $(\alpha_0,\beta_0,\vec{\phi}^{\trans}_0)^{\trans}\in\Pi$, for some compact
$\Pi\subset\reals^{k+2}$, and $(0,1,\vec{\phi}_{0}^{\trans})^{\trans}\in\intr\Pi$.
\end{enumerate}
\end{assumption}
We expect that our requirement that $u_t$ have a little more than four finite moments ($\delta_{u}>2$) could be relaxed to two finite moments. Obtaining the limiting distribution in the latter case would necessitate a sharper bound on the order of
\begin{equation*}
\sum_{t=1}^{T}\indic\{\alpha_{0}+\beta_{0}y_{t-1}+\vec{\phi}_{0}^{\trans}\Delta\vec y_{t-1}\leq0\},
\end{equation*}
which is closely related to the rate at which censored observations occur. The arguments used to prove the following theorem rely on an estimate of $o_p(T)$ for such terms, but this could likely be improved to  $O_p(T^{1/2})$.

\begin{thm}\label{thm:tobitLAD}
Suppose that Assumptions \ref{ass:INIT}-\ref{ass:JSR} and \ref{ass:LAD} hold. Then
\begin{equation}\label{LAD_limits_LUR1}
\begin{bmatrix}T^{1/2}(\hat{\alpha}_{T}^{\lad}-\alpha_{0})\\
T(\hat{\beta}_{T}^{\lad}-\beta_{0})
\end{bmatrix}
\xrightarrow[T\to\infty]{d}  \frac1{2f_u(0)}\begin{bmatrix}1 & \int_{0}^{1}Y(\tau)\diff\tau\\
\int_{0}^{1}Y(\tau)\diff\tau & \int_{0}^{1}Y^{2}(\tau)\diff\tau
\end{bmatrix}^{-1}\begin{bmatrix}\widetilde{W}(1)\\
\int_0^1 Y(\tau)d\widetilde{W}(\tau)
\end{bmatrix}
\end{equation}
jointly with, and independently of
\begin{equation}\label{LAD_limits_LUR2}
\sqrt{T}(\hat{\vec{\phi}}{}_T^{\lad}-\vec{\phi}_0)
\xrightarrow[T\to\infty]{d}  \frac1{2f_u(0)}\mathcal{N}(0,\Omega^{-1}),
\end{equation}
where $\Omega=\plim_{T\to\infty}\frac1{T}\sum_{t=1}^{T}\Delta\mathbf{y}_{t-1}\Delta\mathbf{y}_{t-1}^{\trans}$ and $\widetilde{W}(\cdot)$ is a $1$-dimensional Brownian motion, such that the covariance matrix of $(\sigma_0 W(\cdot),\widetilde{W}(\cdot))^{\trans}$ is
$$\Sigma=\begin{pmatrix}
          \sigma_0^2 & \mathbb{E}|u_t|  \\
          \mathbb{E}|u_t| & 1
        \end{pmatrix}.$$
\end{thm}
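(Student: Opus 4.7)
The plan is to localise the criterion via the reparametrisation $\theta = \theta_0 + D_T^{-1}\eta$ with $D_T\defeq\diag(T^{1/2},T,T^{1/2}I_{k-1})$, and to analyse
\begin{equation*}
Q_T(\eta)\defeq S_T(\theta_0+D_T^{-1}\eta)-S_T(\theta_0),\qquad\eta=(a,b,\vec{f}^{\trans})^{\trans}.
\end{equation*}
Consistency $\hat{\theta}_T^{\lad}\inprob\theta_0$ is first established by a standard argument using the compactness in \ref{ass:LAD} and the population identification that $v\mapsto\expect|u_t-v|$ is uniquely minimised at $v=0$. Given consistency, the conclusions \eqref{LAD_limits_LUR1}--\eqref{LAD_limits_LUR2} will follow from the argmin continuous mapping theorem, once I establish (i)~an a priori tightness rate $D_T(\hat{\theta}_T^{\lad}-\theta_0)=O_p(1)$; and (ii)~weak convergence $Q_T\indist Q$ uniformly on compact subsets of $\reals^{k+1}$, with $Q$ almost surely strictly convex. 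Tightness in (i) will be obtained by bounding $Q_T$ from below, outside a large ball in $\eta$, by the convex \emph{uncensored} LAD loss restricted to the bulk set of times at which both $y_t>0$ and $m_t(\theta)>0$.

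\textbf{Knight decomposition and joint limits.}
Let $m_t(\theta)\defeq\alpha+\beta y_{t-1}+\vec{\phi}^{\trans}\Delta\vec{y}_{t-1}$, $m_t^*\defeq m_t(\theta_0)$, $x_t\defeq(1,y_{t-1},\Delta\vec{y}_{t-1}^{\trans})^{\trans}$ and $v_t\defeq m_t(\theta)-m_t^*=x_t^{\trans}D_T^{-1}\eta$. On $\{m_t^*>0\}\cap\{m_t(\theta)>0\}$, where $y_t=m_t^*+u_t$ and the problem is locally uncensored, Knight's identity
\begin{equation*}
|u_t-v_t|-|u_t|=-v_t\sgn(u_t)+2\int_0^{v_t}[\indic\{u_t\leq s\}-\indic\{u_t\leq 0\}]\diff s
\end{equation*}
decomposes each summand exactly into a \emph{score} linear in $\eta$, a \emph{Hessian} piece quadratic in $\eta$, and a \emph{remainder} gathering the contributions from cases where the signs of $m_t(\theta)$ and $m_t^*$ disagree or both are non-positive. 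The score aggregates to $-\eta^{\trans}D_T^{-1}\sum_{t}\sgn(u_t)x_t\indic\{m_t^*>0\}$: its $(\alpha,\beta)$-block converges, via the bivariate functional central limit theorem $(T^{-1/2}\sum_{s\leq\lfloor\tau T\rfloor}u_s,\,T^{-1/2}\sum_{s\leq\lfloor\tau T\rfloor}\sgn(u_s))\wkc(\sigma_0 W(\tau),\widetilde{W}(\tau))$ -- whose covariance is $\Sigma$ because $\cov(u_t,\sgn(u_t))=\expect|u_t|$ -- combined with the continuous mapping theorem applied to \eqref{eq_LUR_limit}, to $-(a,b)\cdot(\widetilde{W}(1),\int_0^1 Y(\tau)\diff\widetilde{W}(\tau))^{\trans}$; the $\vec{f}$-block converges by a standard martingale CLT to an $\mathcal{N}(0,\Omega)$ variable, independent of $(W,\widetilde{W})$ owing to the mismatched rescaling rates (the relevant cross-block sums being $o_p(1)$). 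The Hessian piece, using $\expect[\indic\{u_t\leq s\}-\indic\{u_t\leq 0\}]=f_u(0)s+o(s)$ together with a LLN argument (the variance of each summand being $O(v_t^3)$), converges to $f_u(0)\eta^{\trans}M\eta$, where $M$ is block-diagonal with $(\alpha,\beta)$-block $\int_0^1 Y^\star(\tau)Y^\star(\tau)^{\trans}\diff\tau$ for $Y^\star(\tau)\defeq(1,Y(\tau))^{\trans}$, and $\vec{f}$-block $\Omega$. Minimising the limit $Q(\eta)$ in $\eta$ then yields \eqref{LAD_limits_LUR1}--\eqref{LAD_limits_LUR2} with the prefactor $(2f_u(0))^{-1}$.

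\textbf{Main technical obstacles.}
The chief difficulty, flagged in the preamble to the theorem, is the non-convexity of $S_T$: the pointwise weak convergence of $Q_T$ must be upgraded to uniform convergence over compacta in $\eta$ by a direct chaining argument, with the moment condition $\delta_u>2$ in \ref{ass:LAD} providing the required integrability (convexity arguments à la \citet{Poll91ET} being unavailable). The hardest single step is bounding the boundary remainder of the Knight decomposition uniformly in $\eta$. The events $\{m_t(\theta)\leq 0<m_t^*\}\cup\{m_t^*\leq 0<m_t(\theta)\}$ arise only when $|m_t^*|\lesssim|v_t|=O_p(T^{-1/2})$, so proving an $o_p(1)$ bound on their cumulative contribution requires an occupation-time estimate of the form $\#\{t\leq T:|m_t^*|\leq\epsilon\sqrt{T}\}=O_p(\epsilon T+T^{1/2})$ for small $\epsilon$, obtained from the occupation density of the limit process $Y(\cdot)$ together with a uniform tail bound for excursions of $\{y_t\}$ away from zero. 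The same estimate also controls the number of censored observations, supplying the $o_p(T)$ rate alluded to immediately before the theorem statement.
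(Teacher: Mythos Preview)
Your decomposition and identification of the score and Hessian limits are essentially the same as the paper's (which uses the Herce variant of Knight's identity), and your diagnosis of the non-convexity obstacle is accurate. The genuine gap is in the consistency and tightness steps. The ``standard argument using compactness and population identification'' does not apply here: the process is nonstationary, so the criterion does not converge uniformly to a deterministic limit with a well-separated minimum; the appropriate centring is the \emph{predictable} process $\bar{S}_T(\varpi)=T^{-1}\sum_t\expect_{t-1}\lvert y_t-[x_{t-1}+\varpi^{\trans}\mathcal{Z}_{t-1,T}]_+\rvert$, whose behaviour near $\varpi=0$ is random (governed by $Y$). Relatedly, your tightness plan --- lower bounding $Q_T$ by the convex uncensored LAD loss on the bulk $\{m_t^\ast>0\}\cap\{m_t(\theta)>0\}$ --- does not obviously go through, because the bulk set depends on $\theta$ and the contribution from its complement to $S_T(\theta)-S_T(\theta_0)$ need not be non-negative; there is no global convexity to exploit.

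The paper closes this gap in two stages. First, a direct argument (Proposition~\ref{prop:consprelim}) establishes the preliminary rate $T^{1/2}(\hat{\beta}_T-\beta_{T,0})=O_p(1)$ by showing that whenever $\lvert\beta-\beta_{T,0}\rvert\geq T^{-1/2}\kappa_T$ for any $\kappa_T\to\infty$, the criterion exceeds $\tfrac{1}{2}\expect\lvert u_1\rvert>0$; this effectively compactifies the problem in the intermediate parametrisation $\varpi=(\alpha-\alpha_{T,0},\,T^{1/2}(\beta-\beta_{T,0}),\,\vec{\phi}-\vec{\phi}_0)$. Second, the full rate $T^{1/2}\hat{\varpi}_T=O_p(1)$ is obtained by a peeling argument in the style of Theorem~3.2.5 of \citet{VVW96}, combining (a) a quadratic minorant $\bar{S}_T(\varpi)-\bar{S}_T(0)\geq\eta\lVert\varpi\rVert^2$ valid with high probability on a fixed neighbourhood (Proposition~\ref{prop:centring}\ref{enu:centrerate}) with (b) a uniform maximal inequality for the recentred process, proved via a stopping-time localisation of $\sum_t\lVert z_{t-1,T}\rVert^2$ and Burkholder's inequality (Proposition~\ref{prop:mg}\ref{enu:serate}). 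Your chaining idea is the right tool for (b), but you need the predictable-process framework and the two-stage rate argument to make it bite. A minor point: the occupation-time estimate you propose, $O_p(\epsilon T+T^{1/2})$, is sharper than what the paper proves or needs; under $\delta_u>2$ the cruder $o_p(T)$ bound suffices, because Cauchy--Schwarz against $\sum_t\lVert z_{t-1,T}\rVert^4=O_p(T^{-1})$ already controls the remainder.
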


Comparing our asymptotic results with those for (uncensored) LAD in the linear regression setting of \citet[Theorem~1]{Herce96ET}, we see that the structure of our limting distributions closely mirrors its linear counterpart, with the only difference being the limiting processes for $\{y_t\}$ (upon rescaling) that appears in the two distributional limits. This resemblance is particularly surprising given that the observations with $y_t=0$ cannot be ignored; rather, they play a crucial role in the derivation of Theorem~\ref{thm:tobitLAD}.

There, the key step involves deriving the first- and second-order terms in the expansion of the CLAD objective function evaluated at an arbitrary parameter vector relative to its value at the true parameters. The first-order term converges to a quadratic form in the parameters, while the second-order term converges to a linear function. This structure enables us to characterize the solution to the limiting optimization problem.

It is also worth comparing the results of Theorem~\ref{thm:tobitLAD} with their stationary counterparts \citet[Theorem 5]{stat_paper} and \citet[Theorem 6]{bykh_JBES}. Both share the same factor $\frac1{2f_u(0)}$, where the density at zero arises from a Taylor expansion and governs the behavior when $y_{t-1}$ is close to zero. Moreover, the asymptotics for the short-run coefficients $\hat{\vec{\phi}}^{\lad}$ match their stationary analogue, since the number of observations with $y_t=0$ is small enough that $\frac1{T}\sum_t\Delta\mathbf{y}_{t-1}\Delta\mathbf{y}_{t-1}^{\trans}$ and $\frac1{T}\sum_t\indic\{y_{t}>0\}\Delta\mathbf{y}_{t-1}\Delta\mathbf{y}_{t-1}^{\trans}$ are asymptotically identical.


\section{Discussion}\label{sec_discussion}

\subsection{Pros and cons of CLAD and MLE}

When $u_t$ follows a standard normal distribution, both Theorems~\ref{thm:tobitmle} and \ref{thm:tobitLAD} are applicable, allowing us to compare the variances of the short-run coefficients in MLE and CLAD. In this case, $\tfrac{1}{2f_u(0)} = \sigma_0\sqrt{\pi/2} \approx 1.25\sigma_0$, implying roughly a $25\%$ increase in the standard deviation of CLAD relative to the (optimal) ML estimator. The increased asymptotic variance of all CLAD coefficients compared to MLE is illustrated in Figure~\ref{fig:LADvsMLE_gaussian}. We observe that, because the asymptotic distributions of both CLAD and MLE estimators of $\alpha$ and $\beta$ involve the non-standard limiting process $Y(\cdot)$, they are skewed and centered at positive values for $\alpha$ and negative values for $\beta$, with the MLE exhibiting tighter concentration than CLAD. In contrast, the short-run estimators are all centered at zero and approximately normal, though again with CLAD displaying higher variance than MLE.

\begin{figure}[t]
\begin{subfigure}{.3\textwidth}
\centering \includegraphics[width=1\linewidth]{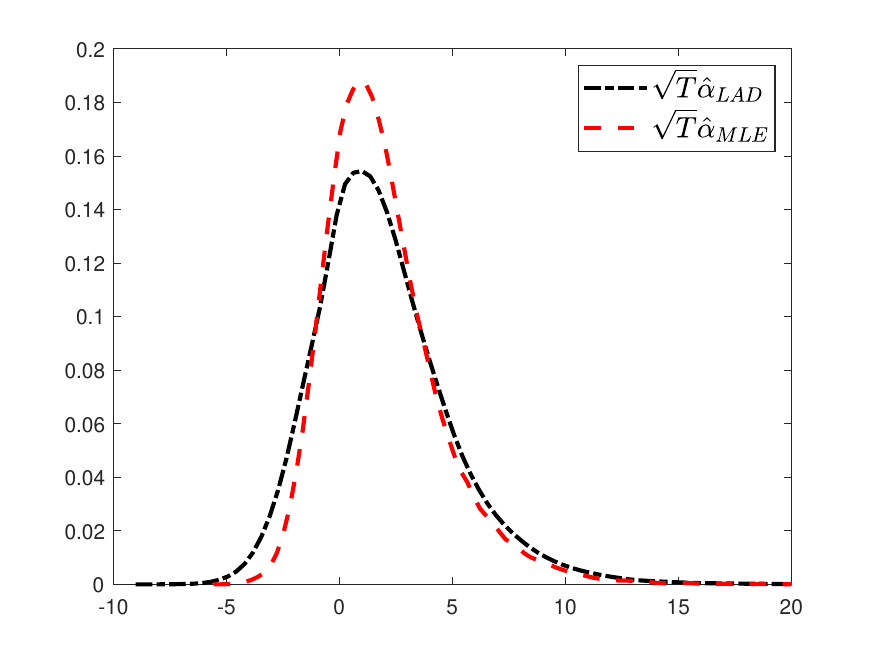}
\caption{$\sqrt{T}\hat{\alpha}$.}
\end{subfigure}
\begin{subfigure}{.3\textwidth}
\centering \includegraphics[width=1\linewidth]{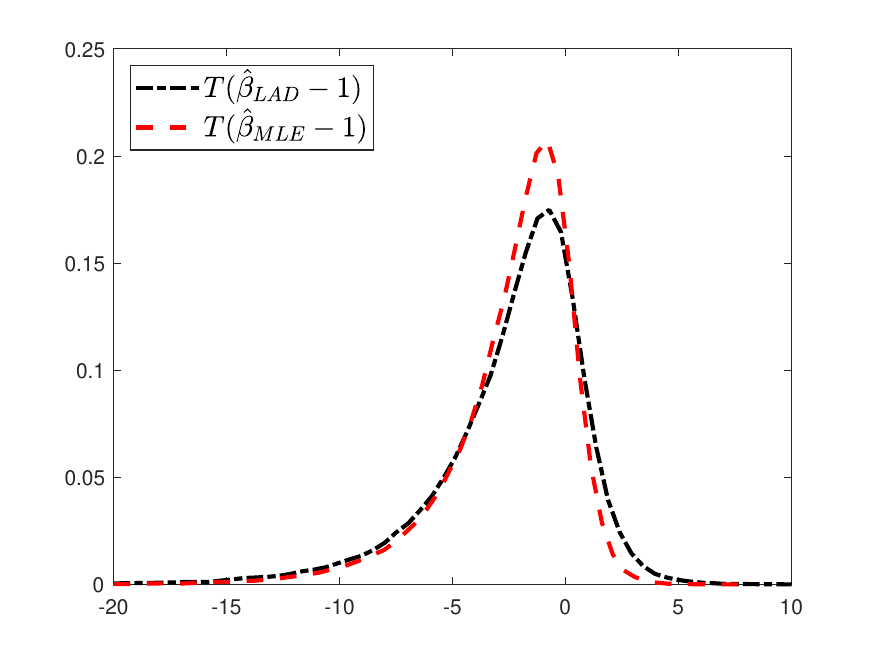}
\caption{$T(\hat{\beta}-1)$.}
\end{subfigure}
\begin{subfigure}{.3\textwidth}
\centering \includegraphics[width=1\linewidth]{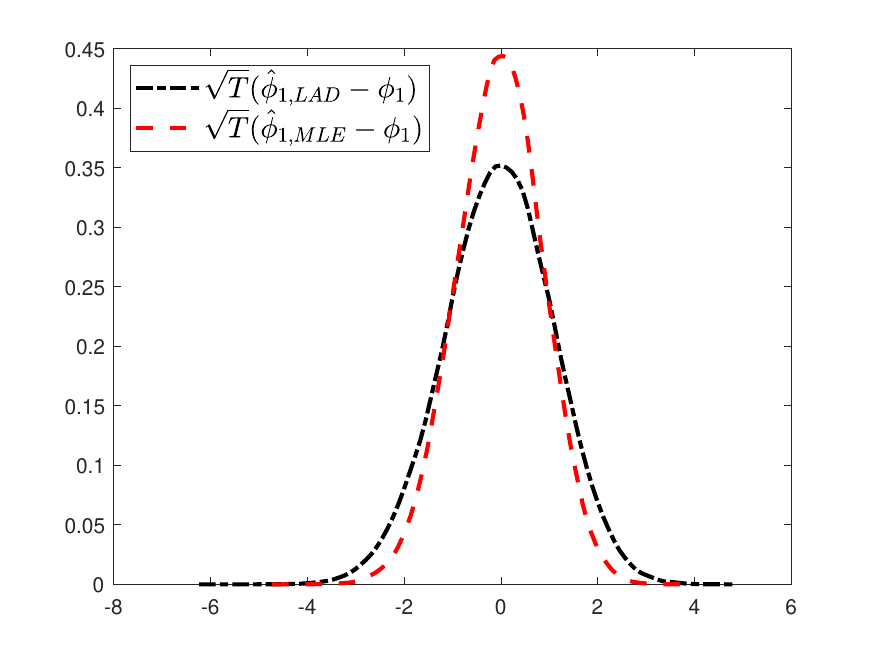}
\caption{$\sqrt{T}(\hat{\phi}_1-\phi_1)$.}
\end{subfigure}
\caption{Asymptotic distributions based on MLE and CLAD. Data generating process: $y_{t}=\left[y_{t-1}+0.5\Delta y_{t-1}+u_{t}\right]_{+}$,
$y_{0}=y_{-1}=0$, ${u_{t}\thicksim\text{i.i.d.}~\mathcal{N}(0,1)}$, $T=1000$, $MC=100,000$ replications.}
\label{fig:LADvsMLE_gaussian}
\end{figure}

\begin{figure}[t]
\begin{subfigure}{.3\textwidth}
\centering \includegraphics[width=1\linewidth]{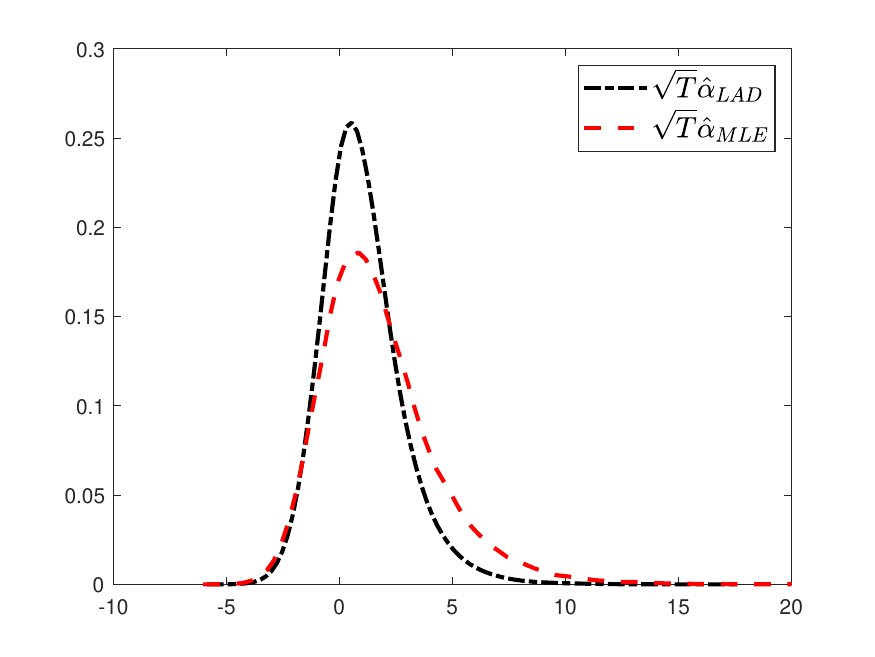}
\caption{$\sqrt{T}\hat{\alpha}$.}
\end{subfigure}
\begin{subfigure}{.3\textwidth}
\centering \includegraphics[width=1\linewidth]{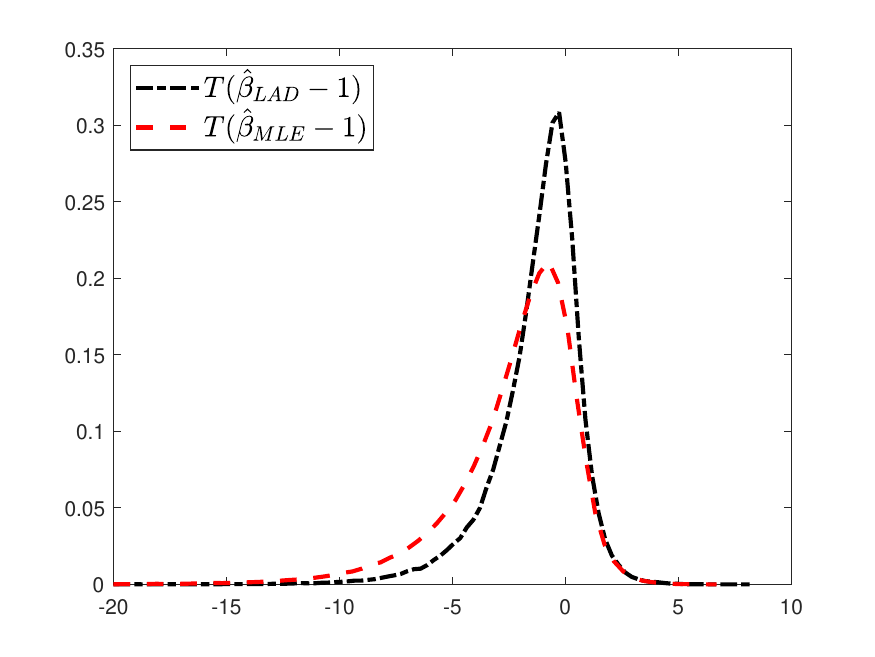}
\caption{$T(\hat{\beta}-1)$.}
\end{subfigure}
\begin{subfigure}{.3\textwidth}
\centering \includegraphics[width=1\linewidth]{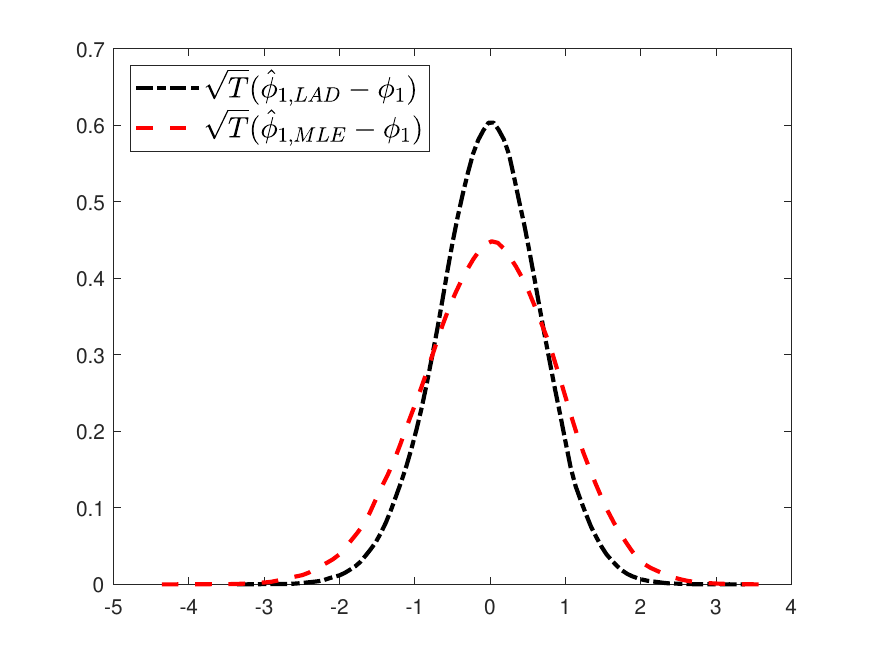}
\caption{$\sqrt{T}(\hat{\phi}_1-\phi_1)$.}
\end{subfigure}
\caption{Asymptotic distributions based on MLE and CLAD when errors follow Laplace distribution. Data generating process: $y_{t}=\left[y_{t-1}+0.5\Delta y_{t-1}+u_{t}\right]_{+}$,
$y_{0}=y_{-1}=0$, ${u_{t}\thicksim\text{i.i.d.}~Laplace(0,1/\sqrt{2})}$, $T=1000$, $MC=100,000$ replications.}
\label{fig:LADvsMLE_Laplace}
\end{figure}

On the other hand, relative to the (Gaussian) MLE, a major advantage of the CLAD estimator is that it is semiparametric, insofar as it does not require a parametric assumption on the distribution of the errors $u_t$. As illustrated in Section~\ref{section_covid} and previously in \citet{stat_paper,bykh_JBES}, real-world data can deviate substantially from Gaussianity, in which case the CLAD estimator is likely to yield more reliable inferences.

Our simulations suggest that the asymptotic behavior of MLE changes little when the errors are non-Gaussian, so long as they have sufficient moments, consistent with the well-attested good performance of quasi-MLEs. However, depending on the value of $f_u(0)$, CLAD may now be more efficient than the quasi-MLE. For example, if $u_t$ follows a Laplace$(0,1/\sqrt{2})$ distribution, so that $\mathbb{E}u_t=0$ and $\mathbb{E} u_t^2=1$, we have $f_u(0) = 1/\sqrt{2}$ and $1/2f_u(0)= 1/\sqrt{2} \approx 0.7 < 1$. In this case, the MLE standard deviation for the short-run coefficients is almost $1.5$ times larger than that of CLAD.
This is illustrated in Figure~\ref{fig:LADvsMLE_Laplace}, where the behaviour of the MLE estimator is very similar to that observed for Gaussian errors in Figure~\ref{fig:LADvsMLE_gaussian}, while the CLAD estimator is now much more tightly distributed, reflecting the higher value of the density at zero ($1/\sqrt{2}$ for the Laplace distribution versus $1/\sqrt{2\pi}$ for the standard normal). The same pattern is observed when the errors follow a $t$-distribution with $\nu$ degrees of freedom, where $\nu \in (2, 4.6]$ to ensure that variance is well-defined and $2f_u(0) > 1$.

The main drawbacks of the CLAD estimator come from two aspects. First, the CLAD optimisation problem is non-convex, potentially causing numerical optimisers to converge local rather than global optima, and thus to be sensitive to their initialisation. The Gaussian loglikelihood, by contrast, may be reparameterised so as to render the objective function concave, thereby avoiding this problem.\footnote{As per \citet{Ols78Ecta}, the new parameters are $\alpha/\sigma,\, \beta/\sigma,\,\vec{\phi}/\sigma$, and $\sigma^{-1}$.} (This may in turn be used to provide reasonable starting values for the optimisation of the CLAD criterion function.) The second drawback stems from the necessity of estimating $f_u(0)$ for the purposes of inference (i.e.\ to construct confidence intervals or conduct hypothesis tests). We tackle this by means of kernel density estimation, as proposed in \citet[(5.2) and (5.5)]{Pow84JoE}, though this has the undesirable consequence of rendering inferences dependent on the choice of a bandwidth parameter.


\subsection{Comparison with OLS}\label{sec_comaprison_OLS}

It is noteworthy that all three estimators share a common structure in their asymptotic distributions: the inverse of
$$\begin{bmatrix}1 & \int_{0}^{1}Y(\tau)\diff\tau\\
\int_{0}^{1}Y(\tau)\diff\tau & \int_{0}^{1}Y^{2}(\tau)\diff\tau
\end{bmatrix}$$
appears in the asymptotics of the the estimators of $(\alpha,\beta)$, while the limiting variance of the estimators of $\vec{\phi}$ depends on the inverse of $\Omega$. This structure mirrors the limit of the properly rescaled OLS regressor signal matrix $X^{\trans}X$, for the regressors $(1,y_{t-1},\Delta \vec{y}_{t-1}^\trans)^\trans$. The similarity arises because, up to proportionality, the first-order Taylor expansions of the estimators coincide; the differences emerge in the behavior of fluctuations, that is, in the second-order terms.


There are two significant advantages to using MLE or CLAD relative to OLS. The first advantage is that the former two are consistent both under stationary and (near) unit root regimes, see e.g., \citep{stat_paper,bykh_JBES}, while OLS is consistent only under the (near) unit root regime (see, e.g., \citet[Supplementary material]{bykh_JBES} for inconsistency results and \citep{BD22} for consistency results). Since an empirical researcher is typically a priori uncertain about the level of persistence in the data, using MLE or LAD permits valid inferences to be drawn more robustly.

The second advantage is related to the estimation of the `short-run' parameters $\phi_i$, for $i \in \{1,\ldots,k-1\}$, i.e.\ the coefficients on the lagged differences $\Delta y_{t-i}$. As is shown in Theorems~\ref{thm:tobitmle} and \ref{thm:tobitLAD}, both the MLE and CLAD estimators of theses are asymptotically normal, leading to standard normal t-statistics. By contrast, as illustrated in Figure~\ref{fig:phi1}, the OLS asymptotic distribution of $\hat{\phi}_{1,OLS}$ and its corresponding $t$-statistic (solid blue curves) are not centered at zero, as a consequence of the censoring. In particular, the mean of the OLS $t$-statistic in Figure~\ref{fig:phi1} is approximately $-0.4$. This negative mean represents the non-degenerate limit of $\frac1{\sqrt{T}}\sum_t y_t^{-}\Delta y_{t-1}$, where $y_t^{-}=\min\{0,\alpha_0+\beta_0 y_{t-1}+\vec{\phi}_0^{\trans}\Delta\vec y_{t-1}+u_{t}\}$. The intuition is that the positive-part binds $O(\sqrt{T})$ times, so there are $O(\sqrt{T})$ nonzero $y_t^{-}$, which, when scaled by $\tfrac{1}{\sqrt{T}}$, generate a negative bias. In contrast, the simulated distributions of the $t$-statistics for the ML (red dashed curve) and CLAD (black dash-dotted curve) estimators align closely with the standard normal density.

\begin{figure}[t]
\begin{subfigure}{.49\textwidth}
\centering \includegraphics[width=1\linewidth]{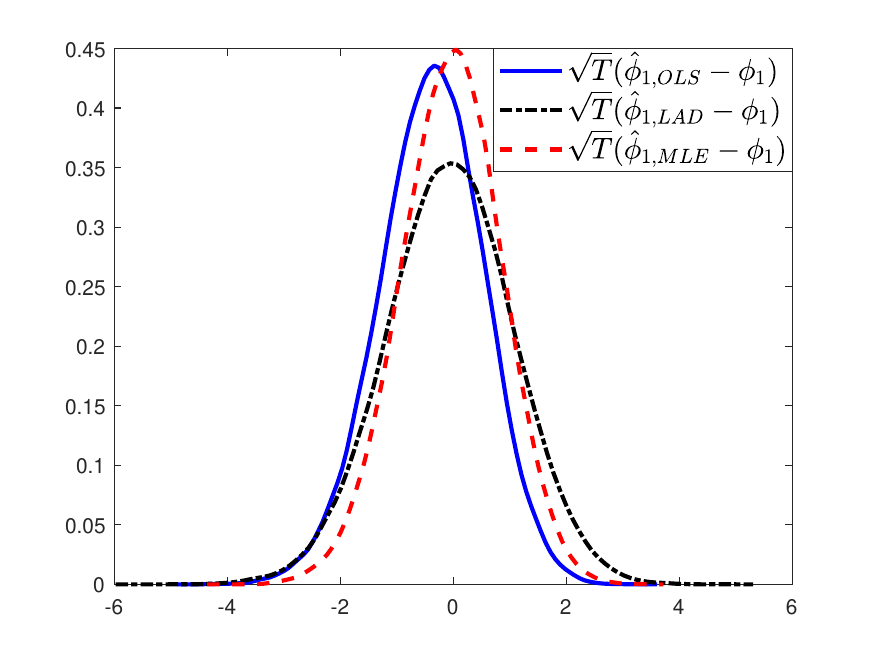}
\caption{$\sqrt{T}(\hat{\phi}_1-\phi_1)$.}
\end{subfigure}
\begin{subfigure}{.49\textwidth}
\centering \includegraphics[width=1\linewidth]{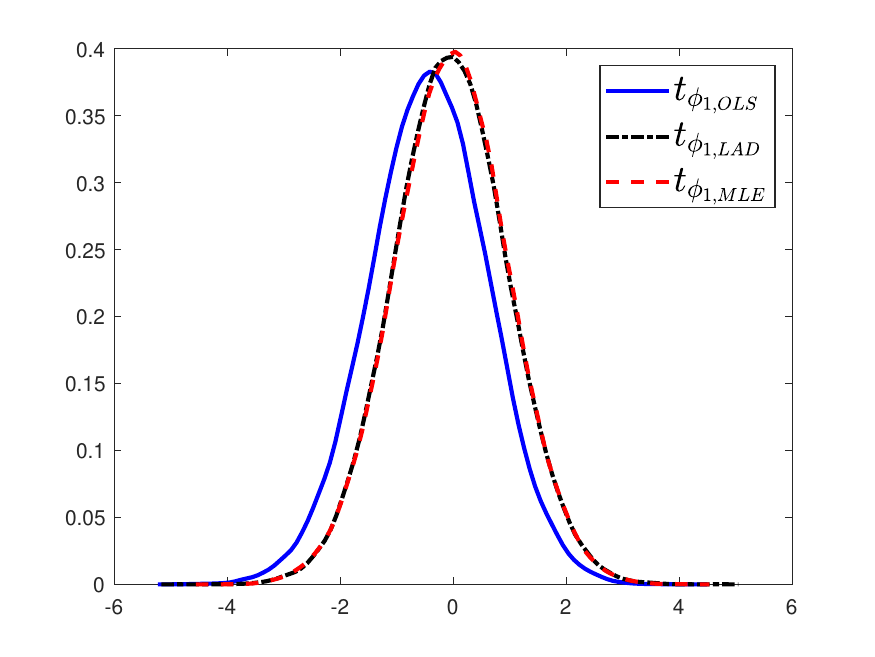}
\caption{$t$-statistic for $\phi_1$.}
\end{subfigure}
\caption{Asymptotic distributions of $\sqrt{T}(\hat{\phi}_1-\phi_1)$ and of $t$-statistic for $\phi_1$ based on three estimation procedures. Data generating process: $y_{t}=\left[y_{t-1}+0.5\Delta y_{t-1}+u_{t}\right]_{+}$,
$y_{0}=y_{-1}=0$, ${u_{t}\thicksim\text{i.i.d.}~\mathcal{N}(0,1)}$, $T=1000$, $MC=100,000$ replications.}
\label{fig:phi1}
\end{figure}

\subsection{Model selection and illustrations}

The fact that the $t$-statistics for $\phi_i$ are asymptotically standard normal, for both MLE and CLAD, can be used for the purposes of model selection. In particular, one can determine the appropriate lag order $k$ for the model via a sequential test of the null $H_0 : \phi_{k-1} = 0$, starting from some relatively high value $k=k_0$, and then reducing $k$ (by one) so long as the associated $t$-statistic does not exceed the $\alpha$-level normal critical value. In contrast, since the OLS $t$-statistics for the $\phi_i$ parameters have a non-standard limiting distribution, the naive use of these in such a model selection procedure may lead to an inappropriate choice for $k$.

This is illustrated below with two data sets. For each we estimate a $k$th order dynamic Tobit, for various values of $k$, using MLE and CLAD. For each $k$, and each estimation method, we compute the corresponding $t$-statistic for $H_0: \phi_{k-1} = 0$. The standard errors of the estimators are obtained from Theorems~\ref{thm:tobitmle} and \ref{thm:tobitLAD}. The matrix $\Omega$ is estimated by its sample analogue, while the density at zero, required for the CLAD estimator, is estimated using a uniform kernel, as recommended by \citet{Pow84JoE}.

\subsubsection{US Treasury bill rate}

\begin{figure}[t]
\begin{subfigure}{.49\textwidth}
\centering \includegraphics[width=1\linewidth]{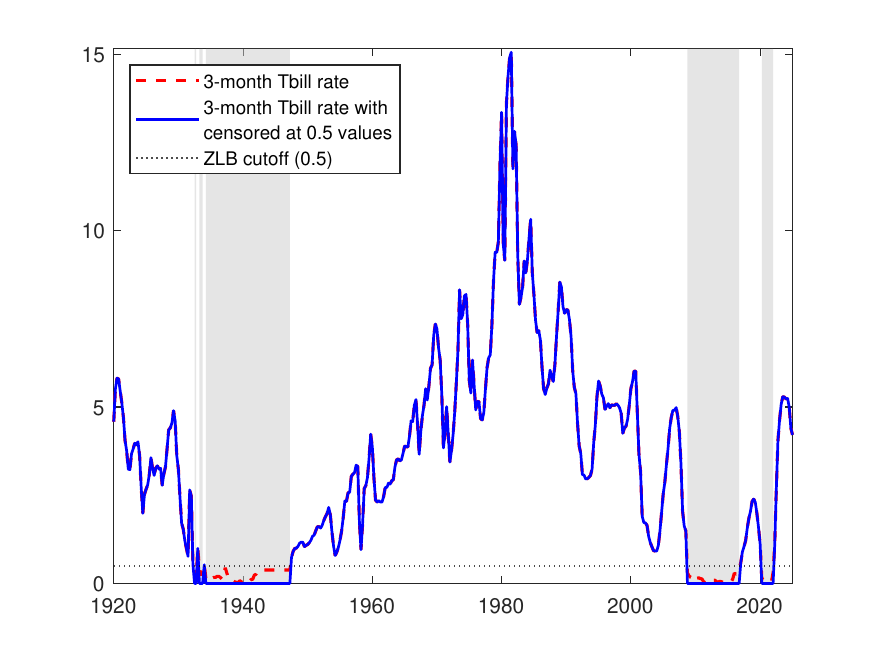}
\caption{3-month Tbills and ZLB.\\ Gray area represents ZLB periods.}\label{fig:3monthdata}
\end{subfigure}
\begin{subfigure}{.49\textwidth}
\centering \includegraphics[width=1\linewidth]{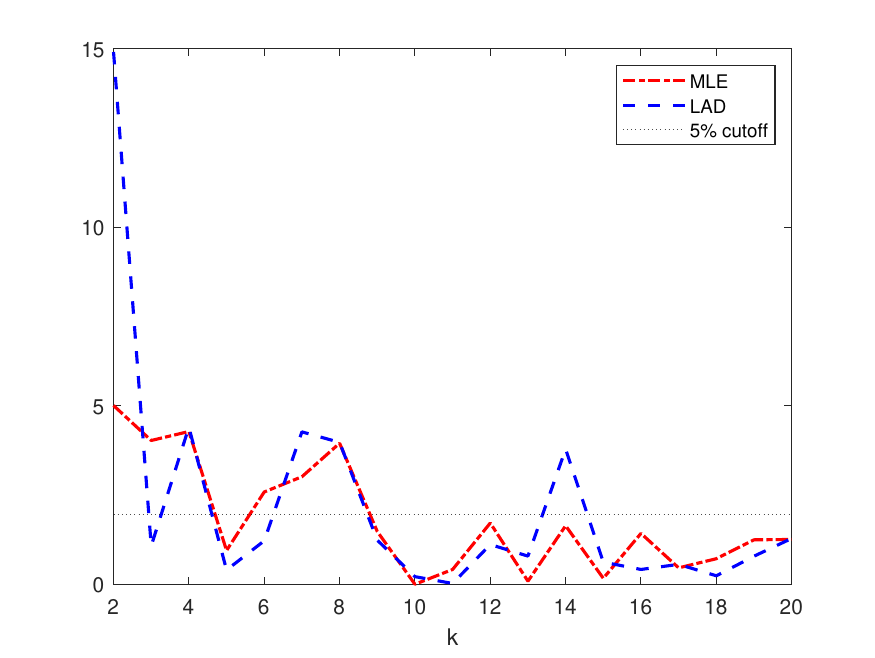}
\caption{Absolute values of $t$-statistics for $\phi_{k-1}$.\\ \quad }\label{fig:3monthtstat}
\end{subfigure}
\caption{3-month Tbills and associated Tobit($k$) $t$-statistics.}
\label{fig:3month}
\end{figure}

Treasury bill rates in the United States have been near the zero lower bound (ZLB) during three historical periods: during the $1930$s and $1940$s following the Great Depression, during and after the $2007$ Global Financial Crisis, and more recently during the COVID-$19$ pandemic. We use quarterly interest rates on $3$-month Treasury bills, with data extending back to the $1920$s. The data set is based on \citet{ramey2018government} and extended using FRED data through the first quarter of $2025$, so that $T=421$. We treat values below $0.5$ as effectively at the zero lower bound, so that our identified ZLB periods closely align with those in \citet{ramey2018government}. Figure~\ref{fig:3monthdata} displays the interest rate data, while Figure~\ref{fig:3monthtstat} presents the corresponding $t$-statistics computed for $k \in \{2,\ldots,20\}$ (i.e., up to $5$ years of quarterly lags).


Comparing the $t$-statistics in Figure~\ref{fig:3monthtstat} with the $5\%$ normal critical value for a two-sided test ($1.96$), we find that in most cases both estimation methods yield the same results, and point to the model with $k=8$ as the preferred specification. There are, however, three values of $k$, $3$, $6$, and $14$, where the methods yield slightly different conclusions, as the CLAD $t$-statistic falls below $1.96$ for $k=3,6$ and above it for $k=14$. We report the estimation results for $k=8$ in Table \ref{table_estimates_tb3}. 

\begin{table}[t]
\centering
\begin{tabular}{c|c|c|c|c|c|c|c|c|c}
\hline
 & $\alpha$ & $\beta$ & $\phi_1$ & $\phi_2$ & $\phi_3$ & $\phi_4$ & $\phi_5$ & $\phi_6$ & $\phi_7$ \\
\hline
\hline
MLE
  & -0.25 & 1.03 & 0.39 & -0.34 & 0.32 & -0.10 & 0.13 & -0.08 & -0.22 \\
  &       &      & (6.98) & (5.76) & (5.14) & (1.50) & (2.18) & (1.35) & (3.95) \\
\hline
LAD
  & -0.00 & 1.00 & 0.43 & -0.09 & 0.12 & 0.02 & 0.04 & -0.04 & -0.15 \\
  &       &      & (9.87) & (1.94) & (2.56) & (0.49) & (0.92) & (0.96) & (3.49) \\
\hline
\hline
\end{tabular}
\caption{3-month Tbills: MLE and LAD estimates with $t$-statistics for $\hat{\phi}_i$ coefficients.}
\label{table_estimates_tb3}
\end{table}

\subsubsection{Countywide Covid-19 cases}\label{section_covid}

\begin{figure}[t]
\begin{subfigure}{.49\textwidth}
\centering \includegraphics[width=1\linewidth]{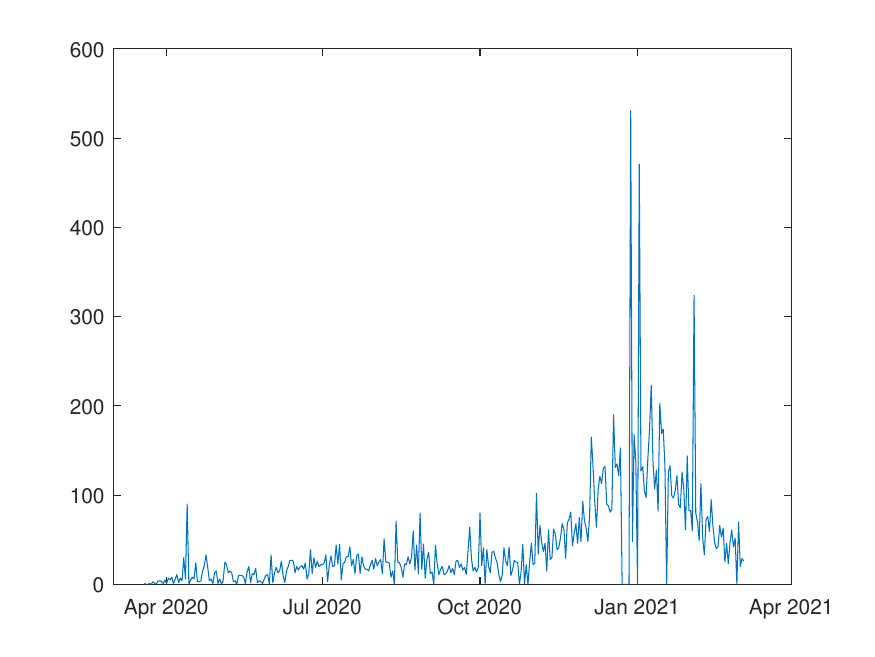}
\caption{Covid-19 cases in Rowan county, NC.}\label{fig:coviddata}
\end{subfigure}
\begin{subfigure}{.49\textwidth}
\centering \includegraphics[width=1\linewidth]{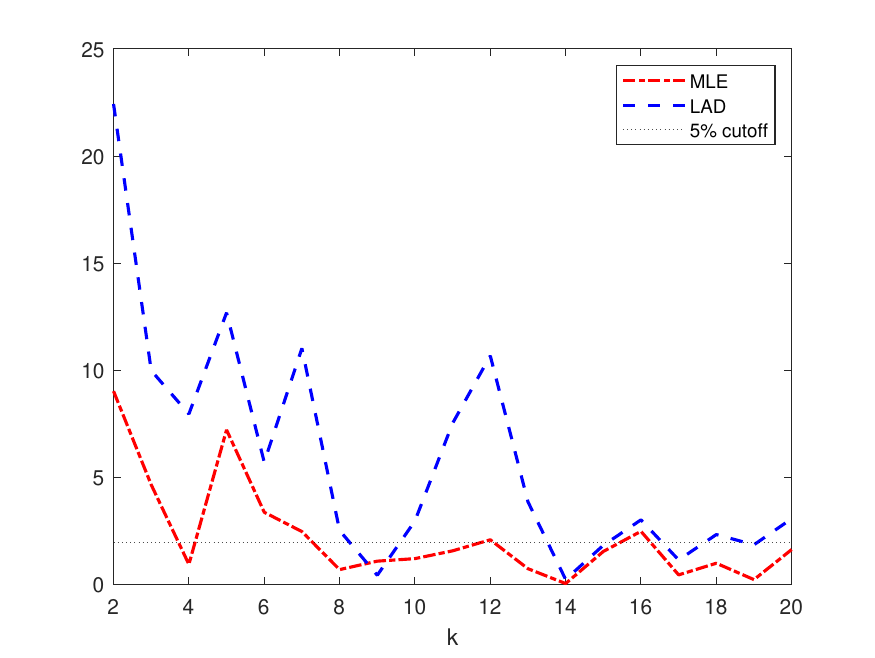}
\caption{Absolute values of $t$-statistics for $\phi_{k-1}$.}\label{fig:covidtstat}
\end{subfigure}
\caption{Daily Covid-19 cases in Rowan county, NC and associated Tobit($k$) $t$-statistics.}
\label{fig:covid}
\end{figure}

Understanding and predicting the evolution of pandemic case counts is critical for effective public health planning and response. Accurate forecasts can inform timely policy decisions such as implementing social distancing measures, allocating medical resources, or guiding vaccine distribution. This stresses the importance of valid model selection. To illustrate our approach to pandemic model selection, we use Covid-19 case data for Rowan County, NC, obtained from \citet{covid_USfacts}. Rowan County provides an instructive example due to its moderate population size and relatively localised outbreaks.

We use daily data for $351$ consecutive days, stopping before reporting irregularities become prevalent: when missing cases began to be accumulated into subsequent days. The sample begins on March 19, 2020 (the date of the first reported case) and ends on March 4, 2021. Unlike the previous example, we do not apply artificial censoring; the $22$ zero observations correspond to days when no positive cases were recorded.

Figure~\ref{fig:covid} summarizes the data and the associated $t$-statistics. In contrast to the previous example, we now observe substantial disagreement between LAD and MLE, with LAD favoring more complex models. This divergence suggests that the data may not be well approximated by a normal distribution, consistent with findings in the literature showing that Covid-19 case counts deviate significantly from both normality and log-normality (e.g., \citet{gonccalves2021covid,cohen2022covid}).

\section{Conclusion}\label{sec_conclusion}

This paper develops asymptotic theory for maximum likelihood (ML) and censored least absolute deviations (CLAD) estimation in dynamic Tobit models under local unit root (LUR) asymptotics. While earlier work had established consistency (and asymptotic normality) of these estimators in the stationary setting, we show that both ML and CLAD remain consistent in the LUR regime and derive their asymptotic distributions. A key finding is that the short-run parameters are asymptotically normal under both methods, facilitatig standard inference and model selection via sequential $t$-testing.

These results contrast sharply with the behavior of ordinary least squares (OLS), which, although consistent under LUR, yields $t$-statistics for the short-run parameters that have an asymptotic bias and non-standard distribution. Consequently, model selection procedures based on OLS can be misleading, especially in settings with censoring and persistent dynamics.

Overall, our results suggest that MLE and LAD offer viable and robust alternatives to OLS in dynamic censored models, with significant advantages for inference and model selection. Future work could extend the analysis to accommodate time-varying volatility, covariates, or alternative forms of censoring commonly encountered in economic and financial data.

\appendix
\section{Auxiliary lemmas}\label{sec_notation}

\begin{notation*}
$e_{m,i}$ denotes the $i$th column of an $m\times m$ identity matrix;
when $m$ is clear from the context, we write this simply as $e_{i}$.
For $S\subset\reals^{m}$, let $\ucc(S)$ denote the set of functions
that are uniformly bounded on compact subsets of $S$, equipped with
the topology of uniform convergence on compacta. (When $S$ is compact,
this coincides with the uniform topology.)
Let $\filt_{t}\defeq\sigma(\{y_{s}\}_{s\leq0},\{u_{s}\}_{s\leq t})$
denote the filtration generated by the initial conditions and the
innovations $\{u_{t}\}$.
\end{notation*}

For ease of reference, we start by presenting some technical results. We will use the following two properties. First, let $y_{t}^{-}\defeq[\alpha_{T,0}+\beta_{T,0}y_{t-1}+\vec{\phi}_{0}^{\trans}\Delta\vec y_{t-1}+u_{t}]_{-}$ and $v_{t}\defeq u_{t}-y_{t}^{-}$, where $[x]_-\defeq\min\{x,0\}$. Then the evolution of $\{y_{t}\}$ may also be described by
\begin{equation}
y_{t}=\alpha_{T,0}+\beta_{T,0}y_{t-1}+\vec{\phi}_{0}^{\trans}\Delta\vec y_{t-1}+v_{t} = x_{t-1} +v_t,\label{eq:v}
\end{equation}
where we have also defined
\begin{equation}\label{eq:xtm1}
x_{t-1} \defeq \alpha_{T,0}+\beta_{T,0}y_{t-1}+\vec{\phi}_{0}^{\trans}\Delta\vec y_{t-1}.
\end{equation}
Second, as noted in the
proof of \citet[Lemma~B.2]{BD22}, since the polynomial $\phi(z)\defeq1-\sum_{i=1}^{k}\phi_{i,0}z^{i}$
has all its roots outside the unit circle under \ref{ass:INIT}--\ref{ass:JSR},
it has a well defined inverse $\phi^{-1}(z)\defeq\sum_{i=0}^{\infty}\gamma_{i}z^{i}$
for all $\smlabs z\leq1$.

We also define the scaled regressor process:
\begin{equation}\label{eq:ztm1}
z_{t-1,T} =\begin{bmatrix}T^{-1/2}\\
T^{-1}y_{t-1}\\
T^{-1/2}\Delta\vec y_{t-1}.
\end{bmatrix}
\end{equation}

\begin{lem}
\label{lem:bd2022}Suppose \ref{ass:INIT}--\ref{ass:JSR} hold.
Then
\begin{enumerate}
\item \label{enu:mombnd}there exists a $C<\infty$ such that $\max_{-k+2\leq t\leq T}\smlnorm{\Delta y_{t}}_{2+\delta_{u}}<C$;
\item \label{enu:deltayt}$T^{-1}\sum_{t=1}^{T}(\Delta\vec y_{t-1})(\Delta\vec y_{t-1})^{\trans}\inprob\Omega$,
where $\Omega=[\Omega_{ij}]_{i,j=1}^{k-1}$ is positive definite,
with $\Omega_{ij}=\sigma_{0}^{2}\sum_{n=0}^{\infty}\gamma_{n}\gamma_{n+\smlabs{i-j}}$;
\item \label{enu:leveldelta}$\sum_{t=1}^{T}y_{t-1}\Delta\vec y_{t-1}=O_{p}(T)$.
\end{enumerate}
\end{lem}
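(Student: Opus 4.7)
The plan is to derive all three statements from the companion-form representation of the first differences and the joint spectral radius bound in Assumption~\ref{ass:JSR}. Items~\ref{enu:mombnd} and \ref{enu:deltayt} are essentially the content of Lemmas B.2 and B.4 of \citet{BD22}, which I would cite directly; only item~\ref{enu:leveldelta} requires a fresh argument.

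The uniform moment bound in \ref{enu:mombnd} comes from stacking the first differences into a companion vector $\vec D_{t-1} \defeq (\Delta y_{t-1},\ldots,\Delta y_{t-k+1})^{\trans}$ and writing the update as $\vec D_t = F_{\delta_t}\vec D_{t-1} + \vec \eta_t$, where $\delta_t\in\{0,1\}$ indicates whether the positive part binds and $\vec \eta_t$ collects $u_t$ together with the asymptotically negligible drift $\alpha_{T,0}$ and near-unit-root term $T^{-1}c_0 y_{t-1}$. Iterating and invoking $\lambda_{\jsr}(\{F_0,F_1\})<1$ yields a geometrically convergent series representation of $\vec D_t$ in the past innovations, and a Minkowski-type argument combined with the $(2+\delta_u)$-moment bounds of Assumption~\ref{ass:MOM} delivers the claimed uniform bound. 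The convergence in \ref{enu:deltayt} then follows from the same iteration, upon noting that $\Delta y_t$ is asymptotically equivalent to the stationary solution of $\phi(L)\Delta y_t = u_t$ (since the inhomogeneous contributions vanish in mean square and censoring binds only rarely); the entries $\Omega_{ij}=\sigma_0^2\sum_{n\geq 0}\gamma_n\gamma_{n+\smlabs{i-j}}$ are exactly the autocovariances of this linear process, and an ergodic theorem delivers the convergence.

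The main new work is in \ref{enu:leveldelta}. Since $y_{t-1}=O_p(T^{1/2})$ and $\Delta\vec y_{t-1}=O_p(1)$, naive bounds would only give $O_p(T^{3/2})$ for the sum, so one must exploit the near-martingale-difference structure of $\{\Delta\vec y_{t-1}\}$ uncovered in the proof of \ref{enu:deltayt}. After substituting the approximate MA representation $\Delta y_{t-i} = \sum_{j\geq 0}\gamma_j u_{t-i-j} + r_{t,i}$, where $r_{t,i}$ is a remainder that is negligible in $\Ellp 2$, the problem reduces to bounding $\sum_t y_{t-1} u_{t-i-j}$ for each fixed lag. These are standard near-integrated-regressor covariance sums of order $O_p(T)$ (discrete analogues of the stochastic integral $\int Y\,\diff W$), and the absolute summability of $\{\gamma_j\}$ lets one exchange the sum over $j$ with the bound. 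The main technical obstacle is controlling the censoring contribution hidden in $r_{t,i}$: in the LUR regime censoring typically occurs only when $y_{t-1}$ is itself of order one (since the process normally sits at scale $T^{1/2}$, so a censoring event requires a negative excursion close to zero) and occurs only $O_p(T^{1/2})$ times in total, so the aggregate contribution of censored time points to $\sum_t y_{t-1}\Delta\vec y_{t-1}$ can be shown to be $o_p(T)$.
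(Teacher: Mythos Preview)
Your proposal is mostly fine, but it misallocates effort in two complementary ways.

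First, part~\ref{enu:leveldelta} does \emph{not} require a fresh argument: it is exactly Lemma~B.4(iv) in \citet{BD22}, and the paper simply cites it alongside parts~\ref{enu:mombnd} and \ref{enu:deltayt}. Your MA-expansion route (reduce to sums of the form $\sum_t y_{t-1}u_{t-j}$ and control the censoring remainder) is plausible, but it is more delicate than you suggest---for $j\geq 1$ the summand $y_{t-1}u_{t-j}$ is not a martingale difference, so you would need a summation-by-parts step before invoking the $O_p(T)$ stochastic-integral heuristic, and the ``censoring occurs only when $y_{t-1}$ is of order one'' claim is not literally true (censoring requires $x_{t-1}+u_t\leq 0$, which can happen with $y_{t-1}$ moderately positive). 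None of this is fatal, but since the result is already established in the cited reference, there is no reason to redo it.

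Second, and more importantly, you omit the one piece of the lemma that the paper \emph{does} prove explicitly: the positive definiteness of $\Omega$. Recognising $\Omega_{ij}$ as the autocovariance function of the stationary process $w_t=\sum_{i=1}^{k-1}\phi_{i,0}w_{t-i}+u_t$ is the right start, but you then need to argue that no nonzero linear combination of $(w_t,\ldots,w_{t-k+2})$ has zero variance. The paper does this by a short inductive argument: if $\vec a=(a_1,\ldots,a_{k-1})\neq 0$, let $j$ be the first nonzero coordinate; then $\mathbb{V}(\vec a^{\trans}\vec w_t)\geq \mathbb{V}(a_j u_{t-j+1})=a_j^2\sigma_0^2>0$, since $u_{t-j+1}$ is independent of $w_{t-j},\ldots,w_{t-k+2}$. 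Your proposal should include this, as it is the only non-citation content in the proof.
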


\begin{lem}
\label{lem:integ}Suppose \ref{ass:INIT}--\ref{ass:JSR} hold, $M>0$,
and let $g(x)$ be a bounded function such that $\smlabs{g(x)}\goesto0$
as $x\goesto+\infty$. Let $\{{\cal Y}_{t,T}\}_{t=1}^{T}$ be such
that $T^{-1/2}{\cal Y}_{\smlfloor{\tau T},T}\indist Y(\tau)$, where $Y(\cdot)$ is defined in (\ref{eq_LUR_limit}). Then
\begin{enumerate}
\item \label{enu:loctime}$\int_{0}^{1}\indic\{Y(\tau)\leq M\}\diff\tau\inprob0$
as $M\goesto0$;
\item \label{enu:indicbnd} $T^{-1}\sum_{t=1}^{T}\indic\{{\cal Y}_{t-1,T}<M\}\inprob0$
as $T\goesto\infty$ and then $M\goesto0$; and
\item \label{enu:integ}$\expect\smlabs{T^{-1}\sum_{t=1}^{T}g({\cal Y}_{t-1,T})}=o(1)$.
\end{enumerate}
In particular, the preceding holds with ${\cal Y}_{t-1,T}=\alpha_{T,0}+\beta_{T,0}y_{t-1}+\vec{\phi}_{0}^{\trans}\Delta\vec y_{t-1}$.
\end{lem}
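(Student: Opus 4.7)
The plan is to prove parts (i)--(iii) in sequence, with (i) feeding (ii) and both feeding (iii), and then to verify the ``in particular'' claim. For (i), I would note that $\indic\{Y(\tau) \leq M\} \downarrow \indic\{Y(\tau) = 0\}$ as $M \downarrow 0$, so by bounded convergence it suffices to show that $\int_0^1 \indic\{Y(\tau) = 0\}\,\deriv\tau = 0$ $\Prob$-a.s. Writing $m(\tau)\defeq\inf_{\tau'\leq\tau}K(\tau')$, a direct check of (\ref{eq_LUR_limit}) shows $Y(\tau) = 0$ iff $K(\tau) = m(\tau) \leq 0$. Since $K(\cdot) - K(0)$ is a continuous semimartingale with strictly positive quadratic-variation density $\sigma_0^2 e^{-2 c_0 r/\phi(1)}$, it is a time-changed Brownian motion with strictly increasing clock; Brownian motion spends zero Lebesgue time at its running minimum, and hence so does $K$, giving (i). The main obstacle lies precisely here: identifying the zero-set of $Y$ and invoking the occupation-time property of a continuous semimartingale at its running minimum.

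For (ii), fix $M > 0$ and pass to a Skorokhod representation on which $T^{-1/2}{\cal Y}_{\lfloor \cdot T\rfloor,T} \to Y(\cdot)$ uniformly on $[0,1]$ almost surely. For any $\epsilon > \delta > 0$, eventually $M/\sqrt{T} < \epsilon$, so
\begin{equation*}
T^{-1}\sum_{t=1}^{T}\indic\{{\cal Y}_{t-1,T}<M\} \leq \int_{0}^{1}\indic\{T^{-1/2}{\cal Y}_{\lfloor\tau T\rfloor,T}\leq\epsilon\}\,\deriv\tau + O(T^{-1}).
\end{equation*}
Uniform convergence bounds the integrand by $\indic\{Y(\tau)\leq\epsilon+\delta\}$ for $T$ large; taking $\limsup_{T}$, then $\delta\downarrow0$ by dominated convergence, and finally $\epsilon\downarrow0$ via (i) delivers the claim (in fact in the stronger form, without needing the outer $M\downarrow 0$).

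For (iii), given $\epsilon > 0$ choose $M_\epsilon$ with $|g(x)| \leq \epsilon$ for $x \geq M_\epsilon$. Splitting the sum according to whether ${\cal Y}_{t-1,T} < M_\epsilon$:
\begin{equation*}
T^{-1}\sum_{t=1}^{T}|g({\cal Y}_{t-1,T})| \leq \|g\|_{\infty}\,T^{-1}\sum_{t=1}^{T}\indic\{{\cal Y}_{t-1,T}<M_{\epsilon}\} + \epsilon.
\end{equation*}
The first term on the right is bounded by $\|g\|_\infty$ and tends to zero in probability by (ii), so bounded convergence gives $\limsup_{T} \expect\, T^{-1}\sum |g({\cal Y}_{t-1,T})| \leq \epsilon$; taking $\epsilon\downarrow 0$ yields (iii). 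Finally, the ``in particular'' claim reduces to the decomposition
\begin{equation*}
T^{-1/2}{\cal Y}_{\lfloor \tau T\rfloor,T} = T^{-1}a_{0} + (1 + T^{-1}c_{0})\,T^{-1/2}y_{\lfloor\tau T\rfloor-1} + T^{-1/2}\vec{\phi}_{0}^{\trans}\Delta\vec y_{\lfloor\tau T\rfloor - 1},
\end{equation*}
whose three summands are a deterministic $o(1)$, a term converging weakly to $Y(\tau)$ on $D[0,1]$ by (\ref{eq_LUR_limit}), and a uniform $o_{p}(1)$ obtained via $\max_{t \leq T}\|\Delta y_{t-1}\| = o_{p}(T^{1/2})$ (by Markov's inequality together with Lemma~\ref{lem:bd2022}\ref{enu:mombnd}), respectively.
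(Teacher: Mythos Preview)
Your arguments for parts~\ref{enu:indicbnd}, \ref{enu:integ} and the ``in particular'' claim are correct and follow the same lines as the paper, with minor differences in implementation: you invoke a Skorokhod representation where the paper sandwiches the indicator between smooth functions and applies the continuous mapping theorem; and in \ref{enu:integ} you split on $\{{\cal Y}_{t-1,T}<M_\epsilon\}$ for a fixed threshold, whereas the paper splits on the scaled event $\{T^{-1/2}{\cal Y}_{t-1,T}<M\}$. Both work, and you correctly note that your proof of \ref{enu:indicbnd} in fact delivers the (stronger) fixed-$M$ conclusion needed in \ref{enu:integ}.

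There is, however, a gap in your argument for \ref{enu:loctime}. You assert that $K(\cdot)-K(0)$ ``is a time-changed Brownian motion with strictly increasing clock'', but the Dambis--Dubins--Schwarz representation applies only to continuous \emph{local martingales}. Here
\[
K(\tau)-K(0)=a_0\int_0^\tau e^{-c_\phi r}\diff r+\sigma_0\int_0^\tau e^{-c_\phi r}\diff W(r),
\]
which carries a nonzero drift whenever $a_0\neq 0$, so the time-change step fails as written. Your conclusion is nonetheless correct and easily repaired. The paper's route is to observe that $Y^\ast(\tau)\defeq K(\tau)+\sup_{\tau'\leq\tau}[-K(\tau')]_+$ is a continuous semimartingale (the running-sup term has finite variation) with quadratic variation $[Y^\ast]=[K]$, whose density $\sigma_0^2 e^{-2c_\phi\tau}$ is strictly positive; the occupation-time formula (\citealp{Revuz_Yor}, Cor.~VI.1.9) then gives $\int_0^1\indic\{0\leq Y^\ast(\tau)<M'\}\diff[Y^\ast](\tau)\to 0$, and mutual absolute continuity with Lebesgue measure yields the result. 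An alternative fix, closer to your sketch, is to remove the drift by an equivalent change of measure via Girsanov, apply your Brownian running-minimum argument under the new measure, and then transfer the a.s.\ zero-Lebesgue-measure conclusion back.
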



\begin{lem}
\label{lem:moments}Suppose \ref{ass:INIT}--\ref{ass:JSR} hold.
Then
\begin{enumerate}
\item \label{enu:ZZ}$\sum_{t=1}^{T}z_{t-1,T}z_{t-1,T}^{\trans}\indist Q_{ZZ}$
jointly with $\frac1{\sqrt{T}}\sum_{t=1}^{\lfloor\tau T\rfloor}u_t\indist \sigma_0 W(\tau)$, where
\[
Q_{ZZ}\defeq\begin{bmatrix}1 & \int_{0}^{1}Y(\tau)\diff\tau & 0\\
\int_{0}^{1}Y(\tau)\diff\tau & \int_{0}^{1}Y^{2}(\tau)\diff\tau & 0\\
0 & 0 & \Omega
\end{bmatrix}
\]
is a.s.\ positive definite;
\item \label{enu:max} $\max_{0\leq t\leq T}\smlnorm{z_{t,T}}=O_{p}(T^{-\delta_{u}/2(2+\delta_{u})})=o_{p}(1)$;
\item \label{enu:zeros}$T^{-1}\sum_{t=1}^{T}\indic\{y_{t}=0\}=o_{p}(1)$;
and
\item \label{enu:xprodpos}$\sum_{t=1}^{T}z_{t-1,T}u_{t}\indic\{y_{t}>0\}=o_{p}(T^{1/2})$
\end{enumerate}
Suppose further that \ref{ass:MOM} holds with $\delta_{u}>2$. Then
\begin{enumerate}[resume]
\item \label{enu:fourthz}$\sum_{t=1}^{T}\smlnorm{z_{t-1,T}}^{4}=O_{p}(T^{-1})$;
\item \label{enu:vtzero}$\sum_{t=1}^{T}\indic\{y_{t}=0\}\smlabs{v_{t}}^{m}=o_{p}(T)$
for each $m\in[1,4]$.
\end{enumerate}
\end{lem}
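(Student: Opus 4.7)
The plan is to prove the six claims in rough order of increasing difficulty, with parts~(iv) and~(vi) relying on the count bound established in~(iii).

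Part~(i) follows by block decomposition of $\sum_{t=1}^T z_{t-1,T} z_{t-1,T}^{\trans}$ according to the partition $(1,y_{t-1},\Delta \vec y_{t-1})$. The $2\times 2$ upper-left block consists of continuous functionals of $T^{-1/2}y_{\lfloor\tau T\rfloor}$, which by the continuous mapping theorem applied to~\eqref{eq_LUR_limit} converge jointly to $(1,\int_0^1 Y, \int_0^1 Y^2)$; the lower-right block converges to $\Omega$ by Lemma~\ref{lem:bd2022}(ii); and the off-diagonal cross terms $T^{-1}\sum\Delta\vec y_{t-1}$ and $T^{-3/2}\sum y_{t-1}\Delta\vec y_{t-1}$ are both $O_p(T^{-1/2})$, the latter by Lemma~\ref{lem:bd2022}(iii). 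Joint convergence with $T^{-1/2}\sum_{s\leq \lfloor\tau T\rfloor} u_s \indist \sigma_0 W$ is automatic since $Y$ is a measurable functional of $W$ in~\eqref{eq_LUR_limit}. For~(ii), a Markov plus union bound gives $\Pr\{\max_{t\leq T}\|\Delta\vec y_{t-1}\|\geq \epsilon T^{1/(2+\delta_u)}\} \leq \epsilon^{-(2+\delta_u)}\sup_t\expect\|\Delta\vec y_{t-1}\|^{2+\delta_u}$, which is finite by Lemma~\ref{lem:bd2022}(i), so the third block of $z_{t-1,T}$ achieves the stated rate $T^{-\delta_u/[2(2+\delta_u)]}$; this dominates the $T^{-1/2}$ rates of the first two blocks.

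For~(iii), since $u_t$ is independent of $\filt_{t-1}$ one has $\expect[\indic\{y_t=0\}\mid\filt_{t-1}] = F_u(-x_{t-1})$, where $F_u$ denotes the c.d.f.\ of $u_t$. The function $g(x)\defeq F_u(-x)$ is bounded by $1$ and vanishes as $x\to +\infty$ (since $\expect|u_t|<\infty$), so Lemma~\ref{lem:integ}(iii) applied with ${\cal Y}_{t-1,T}=x_{t-1}$ yields $\expect|T^{-1}\sum g(x_{t-1})| = o(1)$. This gives $\expect[T^{-1}\sum\indic\{y_t=0\}] = o(1)$, and hence convergence in probability by Markov. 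Part~(v) is a direct moment computation: $\|z_{t-1,T}\|^4 \leq C(T^{-2}+T^{-4}y_{t-1}^4+T^{-2}\|\Delta\vec y_{t-1}\|^4)$, where $T^{-4}\sum y_{t-1}^4 = O_p(T^{-1})$ since $y_{t-1}=O_p(T^{1/2})$, and $T^{-2}\sum\|\Delta\vec y_{t-1}\|^4 = O_p(T^{-1})$ by Lemma~\ref{lem:bd2022}(i); the latter is where the hypothesis $\delta_u>2$ enters.

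For~(iv), split $\sum z_{t-1,T}u_t\indic\{y_t>0\} = \sum z_{t-1,T}u_t - \sum z_{t-1,T}u_t\indic\{y_t=0\}$. The first sum is $O_p(1)$ in each component---by Donsker for the constant block, by weak convergence of the stochastic integral $T^{-1}\sum y_{t-1}u_t \indist \sigma_0\int_0^1 Y\diff W$ for the level block, and by a martingale CLT for the lagged-difference block---hence is $o_p(T^{1/2})$. The second sum is controlled componentwise by Cauchy--Schwarz as $(\sum z_{m,t-1,T}^2 u_t^2)^{1/2}(\sum\indic\{y_t=0\})^{1/2}$, where the first factor is $O_p(1)$ by the martingale-difference identity $\expect[z_{m,t-1,T}^2 u_t^2\mid\filt_{t-1}] = \sigma_0^2 z_{m,t-1,T}^2$ together with the bounds $\expect y_{t-1}^2 = O(T)$ and $\expect\|\Delta\vec y_{t-1}\|^2 = O(1)$, and the second factor is $o_p(T^{1/2})$ by~(iii). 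For~(vi), note that $v_t = -x_{t-1}$ on $\{y_t=0\}$, so $\sum\indic\{y_t=0\}|v_t|^m = \sum\indic\{y_t=0\}|x_{t-1}|^m$. On $\{x_{t-1}\geq 0\}\cap\{y_t=0\}$, $x_{t-1}\leq|u_t|$, so the contribution is at most $\sum\indic\{y_t=0\}|u_t|^m$; applying H\"older at exponent $q=(2+\delta_u)/m\geq 1$ (valid for $m\leq 4$ since $\delta_u>2$) gives $(\sum\indic\{y_t=0\})^{1-m/(2+\delta_u)}(\sum|u_t|^{2+\delta_u})^{m/(2+\delta_u)} = o_p(T)$. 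On $\{x_{t-1}<0\}$, combining $y_{t-1}\geq 0$ with the definition of $x_{t-1}$ yields $|x_{t-1}|\leq|\alpha_{T,0}|+\|\vec\phi_0\|\cdot\|\Delta\vec y_{t-1}\|$, and the same H\"older argument with $\|\Delta\vec y_{t-1}\|^{2+\delta_u}$ in place of $|u_t|^{2+\delta_u}$ again delivers $o_p(T)$.

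The principal technical obstacle lies in part~(vi). A naive Cauchy--Schwarz bound $(\sum\indic\{y_t=0\})^{1/2}(\sum|v_t|^{2m})^{1/2}$ would require eight finite moments on $u_t$ and $\Delta\vec y_{t-1}$ when $m=4$, exceeding what Assumption~\ref{ass:MOM} provides. The H\"older argument at exponent $q=(2+\delta_u)/m$ closes only just under $\delta_u>2$, and moreover requires the observation that on the rare event $\{x_{t-1}<0\}$ a negative predicted value forces $\|\Delta\vec y_{t-1}\|$ to be correspondingly large---this is the mechanism underlying the footnote in the excerpt regarding the potential relaxation of the moment hypothesis.
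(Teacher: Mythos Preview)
Your proof is correct and follows essentially the same approach as the paper's, with only minor tactical differences: in (iii) you condition and apply Lemma~\ref{lem:integ}(iii) to $F_u(-x_{t-1})$, whereas the paper smooths $\indic\{y_t=0\}$ directly and applies the CMT with Lemma~\ref{lem:integ}(i); in (vi) you split on the sign of $x_{t-1}$, whereas the paper uses the unified bound $u_t\leq v_t\leq -\alpha_{T,0}-\vec\phi_0^{\trans}\Delta\vec y_{t-1}$ on $\{y_t=0\}$ to get $|v_t|\leq |u_t|+|\alpha_{T,0}|+\|\vec\phi_0\|\|\Delta\vec y_{t-1}\|$ in one step. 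Two small gaps to close: (a) you do not address the a.s.\ positive definiteness of $Q_{ZZ}$, which is part of the statement of (i) --- it requires both that $\Omega$ is positive definite (Lemma~\ref{lem:bd2022}) and that $Y$ is a.s.\ non-constant on $[0,1]$; (b) in (iv) you invoke $\expect y_{t-1}^2=O(T)$ without justification --- this does hold, but is not among the auxiliary results quoted, so either supply it or switch to the paper's double Cauchy--Schwarz $(\sum\|z\|^2)^{1/2}(\sum|u_t|^4)^{1/4}(\sum\indic\{y_t=0\})^{1/4}$, which uses only $\sum\|z_{t-1,T}\|^2=O_p(1)$ from (i). Your closing remark about the ``mechanism underlying the footnote'' is slightly garbled: the bound on $\{x_{t-1}<0\}$ does not force $\|\Delta\vec y_{t-1}\|$ to be large, it merely bounds $|x_{t-1}|$ by it.
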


\begin{lem}
\label{lem:mills} Let $\Phi(x), \varphi(x)$ denote the standard normal cdf and pdf functions and let $\lambda(x)\defeq\varphi(x)/[1-\Phi(x)]$ be the inverse Mills ratio. There exists a $C<\infty$ such that $\lambda(x)\leq C(1+[x]_{+})$
and $\smlabs{\lambda^{\prime}(x)}\leq C$ for all $x\in\reals$.
\end{lem}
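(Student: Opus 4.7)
The plan is to establish both estimates by splitting the real line at $x=0$ and appealing to classical Mills-ratio inequalities. First, for $x \leq 0$, I would use the trivial bound $1 - \Phi(x) \geq 1/2$ to obtain $\lambda(x) \leq 2\varphi(x) \leq \sqrt{2/\pi}$, yielding a uniform bound on this half-line. For $x > 0$, I would invoke the standard two-sided Mills-ratio estimate
\[
\frac{x\varphi(x)}{1+x^{2}} \;<\; 1 - \Phi(x) \;<\; \frac{\varphi(x)}{x},
\]
which gives $x < \lambda(x) < x + 1/x$. Hence for $x \geq 1$ we have $\lambda(x) < 1 + x = 1 + [x]_{+}$, while on the compact interval $(0,1]$ the function $\lambda$ is continuous and therefore bounded. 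Combining these three cases delivers $\lambda(x) \leq C(1 + [x]_{+})$ for a suitable constant $C < \infty$.

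For the derivative, I would start from $\varphi'(x) = -x\varphi(x)$ and $(1-\Phi)'(x) = -\varphi(x)$, and apply the quotient rule to obtain the key identity
\[
\lambda'(x) \;=\; \frac{\varphi(x)^{2} - x\varphi(x)[1-\Phi(x)]}{[1-\Phi(x)]^{2}} \;=\; \lambda(x)\bigl[\lambda(x) - x\bigr].
\]
For $x \geq 1$, the bounds $0 < \lambda(x) - x < 1/x$ together with $\lambda(x) < x + 1/x$ give $|\lambda'(x)| < (x + 1/x)(1/x) = 1 + 1/x^{2} \leq 2$. For $x \leq 0$, since $\lambda(x) > 0 \geq x$ and $\lambda(x) \leq 2\varphi(x)$, one has $|\lambda'(x)| \leq 2\varphi(x)[2\varphi(x) + |x|]$, which is uniformly bounded because $|x|\varphi(x) \leq 1/\sqrt{2\pi \mathrm{e}}$. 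On the remaining compact interval $[0,1]$, $\lambda'$ is continuous and therefore bounded. Taking the maximum of these three constants yields the claimed uniform bound on $|\lambda'|$.

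The argument is essentially elementary, and the main challenge (if one can call it that) lies in reconciling the very different qualitative behaviour of $\lambda$ in the two tails: $\lambda(x)\to 0$ as $x\to-\infty$, whereas $\lambda(x)\sim x$ as $x\to+\infty$. The identity $\lambda'(x) = \lambda(x)[\lambda(x) - x]$ is what tames the $+\infty$ case, as it exposes the cancellation that keeps $|\lambda'|$ bounded despite $\lambda$ itself growing linearly.
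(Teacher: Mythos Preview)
Your proof is correct. The paper's own proof is a one-line appeal to equation~(3) in Sampford's 1953 \emph{Annals of Mathematical Statistics} paper, which establishes (essentially) that $0<\lambda'(x)<1$ for all $x\in\reals$; the linear growth bound on $\lambda$ then follows by integrating from $0$. Your argument is instead fully self-contained: you recover the same facts from first principles via the classical two-sided Mills-ratio inequalities and the identity $\lambda'(x)=\lambda(x)[\lambda(x)-x]$, together with a simple case split at $x=0$ and $x=1$. The trade-off is conciseness versus self-containment: the paper's citation is shorter, but your version requires no external reference and makes the mechanism (the cancellation in $\lambda(x)-x$ as $x\to+\infty$) transparent.
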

\begin{proof}[Proof of Lemma~\ref{lem:bd2022}]
\ref{enu:mombnd}, \ref{enu:deltayt} and \ref{enu:leveldelta} correspond
to Lemma~B.2 and parts (iii) and (iv) of Lemma~B.4 in \citet{BD22}.

Regarding the positive definiteness of $\Omega$ asserted in part~\ref{enu:ZZ},
by defining the stationary autoregressive process
\[
w_{t}=\sum_{i=1}^{k}\phi_{i}w_{t-i}+u_{t}
\]
and setting $\vec w_{t}\defeq(w_{t},\ldots,w_{t-k+1})^{\trans}$,
we obtain that $\Omega=\expect\vec w_{t}\vec w_{t}^{\trans}$ is necessarily
positive semi-definite. Further, by the independence of $u_{t}$ from
$\{w_{\tau}\}_{\tau< t}$, we have for $\vec a=(a_{1},\ldots,a_{k})^{\trans}$
that $\mathbb{V}(\vec a^{\trans}\vec w_{t})\geq\mathbb{V}(a_{1}u_{t})=a_{1}^{2}\sigma_{0}^{2}$,
and so is nonzero so long as $a_{1}\neq0$. If $a_{1}=0$, then we similarly get $\mathbb{V}(\vec a^{\trans}\vec w_{t})\geq\mathbb{V}(a_{2}u_{t-1})=a_{2}^{2}\sigma_{0}^{2}$, which is nonzero as long as $a_{2}\neq0$. Repeating the same argument, it follows that there does not exist any $\vec a\in\reals^{k}\backslash\{0\}$
such that $\vec a^{\trans}\Omega\vec a=\mathbb{V}(\vec a^{\trans}\vec w_{t})=0$.
\end{proof}
\begin{proof}[Proof of Lemma~\ref{lem:integ}]
\textbf{\ref{enu:loctime}.} Let $c_{\phi}\defeq\phi(1)^{-1}c_0$ and $Y^{\ast}(\tau)\defeq\phi(1)\e^{-c_{\phi}\tau}Y(\tau)$,
where $\phi(1)>0$ as a consequence of \ref{ass:JSR}, as noted in \citet[Remark~3.1]{BD22}. Then
\begin{equation*}\begin{split}
Y^{\ast}(\tau)&=K(\tau)+\sup_{\tau^{\prime}\leq\tau}[-K(\tau^{\prime})]_{+},\\
K(\tau)&=\phi(1)b_0+a_0\int\limits_0^{\tau}e^{-c_{\phi}r}dr  +\sigma_0\int\limits_0^{\tau}e^{-c_{\phi}r} dW(r).
\end{split}\end{equation*}
Now consider
\begin{align*}
\int_{0}^{1}\indic\{Y(\tau)\leq M\}\diff\tau & =\int_{0}^{1}\indic\{Y^{\ast}(\tau)\leq M\phi(1)\e^{-c_{\phi}\tau}\}\diff\tau\\
 & \leq\int_{0}^{1}\indic\{Y^{\ast}(\tau)<2M\phi(1)(1+\e^{-c_{\phi}})\}\diff\tau\\
 & =\int_{0}^{1}\indic\{0\leq Y^{\ast}(\tau)<2M\phi(1)(1+\e^{-c_{\phi}})\}\diff\tau
\end{align*}
since $Y^{\ast}(\tau)\geq0$ for all $\tau$. It remains to show that
the r.h.s.\ vanishes as $M\goesto0$. To that end, observe that $K(\cdot)$
is a continuous semimartingale, and that the (weakly) increasing process
$\sup_{\tau^{\prime}\leq\tau}[-K(\tau^{\prime})]_{+}$
is continuous and has finite variation. Hence $Y^{\ast}(\tau)$ is
also a continuous semimartingale, with quadratic variation
\[
[Y^{\ast}(\tau)]=[K(\tau)]=\sigma_{0}^{2}\int_{0}^{\tau}\e^{-2c_{\phi}s}\diff(s)=\frac{\sigma_{0}^{2}(1-\e^{-2c_{\phi}\tau})}{2c_{\phi}}\defeq\mu^{\ast}(\tau)
\]
and $\mu^{\ast}(\tau)=\sigma_{0}^{2}\tau$ when $c_0=0$. It follows by Corollary~VI.1.9 of \citet{Revuz_Yor}
that, almost surely
\[
\int_{0}^{1}\indic\{0\leq Y^{\ast}(\tau)<M^{\prime}\}\diff\mu^{\ast}(\tau)\goesto0
\]
as $M^{\prime}\goesto0$. Since $\mu^{\ast}$ and Lebesgue measure
on $[0,1]$ are mutually absolutely continuous, it follows that
\[
\int_{0}^{1}\indic\{0\leq Y^{\ast}(\tau)<M^{\prime}\}\diff\tau\goesto0
\]
as $M^{\prime}\goesto0$, as required.

\textbf{ \ref{enu:indicbnd}.}
Letting $h_{M}(x)$ be any smooth function such that $\indic\{x<M\}\leq h_{M}(x)\leq\indic\{x<2M\}$,
we have by the continuous mapping theorem (CMT) and the result of
part~\ref{enu:loctime} that
\begin{multline*}
T^{-1}\sum_{t=1}^{T}\indic\{T^{-1/2}{\cal Y}_{t-1,T}<M\}\leq T^{-1}\sum_{t=1}^{T}h_{M}(T^{-1/2}{\cal Y}_{t-1,T})\\
\indist\int_{0}^{1}h_{M}[Y(\tau)]\diff\tau\leq\int_{0}^{1}\indic\{Y(\tau)<2M\}\diff\tau\inprob0
\end{multline*}
as $T\goesto\infty$ and then $M\goesto0$.

\textbf{ \ref{enu:integ}.}
Let $\epsilon>0$ and $M>0$: then for all $T$ sufficiently large,
$\smlabs{g(x)}<\epsilon$ for all $x\geq T^{1/2}M$. Since there exists
a $C<\infty$ such that $\smlabs{g(x)}\leq C$ for all $x\in\reals$,
it follows that for such $T$,
\begin{align}
\abs{T^{-1}\sum_{t=1}^{T}g({\cal Y}_{t-1,T})} & \leq CT^{-1}\sum_{t=1}^{T}\indic\{T^{-1/2}{\cal Y}_{t-1,T}<M\}+\epsilon.\label{eq:integbnd}
\end{align}
Deduce, by the result of part~\ref{enu:loctime}, that the r.h.s.\ of
(\ref{eq:integbnd}) is bounded by $2\epsilon$ w.p.a.1\ as $T\goesto\infty$
and then $M\goesto0$; since $\epsilon$ was arbitrary, it follows
that $T^{-1}\sum_{t=1}^{T}g({\cal Y}_{t-1,T})=o_{p}(1)$. Since $g$
is bounded, this holds also in $L^{1}$.

Regarding the final claim, we need only to note that in this case
\[
T^{-1/2}{\cal Y}_{\smlfloor{\tau T},T}=T^{-1/2}\alpha_{T,0}+T^{-1/2}\beta_{T,0}y_{\smlfloor{\tau T}}+T^{-1/2}\vec{\phi}_{0}^{\trans}\Delta\vec y_{\smlfloor{\tau T}}\indist Y(\tau)
\]
by Lemma~\ref{lem:bd2022}\ref{enu:mombnd} and Theorem~3.2 in \citet{BD22}.
\end{proof}
\begin{proof}[Proof of Lemma~\ref{lem:moments}]
\textbf{\ref{enu:ZZ}.} We have
\begin{align*}
\sum_{t=1}^{T}z_{t-1,T}z_{t-1,T}^{\trans} & =\begin{bmatrix}1 & T^{-3/2}\sum_{t=1}^{T}y_{t-1} & T^{-1}\sum_{t=1}^{T}(\Delta\vec y_{t-1})^{\trans}\\
T^{-3/2}\sum_{t=1}^{T}y_{t-1} & T^{-2}\sum_{t=1}^{T}y_{t-1}^{2} & T^{-3/2}\sum_{t=1}^{T}y_{t-1}(\Delta\vec y_{t-1})^{\trans}\\
T^{-1}\sum_{t=1}^{T}\Delta\vec y_{t-1} & T^{-3/2}\sum_{t=1}^{T}(\Delta\vec y_{t-1})y_{t-1} & T^{-1}\sum_{t=1}^{T}(\Delta\vec y_{t-1})(\Delta\vec y_{t-1})^{\trans}
\end{bmatrix}\\
 & \indist\begin{bmatrix}1 & \int_{0}^{1}Y(\tau)\diff\tau & 0\\
\int_{0}^{1}Y(\tau)\diff\tau & \int_{0}^{1}Y^{2}(\tau)\diff\tau & 0\\
0 & 0 & \Omega
\end{bmatrix}
\end{align*}
by parts \ref{enu:deltayt} and \ref{enu:leveldelta} of Lemma~\ref{lem:bd2022},
Theorem~3.2 in \citet{BD22} and the CMT, noting in particular that
$\sum_{t=1}^{T}\Delta\vec y_{t-1}=\vec y_{T-1}-\vec y_{0}=O_{p}(T^{1/2})$.
The a.s.\ positive definiteness of the final matrix follows since
$\Omega$ is positive definite, by Lemma~\ref{lem:moments}\ref{enu:deltayt}.

\textbf{ \ref{enu:max}.} We have
\begin{align*}
\max_{1\leq t\leq T}\smlnorm{z_{t,T}}^{2} & \leq T^{-1}+T^{-1}\max_{1\leq t\leq T}(T^{-1/2}y_{t-1})^{2}+T^{-1}\max_{1\leq t\leq T}\smlnorm{\Delta\vec y_{t}}^{2}\\
 & =O_{p}(T^{-1})+O_{p}(T^{-1+2/(2+\delta_{u})})=O_{p}(T^{-\delta_{u}/(2+\delta_{u})})
\end{align*}
by Theorem~3.2 in \citet{BD22}, and Lemma~\ref{lem:bd2022}\ref{enu:mombnd}.

\textbf{\ref{enu:zeros}.} Let $\delta>0$, and $h_{\delta}(x)$ be
any smooth function such that $\indic\{0\leq x\leq\delta\}\leq h_{\delta}(x)\leq\indic\{0\leq x\leq2\delta\}$.
Then by the CMT and Lemma~\ref{lem:integ}\ref{enu:loctime},
\[
T^{-1}\sum_{t=1}^{T}\indic\{y_{t}=0\}\leq T^{-1}\sum_{t=1}^{T}h_{\delta}(T^{-1/2}y_{t})\indist\int_{0}^{1}h_{\delta}[Y(\tau)]\diff\tau\leq\int_{0}^{1}\indic\{0\leq Y(\tau)\leq2\delta\}\diff\tau\inprob0
\]
as $T\goesto\infty$ and then $\delta\goesto0$.

\textbf{\ref{enu:xprodpos}.} Decompose
\[
\sum_{t=1}^{T}z_{t-1,T}u_{t}\indic\{y_{t}>0\}=\sum_{t=1}^{T}z_{t-1,T}u_{t}-\sum_{t=1}^{T}z_{t-1,T}u_{t}\indic\{y_{t}=0\}.
\]
For the second r.h.s.\ term, two applications of the CS inequality
yield
\begin{align*}
\sum_{t=1}^{T}\smlnorm{z_{t-1,T}}\smlabs{u_{t}}\indic\{y_{t}=0\} & \leq\left(\sum_{t=1}^{T}\smlnorm{z_{t-1,T}}^{2}\right)^{1/2}\left(\sum_{t=1}^{T}\smlabs{u_{t}}^{4}\right)^{1/4}\left(\sum_{t=1}^{T}\indic\{y_{t}=0\}\right)^{1/4}\\
 & =O_{p}(1)\cdot O_{p}(T^{1/4})\cdot o_{p}(T^{1/4})=o_{p}(T^{1/2})
\end{align*}
by the results of parts \ref{enu:ZZ} and \ref{enu:zeros}. Next,
note
\[
\sum_{t=1}^{T}z_{t-1,T}u_{t}=\sum_{t=1}^{T}\begin{bmatrix}T^{-1/2}\\
T^{-1}y_{t-1}\\
T^{-1/2}\Delta\vec y_{t-1}
\end{bmatrix}u_{t}.
\]
By Theorem~3.2 in \citet{BD22} and Theorem~2.1 in \citet{ito_convergence},
the second component converges weakly to $\sigma\int_{0}^{1}Y(\tau)\diff W(\tau)$,
and so is $O_{p}(1)$. The remaining components form a martingale
with variance matrix
\[
\sigma_{0}^{2}\begin{bmatrix}1 & 0\\
0 & \expect(\Delta\vec y_{t-1})(\Delta\vec y_{t-1}^{\trans})
\end{bmatrix}
\]
which is uniformly bounded by Lemma~\ref{lem:bd2022}\ref{enu:mombnd};
hence these components are also $O_{p}(1)$.

\textbf{\ref{enu:fourthz}.} Note that there exists a $C<\infty$
(depending only on $n$) such that for all $x=(x_{1},\ldots,x_{n})^{\trans}\in\reals^{n}$,
$\smlnorm x^{4}=(\sum_{i=1}^{n}x_{i}^{2})^{2}\leq C\sum_{i=1}^{n}x_{i}^{4}$.
Therefore, in particular
\begin{align*}
\sum_{t=1}^{T}\smlnorm{z_{t-1,T}}^{4} & \leq C\sum_{t=1}^{T}\left[T^{-2}+T^{-4}\smlabs{y_{t-1}}^{4}+T^{-2}\smlnorm{\Delta\vec y_{t-1}}^{4}\right]\\
 & =CT^{-1}\left[1+T^{-1}\sum_{t=1}^{T}\abs{\frac{y_{t-1}}{T^{1/2}}}^{4}+T^{-1}\sum_{t=1}^{T}\smlnorm{\Delta\vec y_{t-1}}^{4}\right]\\
 & =O_{p}(T^{-1})
\end{align*}
since $\expect\smlnorm{\Delta\vec y_{t-1}}^{4}$ is uniformly bounded
by Lemma~\ref{lem:bd2022}\ref{enu:mombnd}, and
\[
T^{-1}\sum_{t=1}^{T}\abs{\frac{y_{t-1}}{T^{1/2}}}^{4}\indist\int_{0}^{1}Y^{4}(\tau)\diff\tau
\]
by Theorem~3.2 in \citet{BD22} and the CMT.

\textbf{\ref{enu:vtzero}.} Note that we always have
\[
v_{t}=u_{t}-y_{t}^{-}\geq u_{t}
\]
since $-y_{t}^{-}\geq0$, while if $y_{t}=0$, then by (\ref{eq:v})
\[
0=y_{t}=\alpha_{0,T}+\beta_{0,T}y_{t-1}+\vec{\phi}_{0}^{\trans}\Delta\vec y_{t-1}+v_{t}\geq\alpha_{0,T}+\vec{\phi}_{0}^{\trans}\Delta\vec y_{t-1}+v_{t}.
\]
Thus $u_{t}\leq v_{t}\leq-\alpha_{0,T}-\vec{\phi}_{0}^{\trans}\Delta\vec y_{t-1}$
when $y_{t}=0$, whence there exists a $C<\infty$ (depending only
on $m$, $a_{0}$ and $\smlnorm{\vec{\phi}_{0}}$) such that
\begin{align*}
\expect\indic\{y_{t}=0\}\smlabs{v_{t}}^{m+\delta} & \leq\expect(\smlabs{u_{t}}+\smlabs{\alpha_{0,T}}+\smlnorm{\vec{\phi}_{0}}\smlnorm{\Delta\vec y_{t-1}})^{m+\delta}\\
 & \leq C(1+\expect\smlabs{u_{t}}^{m+\delta}+\expect\smlnorm{\Delta\vec y_{t-1}}^{m+\delta}).
\end{align*}
In view of Lemma~\ref{lem:bd2022}\ref{enu:mombnd}, we may take
$\delta\in(0,\delta_{u}-2)$ such that the r.h.s.\ is uniformly bounded,
for all $m\in[1,4]$. Hence, by H{\"o}lder's inequality,
\begin{align*}
\sum_{t=1}^{T}\indic\{y_{t}=0\}\smlabs{v_{t}}^{m} & \leq\left(\sum_{t=1}^{T}\indic\{y_{t}=0\}\right)^{\delta/(m+\delta)}\left(\sum_{t=1}^{T}\indic\{y_{t}=0\}\smlabs{v_{t}}^{m+\delta}\right)^{m/(m+\delta)}\\
 & =o_{p}(T^{\delta/(m+\delta)})\cdot O_{p}(T^{m/(m+\delta)})=o_{p}(T)
\end{align*}
by the preceding and the result of part~\ref{enu:zeros}.
\end{proof}
\begin{proof}[Proof of Lemma~\ref{lem:mills}]
 Both bounds are an immediate consequence of the bound given in (3)
in \citet{Sam53AMS}.
\end{proof}

\section{Maximum likelihood}
\label{sec:mlproof}

To economise on notation, in this section we shall generally write $\hat{\alpha}_{T}^{\mle}$ as merely $\hat{\alpha}_{T}$. Define
\begin{equation*}\begin{split}
\rho&\defeq(\alpha,\beta,\vec{\phi}^{\trans})^{\trans},\qquad  \vartheta=\sigma^{-1},\qquad
\rho_{T,0}\defeq(\alpha_{T,0},\beta_{T,0},\vec{\phi}_{0}^{\trans})^{\trans}, \qquad  \vartheta_0=\sigma_0^{-1},\\
\pi&\defeq(\rho^{\trans},\vartheta)^{\trans},\qquad \pi_{T,0}\defeq(\rho_{T,0}^{\trans},\vartheta_0)^{\trans},
\end{split}\end{equation*}
where the `$0$' subscript denotes the true parameter values.

It will be convenient to rewrite the
estimation problem in terms of local parameters, defined as:
\begin{align*}
\lcof & \defeq\begin{bmatrix}a\\
c\\
\vec f
\end{bmatrix}=\begin{bmatrix}T^{1/2} & 0 & 0\\
0 & T & 0\\
0 & 0 & T^{1/2}
\end{bmatrix}\begin{bmatrix}\alpha-\alpha_{T,0}\\
\beta-\beta_{T,0}\\
\vec{\phi}-\vec{\phi}_{0}
\end{bmatrix}\eqdef D_{\lcof,T}(\rho-\rho_{T,0}), & h & \defeq T^{1/2}(\vartheta-\vartheta_{0}),
\end{align*}
which we collect together as
\begin{equation}
\lpar\defeq\begin{bmatrix}\lcof\\
h
\end{bmatrix}=\begin{bmatrix}D_{\lcof,T} & 0\\
0 & T^{1/2}
\end{bmatrix}\begin{bmatrix}\rho-\rho_{T,0}\\
\vartheta-\vartheta_{0}
\end{bmatrix}\eqdef D_{\lpar,T}(\pi-\pi_{T,0}).\label{eq:p-to-pi}
\end{equation}
For each $T\in\naturals$, there is a bijective mapping between
$\lpar$ and $\pi$. To reparametrise the model in terms of the former,
let
\begin{align*}
\pi_{T}(\lpar) & =\pi_{T,0}+D_{\lpar,T}^{-1}\lpar & \rho_{T}(\lcof) & =\rho_{T,0}+D_{\lcof,T}^{-1}\lcof & \vartheta_{T}(h) & =\vartheta_{0}+T^{-1/2}h,
\end{align*}
denote the model parameters corresponding to given values of
$\lpar=(\lcof^{\trans},h)^{\trans}$.
Since the only constraint on the original parameter space is that
$\vartheta>0$, the parameter space for $\lpar$ is given by $\parset_{T}\defeq\reals^{k+1}\times\{h\in\reals\mid h>-T^{1/2}\vartheta_{0}\}$;
note that for any compact $K\subset\reals^{k+2}$, we will have
$K\subset\parset_{T}$ for all $T$ sufficiently large.

Recalling \eqref{eq:v}--\eqref{eq:ztm1} above, we thus have
\begin{align}
y_{t}-\alpha-\beta y_{t-1}-\vec{\phi}^{\trans}\Delta\vec y_{t-1} & =v_{t}+(\alpha_{0}-\alpha)+(\beta_{0}-\beta)y_{t-1}+(\vec{\phi}_{0}-\vec{\phi})^{\trans}\Delta\vec y_{t-1}\nonumber \\
 & =v_{t}-(\rho-\rho_{0})^{\trans}D_{\lcof,T}z_{t-1,T}=v_{t}-z_{t-1,T}^{\trans}\lcof,\label{eq:rewrite1}
\end{align}
for $v_{t}=u_{t}-y_{t}^{-}$, and
\begin{align}
\alpha+\beta y_{t-1}+\vec{\phi}^{\trans}\Delta\vec y_{t-1} & =[\alpha_{0}+\beta_{0}y_{t-1}+\vec{\phi}_{0}^{\trans}\Delta\vec y_{t-1}]\nonumber \\
 & \qquad\qquad+[(\alpha-\alpha_{0})+(\beta-\beta_{0})y_{t-1}+(\vec{\phi}-\vec{\phi}_{0})^{\trans}\Delta\vec y_{t-1}]\nonumber \\
 & =x_{t-1}+z_{t-1,T}^{\trans}\lcof.\label{eq:rewrite2}
\end{align}

\subsection{Proof of Theorem~\ref{thm:tobitmle}}

We may now proceed to analyse the asymptotic behaviour of the loglikelihood. To that end, let
\begin{equation*}
g_{t}(p)\defeq\log f_{\pi_{T}(\lpar)}(\mathsf{y}_{t}\mid \mathsf{y}_{t-1}, \ldots, \mathsf{y}_{t-k})
\end{equation*}
denote the conditional density of $y_{t}$ given $(y_{t-1}, \ldots, y_{t-k})$,
when $\pi=\pi_{T}(\lpar)$. Then in view of (\ref{eq:conddens}),
(\ref{eq:rewrite1}) and (\ref{eq:rewrite2}), evaluating this at $\mathsf{y}_{t-i} = y_{t-i}$ yields
\begin{align*}
 g_{t}(p)
 & =\indic\{y_{t}>0\}\log\vartheta_{T}(h)\varphi[\vartheta_{T}(h)(y_{t}-\alpha_{T}(\lcof)-\beta_{T}(\lcof)y_{t-1}-\vec{\phi}_{T}(\lcof)^{\trans}\Delta\vec y_{t-1})]\\
 & \qquad\qquad+\indic\{y_{t}=0\}\log\{1-\Phi[\vartheta_{T}(h)(\alpha_{T}(\lcof)+\beta_{T}(\lcof)y_{t-1}+\vec{\phi}_{T}(\lcof)^{\trans}\Delta\vec y_{t-1})]\}\\
 & =\indic\{y_{t}>0\}\log\vartheta_{T}(h)\varphi[\vartheta_{T}(h)(v_{t}-z_{t-1,T}^{\trans}\lcof)]\\
 & \qquad\qquad+\indic\{y_{t}=0\}\log\{1-\Phi[\vartheta_{T}(h)(x_{t-1}+z_{t-1,T}^{\trans}\lcof)]\}\\
 & =[\log\vartheta_{T}(h)-\tfrac{1}{2}\log2\pi]\indic\{y_{t}>0\}-\tfrac{1}{2}\vartheta_{T}^{2}(h)(u_{t}-z_{t-1,T}^{\trans}\lcof)^{2}\indic\{y_{t}>0\}\\
 & \qquad\qquad+\indic\{y_{t}=0\}\log\{1-\Phi[\vartheta_{T}(h)(z_{t-1,T}^{\trans}\lcof-v_{t})]\},
\end{align*}
since when $y_{t}>0$, $y_{t}^{-}=0$, so that $v_t=u_t$, and when $y_{t}=0$,
\[
0=y_{t}=x_{t-1}+v_{t}
\]
so that $v_{t}=-x_{t-1}$. If $\hat{\pi}$ maximises ${\cal L}_{T}(\pi)$,
then $\hat{\lpar}=D_{\lpar,T}(\hat{\pi}-\pi_{T,0})$ maximises
\begin{align*}
\ell_{T}(\lpar)\defeq{\cal L}_{T}[\pi_{T}(\lpar)] & =\sum_{t=1}^{T}g_{t}(p)\\
 & =N_{T}[\log\vartheta_{T}(h)-\tfrac{1}{2}\log2\pi]-\tfrac{1}{2}\vartheta_{T}^{2}(h)\sum_{t=1}^{T}(u_{t}-z_{t-1,T}^{\trans}\lcof)^{2}\indic\{y_{t}>0\}\\
 & \qquad\qquad+\sum_{t=1}^{T}\indic\{y_{t}=0\}\log\{1-\Phi[\vartheta_{T}(h)(z_{t-1,T}^{\trans}\lcof-v_{t})]\}
\end{align*}
where $N_{T}\defeq\sum_{t=1}^{T}\indic\{y_{t}>0\}$.

Let ${\cal S}_{T}(\lpar)$ and ${\cal H}_{T}(\lpar)$ denote the score
(gradient) and Hessian of $\ell_{T}$ at $\lpar$. The proof of the
following is deferred to Appendix~\ref{app:scorehess}.
\begin{prop}
\label{prop:mlescorehess}Suppose \ref{ass:INIT}--\ref{ass:JSR}
and \ref{ass:TOB} hold. Then jointly with $\tfrac1{T}\sum_{t=1}^{\smlfloor{\tau T}}u_{t}\indist \sigma_0 W(\tau)$,
\begin{enumerate}
\item \label{enu:score}${\cal S}_{T}(0)\indist{\cal S}$, and
\item \label{enu:hess}${\cal H}_{T}(\lpar)\indist{\cal H}$ on $\ucc(\reals^{k+2})$
\end{enumerate}
where
\begin{align*}
{\cal S} & \defeq\sigma_{0}^{-1}\begin{bmatrix}W(1)\\
\int_{0}^{1}Y(\tau)\diff W(\tau)\\
\Omega^{1/2}\xi_{(1)}\\
2^{1/2}\sigma_{0}^{2}\xi_{(2)}
\end{bmatrix} & {\cal H} & \defeq-\sigma_{0}^{-2}\begin{bmatrix}1 & \int_{0}^{1}Y(\tau)\diff\tau & 0 & 0\\
\int_{0}^{1}Y(\tau)\diff\tau & \int_{0}^{1}Y^{2}(\tau)\diff\tau & 0 & 0\\
0 & 0 & \Omega & 0\\
0 & 0 & 0 & 2\sigma_{0}^{4}
\end{bmatrix}
\end{align*}
with $\xi_{(1)}\sim N[0,I_{k}]$, $\xi_{(2)}\sim N[0,1]$ and $W(\cdot)$ being mutually independent.
\end{prop}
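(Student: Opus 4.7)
The approach is to compute $\mathcal{S}_T(0)$ and $\mathcal{H}_T(p)$ by direct differentiation, split each expression into a ``main'' contribution from observations with $y_t>0$ and a ``censored correction'' from those with $y_t=0$, and then show that the main terms drive the limits while the corrections vanish. Writing $\ell_T=\ell_T^++\ell_T^0$ for the two indicator-based pieces, and using $v_t=-x_{t-1}$ on $\{y_t=0\}$, the explicit formulas for $\mathcal{S}_T(0)$ and $\mathcal{H}_T(p)$ follow by elementary calculus. The technical inputs are Lemmas~\ref{lem:bd2022}--\ref{lem:mills}.

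For the score, note that by the chain rule $\mathcal{S}_T(0)=D_{p,T}^{-1}\nabla_\pi\mathcal{L}_T(\pi_{T,0})$, and $\nabla_\pi\mathcal{L}_T(\pi_{T,0})$ is a martingale difference sequence (a standard consequence of correct specification of the conditional density). Collecting terms gives, for the $\lcof$-component, the main term $\sigma_0^{-2}\sum z_{t-1,T}u_t\indic\{y_t>0\}$ plus a censored correction $-\sigma_0^{-1}\sum z_{t-1,T}\lambda(\sigma_0^{-1}x_{t-1})\indic\{y_t=0\}$, with an analogous decomposition for the $h$-component. For the $(a,c)$-subcomponents, Theorem~3.2 of \citet{BD22} and Ito's theorem on convergence of stochastic integrals \citep{ito_convergence} yield jointly $T^{-1/2}\sum u_t\indist\sigma_0 W(1)$ and $T^{-1}\sum y_{t-1}u_t\indist\sigma_0\int_0^1 Y\diff W$. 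The $\vec f$-subcomponent $T^{-1/2}\sum\Delta\vec y_{t-1}u_t$ is a stationary MDS that converges to $\sigma_0\Omega^{1/2}\xi_{(1)}$ by a martingale CLT, and is asymptotically independent of $W$ because the relevant predictable covariation $T^{-1}\sum\Delta\vec y_{t-1}u_t^2\goesto\sigma_0^2\expect\Delta\vec y_{t-1}=0$. The censored corrections are handled using Lemma~\ref{lem:mills} (via the bound $\lambda(x)\leq C(1+[x]_+)$) together with Lemma~\ref{lem:moments}\ref{enu:zeros}--\ref{enu:vtzero}. For the $h$-component the main term reduces, after similar manipulations, to $-\sigma_0^{-1}T^{-1/2}\sum(u_t^2-\sigma_0^2)$, which converges to $\sqrt{2}\sigma_0\xi_{(2)}$ by a standard CLT and is independent of $W$ because $\expect u_t(u_t^2-\sigma_0^2)=0$ under Gaussianity.

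For the Hessian, differentiating once more gives, at general $p$, an $\lcof\lcof$-block of the form $-\vartheta_T^2(h)\sum z_{t-1,T}z_{t-1,T}^{\trans}\indic\{y_t>0\}$ minus a censored contribution proportional to $\sum z_{t-1,T}z_{t-1,T}^{\trans}\lambda'(w_t)\indic\{y_t=0\}$, where $w_t\defeq\vartheta_T(h)(z_{t-1,T}^{\trans}\lcof-v_t)$, with analogous expressions for the $\lcof h$- and $hh$-blocks. Uniform convergence on any compact $K\subset\reals^{k+2}$ follows from three observations: $\vartheta_T(h)\goesto\vartheta_0$ uniformly on $K$; Lemma~\ref{lem:moments}\ref{enu:max} gives $\max_t\smlnorm{z_{t-1,T}}=o_p(1)$, so $z_{t-1,T}^{\trans}\lcof$ is uniformly negligible; and Lemma~\ref{lem:mills} bounds $\smlabs{\lambda'}$ by a constant, whence the censored contribution is dominated by $\sum\smlnorm{z_{t-1,T}}^2\indic\{y_t=0\}\leq(\sum\smlnorm{z_{t-1,T}}^4)^{1/2}(\sum\indic\{y_t=0\})^{1/2}=O_p(T^{-1/2})\cdot o_p(T^{1/2})=o_p(1)$ by Lemma~\ref{lem:moments}\ref{enu:fourthz}--\ref{enu:zeros}. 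Lemma~\ref{lem:moments}\ref{enu:ZZ} then yields $\sum z_{t-1,T}z_{t-1,T}^{\trans}\indist Q_{ZZ}$; the $hh$-block reduces to $-1/\vartheta_0^2-\sigma_0^2=-2\sigma_0^2$ via a law of large numbers combined with $N_T/T\goesto 1$; and the cross $\lcof h$-block vanishes because $T^{-1/2}\sum z_{t-1,T}u_t\indic\{y_t>0\}=o_p(1)$.

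The main obstacle is the treatment of the censored corrections in the score: unlike the Hessian corrections, which can be bounded crudely and shown to be uniformly negligible, the score corrections are not individually $o_p(1)$ and must instead be shown to combine with the uncensored terms into a single MDS whose limit is already determined by the main part. This is where the MDS structure of the full score at $\pi_{T,0}$, the strengthened moment condition $\delta_u>2$ in \ref{ass:TOB}, and the sharp bound of Lemma~\ref{lem:moments}\ref{enu:vtzero} are jointly required. A secondary difficulty is obtaining Hessian convergence uniformly on compacta rather than merely pointwise, for which the rate bound $\max_t\smlnorm{z_{t-1,T}}=o_p(1)$ is essential.
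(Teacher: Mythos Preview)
Your overall strategy matches the paper's: split into uncensored and censored contributions, show the Hessian corrections are uniformly negligible via crude bounds, and for the score reduce to the ``clean'' martingales $\vartheta_0^2\sum z_{t-1,T}u_t$ and $T^{-1/2}\sum(\vartheta_0^{-1}-\vartheta_0 u_t^2)$. The Hessian argument is essentially what the paper does.

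There is, however, a genuine gap in your score argument. You correctly recognise that the censored corrections are not individually $o_p(1)$, and that one must instead control the martingale difference $m_t-\vartheta_0^2 u_t = -\vartheta_0[\vartheta_0 u_t+\lambda(\vartheta_0 x_{t-1})]\indic\{y_t=0\}$ as a whole. But the tool you invoke, Lemma~\ref{lem:moments}\ref{enu:vtzero}, is the wrong one: that bound controls realised sums like $\sum\smlabs{v_t}^m\indic\{y_t=0\}=o_p(T)$, and after Cauchy--Schwarz against $\smlnorm{z_{t-1,T}}$ it yields only $o_p(T^{1/2})$, not $o_p(1)$. What the paper actually does is compute the \emph{conditional variance} explicitly: using the Gaussian identities $\int_{-z}^\infty u^2\varphi(u)\diff u=\Phi(z)-z\varphi(z)$ and $\int_{-\infty}^{-z}u\varphi(u)\diff u=-\varphi(z)$, one obtains $\expect_{t-1}(m_t-\vartheta_0^2 u_t)^2=\vartheta_0^2[1-\Psi_0(\vartheta_0 x_{t-1})]$ for a bounded function $\Psi_0$ with $\Psi_0(z)\to 1$ as $z\to+\infty$. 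The key lemma is then Lemma~\ref{lem:integ}\ref{enu:integ} (not Lemma~\ref{lem:moments}\ref{enu:vtzero}), which gives $\sum[1-\Psi_0(\vartheta_0 x_{t-1})]^2=o_p(T)$; combined with $\sum\smlnorm{z_{t-1,T}}^4=O_p(T^{-1})$ via Cauchy--Schwarz, the conditional variance is $o_p(1)$. An analogous computation (also relying on Lemma~\ref{lem:integ}\ref{enu:integ}) handles $\mathcal{S}_{h,T}(0)$. Lemma~\ref{lem:moments}\ref{enu:vtzero} is in fact used only in the \emph{Hessian} proof, for the $hh$ and $\lcof h$ blocks.

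A minor slip: your independence argument for the $\vec f$-component claims $\expect\Delta\vec y_{t-1}=0$, which is false under censoring (and the process is nonstationary anyway). The correct argument is that the predictable covariation telescopes: $\sigma_0^2 T^{-1}\sum_{t\leq\smlfloor{\lambda T}}\Delta\vec y_{t-1}=\sigma_0^2 T^{-1}(\vec y_{\smlfloor{\lambda T}-1}-\vec y_{0})=O_p(T^{-1/2})$.
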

Since $\ell_{T}$ is twice continuously differentiable on $\parset_{T}$,
we have by Taylor's theorem that for every $\lpar\in\parset_{T}$,
there exists a $\delta\in[0,1]$ such that
\begin{align}
\ell_{T}(\lpar)-\ell_{T}(0) & ={\cal S}_{T}(0)^{\trans}\lpar+\tfrac{1}{2}\lpar^{\trans}{\cal H}_{T}(\delta\lpar)\lpar\nonumber \\
 & ={\cal S}_{T}(0)^{\trans}\lpar+\tfrac{1}{2}\lpar^{\trans}{\cal H}_{T}(0)\lpar+\lpar^{\trans}R_{T}(\delta\lpar)\lpar\label{eq:like-exp}
\end{align}
where $R_{T}(\lpar)\defeq{\cal H}_{T}(\lpar)-{\cal H}_{T}(0)$. By
Proposition~\ref{prop:mlescorehess}, $R_{T}(\lpar)\inprob0$ on
$\ucc(\reals^{k+2})$, and hence
\[
\ell_{T}(\lpar)-\ell_{T}(0)\indist{\cal S}^{\trans}\lpar+\tfrac{1}{2}\lpar^{\trans}{\cal H}\lpar
\]
on $\ucc(\reals^{k+2})$. Since ${\cal H}$ is a.s.\ negative definite,
the r.h.s.\ almost surely has an unique maximiser at $\lpar^{\ast}\defeq-{\cal H}^{-1}{\cal S}$.
With the aid of a convex reparametrisation of $\ell_{T}$, we then
have the following, whose proof appears in Appendix~\ref{app:proofmaximiser}
\begin{prop}
\label{prop:maximiser}If $\hat{\lpar}_{T}=(\hat{\lcof}_{T},\hat{h}_{T})\in\argmax_{\lpar\in\parset_{T}}\ell_{T}(\lpar)$
for all $T$, then $\hat{\lpar}_{T}\indist\lpar^{\ast}$.
\end{prop}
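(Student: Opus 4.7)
Combining (\ref{eq:like-exp}) with Proposition~\ref{prop:mlescorehess}, and noting that $R_T(\delta p) = \mathcal{H}_T(\delta p) - \mathcal{H}_T(0) \inprob 0$ uniformly on compacta, we obtain
\begin{equation*}
\ell_T(p) - \ell_T(0) \indist \ell_\infty(p) \defeq \mathcal{S}^{\trans} p + \tfrac{1}{2} p^{\trans} \mathcal{H} p
\end{equation*}
on $\ucc(\reals^{k+2})$. Since $\mathcal{H}$ is a.s.\ negative definite, $\ell_\infty$ is a.s.\ strictly concave with unique maximiser $\lpar^{\ast} = -\mathcal{H}^{-1}\mathcal{S}$. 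A standard argmax continuous-mapping argument will then deliver $\hat{\lpar}_T \indist \lpar^{\ast}$ provided we can show that the sequence $\hat{\lpar}_T$ is tight -- without this, ucc convergence by itself does not exclude that $\hat{\lpar}_T$ drifts to infinity.

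The plan to secure tightness is to exploit Olsen's concave reparametrisation $\tilde{\pi} \defeq M(\pi) = (\vartheta\alpha, \vartheta\beta, \vartheta\vec{\phi}^{\trans}, \vartheta)^{\trans}$, in which $\mathcal{L}_T$ is globally concave in $\tilde{\pi}$. For the uncensored contributions this is immediate, since $\log\tilde{\vartheta}$ is concave and $-\tfrac{1}{2}(\tilde{\vartheta} y_t - \tilde{\alpha} - \tilde{\beta} y_{t-1} - \tilde{\vec{\phi}}^{\trans}\Delta\vec y_{t-1})^2$ is concave in $\tilde{\pi}$; for the censored contributions $\log[1 - \Phi(\tilde{\alpha} + \tilde{\beta} y_{t-1} + \tilde{\vec{\phi}}^{\trans}\Delta\vec y_{t-1})]$, concavity follows from log-concavity of the standard normal survival function, itself a consequence of the log-concavity of $\varphi$ via Pr\'ekopa's theorem. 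Since $M$ is a smooth bijection onto $\reals^{k+1} \times \reals_+$ with non-vanishing Jacobian at $\pi_0 \defeq (0,1,\vec{\phi}_0^{\trans}, \vartheta_0)^{\trans}$ (and $\pi_{T,0} \to \pi_0$), the map $p \mapsto \tilde{\pi} = M(\pi_T(p))$ is a smooth local bijection, and $\ell_T(p)$ is expressed as the pull-back of a globally concave function by this smooth map.

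With these ingredients, I would conclude via a convexity-lemma argument in the spirit of \citet{Poll91ET}: ucc convergence of a concave objective to a strictly concave limit with unique maximiser forces the maximisers to converge. Concretely, translating the ucc convergence $\ell_T - \ell_T(0) \indist \ell_\infty$ through the local map induced by $M$ and the scaling $D_{p,T}$, one obtains ucc convergence of the re-parametrised concave $\tilde{\ell}_T$ to a strictly concave quadratic; the convexity lemma then forces the corresponding maximiser to converge in distribution to the unique maximiser of the limit, which upon inverting the Jacobian is precisely $\lpar^{\ast}$. The main subtlety is that the heterogeneous scaling $D_{p,T} = \operatorname{diag}(T^{1/2}, T, T^{1/2} I_{k-1}, T^{1/2})$ (with $\beta$ rescaled by $T$) does not align naturally with the uniform $T^{1/2}$ scaling appropriate to the Olsen coordinates, because $M$ couples $\vartheta$ into $\tilde{\beta} = \vartheta\beta$: a perturbation of $\beta$ of order $T^{-1}$ is dominated, in the induced $\tilde{\beta}$-increment, by a perturbation of $\vartheta$ of order $T^{-1/2}$. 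Disentangling these scales via a block-triangular change of variables, so that the transferred Hessian in Olsen coordinates remains uniformly negative definite on compacta and the leading Jacobian is invertible, is the principal technical step; the rest of the convexity-lemma argument is then routine.
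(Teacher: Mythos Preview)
Your approach is the paper's: secure concavity via an Olsen-type reparametrisation, then apply a convexity lemma. You also correctly diagnose the obstacle with doing Olsen at the original-parameter level: since $\tilde{\beta}-\tilde{\beta}_{T,0}=(\vartheta-\vartheta_0)\beta_{T,0}+\vartheta_0(\beta-\beta_{T,0})+o_p(T^{-1/2})$ is dominated by its first term, a uniform $T^{1/2}$ localisation in Olsen coordinates loses the $\beta$ direction.

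Where you leave the ``disentangling via a block-triangular change of variables'' as a step to be done, the paper sidesteps it by reversing the order of operations: it applies the Olsen trick \emph{after} the heterogeneous localisation, setting $\tilde{\lcof}\defeq\vartheta_T(h)\lcof$ while leaving $h$ unchanged. In these coordinates $\tilde{\ell}_T(\tilde{\lcof},h)\defeq\ell_T(\vartheta_T^{-1}(h)\tilde{\lcof},h)$ is concave for each $T$, because the arguments of both the quadratic and $\log[1-\Phi(\cdot)]$ become \emph{affine} in $(\tilde{\lcof},h)$ (recall $\vartheta_T(h)=\vartheta_0+T^{-1/2}h$ is itself affine in $h$). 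The induced map back to local parameters, $\lpar_T(\tilde{\lcof},h)=(\vartheta_T^{-1}(h)\tilde{\lcof},h)$, converges pointwise to the fixed invertible linear map $M=\diag(\vartheta_0^{-1}I_{k+1},1)$, so the expansion (\ref{eq:like-exp}) yields finite-dimensional convergence $\tilde{\ell}_T(\tilde{\lpar})-\tilde{\ell}_T(0)\indist(M\tilde{\lpar})^{\trans}\mathcal{S}+\tfrac{1}{2}(M\tilde{\lpar})^{\trans}\mathcal{H}(M\tilde{\lpar})$, and Knight's (1989) convexity lemma applies directly. Your route would reach the same endpoint, but the localise-then-Olsen ordering makes the block-triangular disentangling unnecessary.
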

Since we may evidently take $\hat{\lpar}_{T}=D_{\lpar,T}(\hat{\pi}_{T}-\pi_{T,0})$
in the preceding, it follows that
\[
D_{\lpar,T}(\hat{\pi}_{T}-\pi_{T,0})\indist-{\cal H}^{-1}{\cal S}.\tag*{\qedsymbol}
\]

\subsection{Proof of Proposition~\ref{prop:mlescorehess}}

\label{app:scorehess}

Partition the first and second partial derivatives of $g_{t}(p)$
as
\begin{gather*}
\grad_{\lpar}g_{t}(p)=\begin{bmatrix}\grad_{\lcof}\\
\grad_{h}
\end{bmatrix}g_{t}(p)\eqdef\begin{bmatrix}S_{\lcof,t}(\lpar)\\
S_{h,t}(\lpar)
\end{bmatrix}\eqdef S_{t}(\lpar),\\
\grad_{\lpar\lpar}g_{t}(p)=\begin{bmatrix}\grad_{\lcof\lcof} & \grad_{h\lcof}\\
\grad_{\lcof h} & \grad_{hh}
\end{bmatrix}g_{t}(p)=\begin{bmatrix}H_{\lcof\lcof,t}(\lpar) & H_{h\lcof,t}(\lpar)\\
H_{\lcof h,t}(\lpar) & H_{hh,t}(\lpar)
\end{bmatrix}\eqdef H_{\lpar\lpar,t}(\lpar).
\end{gather*}
Recalling the definition of the inverse Mills ratio (see the text
preceding the statement of Lemma~\ref{lem:mills}), differentiation
and straightforward algebra then yields
\begin{align}
S_{\lcof,t}(\lpar) & =\vartheta_{T}\left[\vartheta_{T}(u_{t}-z_{t-1,T}^{\trans}\lcof)\indic\{y_{t}>0\}-\lambda[\vartheta_{T}(z_{t-1,T}^{\trans}\lcof-v_{t})]\indic\{y_{t}=0\}\right]z_{t-1,T},\label{eq:Sqt}
\end{align}
where we have suppressed the argument of $\vartheta_{T}(h)$ to reduce
notational clutter, and
\begin{align}
H_{\lcof\lcof,t}(\lpar) & =-\vartheta_{T}^{2}\left[\indic\{y_{t}>0\}+\lambda^{\prime}[\vartheta_{T}(z_{t-1,T}^{\trans}\lcof-v_{t})]\indic\{y_{t}=0\}\right]z_{t-1,T}z_{t-1,T}^{\trans}.\label{eq:Hqq}
\end{align}
Similarly, differentiating with respect to $h$ yields
\begin{align}
S_{h,t}(\lpar) & =T^{-1/2}\{[\vartheta_{T}^{-1}-\vartheta_{T}(u_{t}-z_{t-1,T}^{\trans}\lcof)^{2}]\indic\{y_{t}>0\}\label{eq:Sht}\\
 & \qquad\qquad\qquad\qquad-(z_{t-1,T}^{\trans}\lcof-v_{t})\lambda[\vartheta_{T}(z_{t-1,T}^{\trans}\lcof-v_{t})]\indic\{y_{t}=0\}\}\nonumber
\end{align}
and
\begin{align}
H_{hh,t}(\lpar) & =T^{-1}\{[-\vartheta_{T}^{-2}-(u_{t}-z_{t-1,T}^{\trans}\lcof)^{2}]\indic\{y_{t}>0\}\label{eq:Hhh}\\
 & \qquad\qquad\qquad\qquad-(z_{t-1,T}^{\trans}\lcof-v_{t})^{2}\lambda^{\prime}[\vartheta_{T}(z_{t-1,T}^{\trans}\lcof-v_{t})]\indic\{y_{t}=0\}\}.\nonumber
\end{align}
Finally,
\begin{align}
H_{\lcof h,t}(\lpar) & =T^{-1/2}\vartheta_{T}^{-1}S_{\lcof,t}(\lpar)\label{eq:Hqh}\\
 & \qquad+T^{-1/2}\vartheta_{T}z_{t-1,T}[(u_{t}-z_{t-1,T}^{\trans}\lcof)\indic\{y_{t}>0\}\nonumber \\
 & \qquad\qquad\qquad\qquad\qquad\qquad-(z_{t-1,T}^{\trans}\lcof-v_{t})\lambda^{\prime}[\vartheta_{T}(z_{t-1,T}^{\trans}\lcof-v_{t})]\indic\{y_{t}=0\}].\nonumber
\end{align}

\subsubsection{Proof of part~\ref{enu:score}}

By construction, the process
\[
{\cal S}_{T}(0)\defeq\begin{bmatrix}{\cal S}_{\lcof,T}(0)\\
{\cal S}_{h,T}(0)
\end{bmatrix}\defeq\sum_{t=1}^{T}\begin{bmatrix}S_{\lcof,t}(0)\\
S_{h,t}(0)
\end{bmatrix}
\]
is a martingale (see e.g.\ \citealp{HH80}, Ch.\ 6). It follows
from evaluating (\ref{eq:Sqt}) at $\lpar=0$ (so that $\vartheta=\vartheta_{0}$),
in particular, that
\[
{\cal S}_{\lcof,T}(0)=\sum_{t=1}^{T}z_{t-1,T}m_{t},
\]
where
\begin{align*}
m_{t} & \defeq\vartheta_{0}\left[\vartheta_{0}u_{t}\indic\{y_{t}>0\}-\lambda[\vartheta_{0}(-v_{t})]\indic\{y_{t}=0\}\right]\\
 & =\vartheta_{0}\left[\vartheta_{0}u_{t}\indic\{y_{t}>0\}-\lambda(\vartheta_{0}x_{t-1})\indic\{y_{t}=0\}\right]
\end{align*}
is a martingale difference sequence. We first show the following.
\begin{lem}
\label{lem:Sqt}Suppose \ref{ass:INIT}--\ref{ass:JSR} and \ref{ass:TOB}
hold. Then
\[
{\cal S}_{\lcof,T}(0)=\vartheta_{0}^{2}\sum_{t=1}^{T}z_{t-1,T}u_{t}+o_{p}(1)\eqdef{\cal M}_{\lcof,T}+o_{p}(1)
\]
\end{lem}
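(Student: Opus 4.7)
The strategy is a second-moment (martingale variance) argument. The key observation is that the difference on the left is itself a martingale transform: since $m_t = \vartheta_0^2 u_t$ on $\{y_t>0\}$, direct subtraction yields
\[
m_t - \vartheta_0^2 u_t = -\vartheta_0\, \indic\{y_t=0\}[\vartheta_0 u_t + \lambda(\vartheta_0 x_{t-1})] \eqdef -\Delta_t,
\]
so that $\mathcal{S}_{\lcof,T}(0) - \vartheta_0^2 \sum_{t=1}^T z_{t-1,T} u_t = -\sum_{t=1}^T z_{t-1,T}\Delta_t$. Because both $\{m_t\}$ (the likelihood score at the truth) and $\{u_t\}$ are $\{\filt_t\}$-martingale differences, so is $\{\Delta_t\}$; it therefore suffices to show that $\sum z_{t-1,T}\Delta_t$ converges to zero in $L^2$.

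Writing $a \defeq \vartheta_0 x_{t-1}$ and substituting $z = u_t/\sigma_0$, a direct Gaussian integration using $\int z^2\varphi = \Phi - z\varphi$, $\int z\varphi = -\varphi$, and the Mills identity $\lambda(a)\Phi(-a) = \varphi(a)$ gives
\[
\expect[\Delta_t^2 \mid \filt_{t-1}] = \vartheta_0^2 \int_{-\infty}^{-a}(z + \lambda(a))^2 \varphi(z)\diff z = \vartheta_0^2 G(\vartheta_0 x_{t-1}),
\]
where $G(a) \defeq a\varphi(a) + \Phi(-a) - \lambda(a)\varphi(a)$. The asymptotic expansion $\lambda(a) = a + 1/a + O(1/a^3)$ as $a\to+\infty$ implies $G(a) = O(\varphi(a)/a^3)$ at $+\infty$, while $G(a) \to 1$ as $a\to-\infty$ and $G \geq 0$. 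Hence $G$ is uniformly bounded on $\reals$ and $a \mapsto (1+a^2) G(a)$ is both bounded and vanishing at $+\infty$; the same holds for $G^2$.

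By MDS orthogonality, $\expect\bigl\|\sum z_{t-1,T}\Delta_t\bigr\|^2 = \vartheta_0^2 \expect\bigl[\sum \|z_{t-1,T}\|^2 G(\vartheta_0 x_{t-1})\bigr]$. Decompose $\|z_{t-1,T}\|^2 = T^{-1} + T^{-2}y_{t-1}^2 + T^{-1}\|\Delta\vec y_{t-1}\|^2$. (a) The constant part yields $\expect[T^{-1}\sum G(\vartheta_0 x_{t-1})] = o(1)$ by Lemma~\ref{lem:integ}\ref{enu:integ} (with $\mathcal{Y}_{t-1,T} = x_{t-1}$). (b) For the quadratic-in-$y_{t-1}$ part, invert (\ref{eq:xtm1}) to write $y_{t-1} = \beta_{T,0}^{-1}(x_{t-1} - \alpha_{T,0} - \vec{\phi}_0^{\trans}\Delta\vec y_{t-1})$ and bound $y_{t-1}^2 G(\vartheta_0 x_{t-1}) \leq C[(1+x_{t-1}^2) + \|\Delta\vec y_{t-1}\|^2]G(\vartheta_0 x_{t-1})$; the first piece gives $CT^{-1}\expect[T^{-1}\sum(1+x_{t-1}^2)G(\vartheta_0 x_{t-1})] = o(T^{-1})$ by another application of Lemma~\ref{lem:integ}\ref{enu:integ} to the bounded, $+\infty$-vanishing function $(1+x^2)G(\vartheta_0 x)$. (c) The remaining $\|\Delta\vec y_{t-1}\|^2$ contributions are handled by Cauchy--Schwarz: using $\expect[T^{-1}\sum \|\Delta\vec y_{t-1}\|^4] = O(1)$ from Lemma~\ref{lem:bd2022}\ref{enu:mombnd} (valid since $\delta_u>2$ under \ref{ass:TOB}) together with $\expect[T^{-1}\sum G^2(\vartheta_0 x_{t-1})] = o(1)$,
\[
\expect\Bigl[T^{-1}\sum \|\Delta\vec y_{t-1}\|^2 G(\vartheta_0 x_{t-1})\Bigr] \leq O(1)^{1/2}\cdot o(1)^{1/2} = o(1).
\]
Summing the three estimates gives $\expect\bigl\|\sum z_{t-1,T}\Delta_t\bigr\|^2 = o(1)$, whence $\sum z_{t-1,T}\Delta_t = o_p(1)$ by Markov's inequality, which is the claimed identity.

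The main obstacle is step (b): since $T^{-1}\sum(T^{-1/2}y_{t-1})^2 = O_p(1)$ rather than $o_p(1)$, a naive bound on the $y_{t-1}^2$ piece does not suffice, and real cancellation must be extracted. What saves the argument is the super-polynomial decay of $G$ in $x_{t-1}$, combined with the fact that $y_{t-1}$ and $x_{t-1}$ differ only by a term bounded in $L^{2+\delta_u}$, which reduces the bound to integrals covered by Lemma~\ref{lem:integ}\ref{enu:integ}.
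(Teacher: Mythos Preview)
Your overall strategy matches the paper's exactly: both recognise that $\mathcal{S}_{\lcof,T}(0)-\vartheta_0^2\sum z_{t-1,T}u_t$ is a martingale transform, compute the conditional second moment of the increment, and show it is asymptotically negligible. Your function $G(a)=a\varphi(a)+\Phi(-a)-\lambda(a)\varphi(a)$ is precisely the paper's $1-\Psi_0(a)$.

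There is, however, a genuine gap in step~(b). You assert that $a\mapsto(1+a^2)G(a)$ is bounded, but this is false: as $a\to-\infty$ we have $G(a)\to1$, so $(1+a^2)G(a)\to\infty$. Consequently Lemma~\ref{lem:integ}\ref{enu:integ} cannot be applied to $(1+x^2)G(\vartheta_0 x)$ as written. The fix is easy and uses something you have not yet exploited: the constraint $y_{t-1}\geq0$. From $y_{t-1}=\beta_{T,0}^{-1}(x_{t-1}-\alpha_{T,0}-\vec{\phi}_0^{\trans}\Delta\vec y_{t-1})\geq0$ one gets $y_{t-1}\leq\beta_{T,0}^{-1}([x_{t-1}]_{+}+|\alpha_{T,0}|+\|\vec{\phi}_0\|\|\Delta\vec y_{t-1}\|)$, so that $y_{t-1}^2 G(\vartheta_0 x_{t-1})\leq C\bigl[(1+[x_{t-1}]_{+}^2)+\|\Delta\vec y_{t-1}\|^2\bigr]G(\vartheta_0 x_{t-1})$. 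The function $x\mapsto(1+[x]_{+}^2)G(\vartheta_0 x)$ \emph{is} bounded on $\reals$ (equal to $G\leq1$ for $x\leq0$, and controlled by the Gaussian tail for $x>0$) and vanishes at $+\infty$, so Lemma~\ref{lem:integ}\ref{enu:integ} then applies; the residual $\|\Delta\vec y_{t-1}\|^2$ piece folds into your step~(c).

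It is worth noting that the paper's route sidesteps this entirely. Rather than bounding $\expect\|M_T\|^2$, the paper bounds the \emph{conditional} variance in probability and invokes Corollary~3.1 of \citet{HH80}. This permits a single Cauchy--Schwarz step,
\[
\sum_{t=1}^{T}\|z_{t-1,T}\|^{2}\,[1-\Psi_0(\vartheta_0 x_{t-1})]\leq\Bigl(\sum_{t=1}^{T}\|z_{t-1,T}\|^{4}\Bigr)^{1/2}\Bigl(\sum_{t=1}^{T}[1-\Psi_0(\vartheta_0 x_{t-1})]^{2}\Bigr)^{1/2}=O_p(T^{-1/2})\cdot o_p(T^{1/2}),
\]
using Lemmas~\ref{lem:moments}\ref{enu:fourthz} and \ref{lem:integ}\ref{enu:integ}. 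Working in probability rather than $L^2$ lets one use the $O_p(T^{-1})$ bound on $\sum\|z_{t-1,T}\|^4$ directly, so no decomposition of $\|z_{t-1,T}\|^2$ into its three components --- and hence none of your steps (a)--(c) --- is needed.
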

\begin{proof}
Consider
\begin{align}
\vartheta_{0}^{2}\expect_{t-1}u_{t}^{2}\indic\{y_{t}>0\} & =\vartheta_{0}^{2}\expect_{t-1}u_{t}^{2}\indic\{x_{t-1}+u_{t}>0\} =\expect_{t-1}(\vartheta_{0}u_{t})^{2}\indic\{\vartheta_{0}u_{t}>-\vartheta_{0}x_{t-1}\}\nonumber \\
 & =\int_{-\vartheta_{0}x_{t-1}}^{\infty}u^{2}\varphi(u)\diff u
  =-\vartheta_{0}x_{t-1}\varphi(\vartheta_{0}x_{t-1})+\Phi(\vartheta_{0}x_{t-1})\label{eq:cv1}
\end{align}
since $\vartheta_{0}u_{t}\sim N[0,1]$ under \ref{ass:TOB}, and because
\begin{multline*}
\int_{-z}^{\infty}u^{2}\varphi(u)\diff u=-\int_{-z}^{\infty}u\varphi^{\prime}(u)\diff u=-[u\varphi(u)]_{-z}^{\infty}+\int_{-z}^{\infty}\varphi(u)\diff u\\
=-z\varphi(-z)+[1-\Phi(-z)]=-z\varphi(z)+\Phi(z)
\end{multline*}
via integration by parts. Further, since $y_{t}=0$ if and only if
$x_{t-1}+u_{t}\leq0$,
\begin{align}
\expect_{t-1}\lambda^{2}(\vartheta_{0}x_{t-1})\indic\{y_{t}=0\} & =\lambda^{2}(\vartheta_{0}x_{t-1})\expect_{t-1}\indic\{\vartheta_{0}u_{t}\leq-\vartheta_{0}x_{t-1}\}\nonumber \\
 & =\lambda^{2}(\vartheta_{0}x_{t-1})\Phi(-\vartheta_{0}x_{t-1})
  =\varphi(\theta_{0}x_{t-1})\lambda(\vartheta_{0}x_{t-1}).\label{eq:cv2}
\end{align}
Similarly,
\begin{align}
\expect_{t-1}\vartheta_{0}u_{t}\lambda(\vartheta_{0}x_{t-1})\indic\{y_{t}=0\} & =\lambda(\vartheta_{0}x_{t-1})\expect_{t-1}\vartheta_{0}u_{t}\indic\{\vartheta_{0}u_{t}\leq-\vartheta_{0}x_{t-1}\}\nonumber \\
 & =\lambda(\vartheta_{0}x_{t-1})\int_{-\infty}^{-\vartheta_{0}x_{t-1}}u\varphi(u)\diff u
  =-\varphi(\vartheta_{0}x_{t-1})\lambda(\vartheta_{0}x_{t-1})\label{eq:cv3}
\end{align}
since
\[
\int_{-\infty}^{-z}u\varphi(u)\diff u=-\int_{-\infty}^{-z}\varphi^{\prime}(u)\diff u=-\varphi(-z)=-\varphi(z).
\]

Since the events $\{y_{t}>0\}$ and $\{y_{t}=0\}$ are mutually exclusive,
it follows from (\ref{eq:cv1}) and (\ref{eq:cv2}) that
\begin{align*}
\expect_{t-1}m_{t}^{2} & =\vartheta_{0}^{2}\left[\vartheta_{0}^{2}\expect_{t-1}u_{t}^{2}\indic\{y_{t}>0\}+\lambda^{2}(\vartheta_{0}x_{t-1})\indic\{y_{t}=0\}\right]\\
 & =\vartheta_{0}^{2}\left[-\vartheta_{0}x_{t-1}\varphi(\vartheta_{0}x_{t-1})+\Phi(\vartheta_{0}x_{t-1})+\varphi(\theta_{0}x_{t-1})\lambda(\vartheta_{0}x_{t-1})\right]\eqdef\vartheta_{0}^{2}\Psi_{0}(\vartheta_{0}x_{t-1})
\end{align*}
and from (\ref{eq:cv1}) and (\ref{eq:cv3}) that
\begin{align*}
\expect_{t-1}m_{t}u_{t} & =\vartheta_{0}^{2}\expect_{t-1}u_{t}^{2}\indic\{y_{t}>0\}-\vartheta_{0}\expect_{t-1}u_{t}\lambda(\vartheta_{0}x_{t-1})\indic\{y_{t}=0\}\\
 & =-\vartheta_{0}x_{t-1}\varphi(\vartheta_{0}x_{t-1})+\Phi(\vartheta_{0}x_{t-1})+\varphi(\vartheta_{0}x_{t-1})\lambda(\vartheta_{0}x_{t-1})=\Psi_{0}(\vartheta_{0}x_{t-1})
\end{align*}
where $\Psi_{0}$ is bounded, and $\Psi_{0}(z)\goesto1$ as $z\goesto\infty$.
Hence
\begin{align}
\expect_{t-1}(m_{t}-\vartheta_{0}^{2}u_{t})^{2} & =\expect_{t-1}m_{t}^{2}-2\vartheta_{0}^{2}\expect_{t-1}m_{t}u_{t}+\vartheta_{0}^{4}\expect_{t-1}u_{t}^{2}=\vartheta_{0}^{2}[1-\Psi_{0}(\vartheta_{0}x_{t-1})].\label{eq:cvmlessu}
\end{align}
Finally, we note additionally that for each $\delta>0$, there exists
a $C<\infty$ such that
\begin{align}
\expect_{t-1}\smlabs{m_{t}}^{2+\delta} & \leq\vartheta_{0}^{2+\delta}C\left[\vartheta_{0}^{2+\delta}\expect_{t-1}\smlabs{u_{t}}^{2+\delta}+\lambda^{2+\delta}(\vartheta_{0}x_{t-1})\expect_{t-1}\indic\{\vartheta_{0}u_{t}\leq-\vartheta_{0}x_{t-1}\}\right]\nonumber \\
 & \leq\vartheta_{0}^{2+\delta}C\left[\vartheta_{0}^{2+\delta}\expect\smlabs{u_{1}}^{2+\delta}+\lambda^{2+\delta}(\vartheta_{0}x_{t-1})[1-\Phi(\vartheta_{0}x_{t-1})]\right],\label{eq:lind}
\end{align}
which is bounded by a constant, since in particular
\[
\lambda^{2+\delta}(z)[1-\Phi(z)]=\varphi(z)\lambda^{1+\delta}(z)\goesto0
\]
as $z\goesto+\infty$, by Lemma~\ref{lem:mills}.

Deduce from (\ref{eq:cvmlessu}) and (\ref{eq:lind}) (with $\delta=2$)
that the martingale
\[
M_{T}\defeq\sum_{t=1}^{T}z_{t-1,T}(m_{t}-\vartheta_{0}^{2}u_{t})
\]
has conditional variance
\begin{align*}
\smlnorm{\smlcv{M_{T}}} & =\norm{\sum_{t=1}^{T}\expect_{t-1}(m_{t}-\vartheta_{0}^{2}u_{t})^{2}z_{t-1,T}z_{t-1,T}^{\trans}}\\
 & \leq\vartheta_{0}^{2}\sum_{t=1}^{T}[1-\Psi_{0}(\vartheta_{0}x_{t-1})]\smlnorm{z_{t-1,T}}^{2}\\
 & \leq\vartheta_{0}^{2}\left(\sum_{t=1}^{T}[1-\Psi_{0}(\vartheta_{0}x_{t-1})]^{2}\right)^{1/2}\left(\sum_{t=1}^{T}\smlnorm{z_{t-1,T}}^{4}\right)^{1/2}=o_{p}(1)
\end{align*}
by Lemmas~\ref{lem:integ}\ref{enu:integ} and \ref{lem:moments}\ref{enu:fourthz}.
Hence $M_{T}=o_{p}(1)$ by Corollary 3.1 in \citet{HH80}.
\end{proof}

Regarding the other component of ${\cal S}_{T}(0)$, we have from
evaluating (\ref{eq:Sht}) at $\lpar=0$ that
\begin{align*}
{\cal S}_{h,T}(0) & =T^{-1/2}\sum_{t=1}^{T}[(\vartheta_{0}^{-1}-\vartheta_{0}u_{t}^{2})\indic\{y_{t}>0\}+v_{t}\lambda(-\vartheta_{0}v_{t})\indic\{y_{t}=0\}]\\
 & =T^{-1/2}\sum_{t=1}^{T}[(\vartheta_{0}^{-1}-\vartheta_{0}u_{t}^{2})\indic\{y_{t}>0\}-x_{t-1}\lambda(\vartheta_{0}x_{t-1})\indic\{y_{t}=0\}].
\end{align*}

\begin{lem}
\label{lem:Sht}Suppose \ref{ass:INIT}--\ref{ass:JSR} and \ref{ass:TOB}
hold. Then
\[
{\cal S}_{h,T}(0)=T^{-1/2}\sum_{t=1}^{T}(\vartheta_{0}^{-1}-\vartheta_{0}u_{t}^{2})+o_{p}(1)\eqdef{\cal M}_{h,T}+o_{p}(1)
\]
\end{lem}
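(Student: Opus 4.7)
The argument parallels that of Lemma~\ref{lem:Sqt}: I will exhibit $\mathcal{S}_{h,T}(0)-\mathcal{M}_{h,T}$ as a martingale whose conditional quadratic variation vanishes in probability.

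First, since $\expect u_{t}^{2}=\sigma_{0}^{2}=\vartheta_{0}^{-2}$, the sequence $(\vartheta_{0}^{-1}-\vartheta_{0}u_{t}^{2})$ is itself a martingale difference w.r.t.\ $\{\filt_{t}\}$; combining this with the martingale property of the score component $\mathcal{S}_{h,T}(0)$ and rearranging via $\indic\{y_{t}>0\}=1-\indic\{y_{t}=0\}$, one obtains
\[
\mathcal{S}_{h,T}(0)-\mathcal{M}_{h,T}=T^{-1/2}\sum_{t=1}^{T}\delta_{t},\qquad \delta_{t}\defeq\bigl(\vartheta_{0}u_{t}^{2}-\vartheta_{0}^{-1}-x_{t-1}\lambda(\vartheta_{0}x_{t-1})\bigr)\indic\{y_{t}=0\},
\]
with $\{\delta_{t}\}$ a martingale difference sequence.

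Next, set $Z\defeq\vartheta_{0}u_{t}\sim N[0,1]$ (independent of $\filt_{t-1}$ under \ref{ass:TOB}) and $a\defeq\vartheta_{0}x_{t-1}$ (which is $\filt_{t-1}$-measurable); then $\{y_{t}=0\}=\{Z\leq-a\}$, and integration by parts yields the standard Gaussian identities $\expect[Z^{2}\indic\{Z\leq-a\}]=a\varphi(a)+[1-\Phi(a)]$ and $\expect[Z^{4}\indic\{Z\leq-a\}]=(a^{3}+3a)\varphi(a)+3[1-\Phi(a)]$. These give
\[
\expect_{t-1}\delta_{t}^{2}=\vartheta_{0}^{-2}\Psi(a),\qquad \Psi(a)\defeq(a^{3}+a)\varphi(a)+2[1-\Phi(a)]-a^{2}\varphi(a)\lambda(a).
\]
The function $\Psi$ is continuous on $\reals$ with $\Psi(a)\to2$ as $a\to-\infty$; and using the Mills-ratio expansion $\lambda(a)=a+a^{-1}+O(a^{-3})$ as $a\to+\infty$, the leading $a^{3}\varphi(a)$ contributions cancel and $\Psi(a)=O(\varphi(a)/a)\to0$ as $a\to+\infty$. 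Hence $\Psi$ is bounded on $\reals$ and vanishes at $+\infty$.

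Applying Lemma~\ref{lem:integ}\ref{enu:integ} with $g(x)=\Psi(\vartheta_{0}x)$ and $\mathcal{Y}_{t-1,T}=x_{t-1}$ (whose required weak convergence is given by the final clause of that lemma) yields $T^{-1}\sum_{t=1}^{T}\expect_{t-1}\delta_{t}^{2}=o(1)$ in $L^{1}$, and hence $\expect(T^{-1/2}\sum_{t=1}^{T}\delta_{t})^{2}=o(1)$ by martingale orthogonality, whence the conclusion follows (or equivalently by Corollary~3.1 of \citet{HH80}). The delicate point is the $+\infty$ behaviour of $\Psi$: the term $-a\lambda(a)$ inside $\delta_{t}$ is precisely the centering that makes $Z^{2}-1-a\lambda(a)$ have conditional mean zero on $\{Z\leq-a\}$, and without this exact cancellation the dominant $a^{3}\varphi(a)$ piece of $\Psi$ would fail to vanish, precluding application of Lemma~\ref{lem:integ}.
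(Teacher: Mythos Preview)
Your proof is correct and follows the same overall strategy as the paper: write $\mathcal{S}_{h,T}(0)-\mathcal{M}_{h,T}$ as a martingale and show its conditional quadratic variation is $o_{p}(1)$, then appeal to Corollary~3.1 of \citet{HH80}. The only difference is in how that quadratic variation is controlled: the paper uses the crude bound $(a+b)^{2}\leq2(a^{2}+b^{2})$ to split $\expect_{t-1}\delta_{t}^{2}$ into a $(\vartheta_{0}^{-1}-\vartheta_{0}u_{t}^{2})^{2}\indic\{y_{t}=0\}$ piece (handled via H\"older and Lemma~\ref{lem:moments}\ref{enu:zeros}) and an $x_{t-1}^{2}\lambda^{2}(\vartheta_{0}x_{t-1})\indic\{y_{t}=0\}$ piece (handled via Lemma~\ref{lem:mills} and Lemma~\ref{lem:integ}\ref{enu:integ}), whereas you compute $\expect_{t-1}\delta_{t}^{2}=\vartheta_{0}^{-2}\Psi(\vartheta_{0}x_{t-1})$ exactly and apply Lemma~\ref{lem:integ}\ref{enu:integ} once. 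Your route is slightly cleaner but leans more heavily on explicit Gaussian moment identities; the paper's split-and-bound argument is more modular and avoids the Mills-ratio expansion needed to verify $\Psi(a)\to0$ as $a\to+\infty$.
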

\begin{proof}
We have
\begin{align*}
{\cal S}_{h,T}(0) & =T^{-1/2}\sum_{t=1}^{T}(\vartheta_{0}^{-1}-\vartheta_{0}u_{t}^{2})-T^{-1/2}\sum_{t=1}^{T}\indic\{y_{t}=0\}[(\vartheta_{0}^{-1}-\vartheta_{0}u_{t}^{2})+x_{t-1}\lambda(\vartheta_{0}x_{t-1})]\\
 & \eqdef T^{-1/2}\sum_{t=1}^{T}(\vartheta_{0}^{-1}-\vartheta_{0}u_{t}^{2})+\Delta_{T}.
\end{align*}
Since $\expect_{t-1}(\vartheta_{0}^{-1}-\vartheta_{0}u_{t}^{2})=0$,
the first r.h.s.\ term is a martingale, and hence so too is $\Delta_{T}$.
We have the bound
\[
\smlcv{\Delta_{T}}\leq2T^{-1}\sum_{t=1}^{T}\expect_{t-1}\indic\{y_{t}=0\}[(\vartheta_{0}^{-1}-\vartheta_{0}u_{t}^{2})^{2}+x_{t-1}^{2}\lambda^{2}(\vartheta_{0}x_{t-1})].
\]
To show the the r.h.s.\ is $o_{p}(1)$ we first note that, similarly
to the argument that yielded (\ref{eq:cv2}) and (\ref{eq:lind})
above, for any $\delta\geq0$,
\begin{align}
\expect_{t-1}\indic\{y_{t}=0\}\smlabs{x_{t-1}}^{2+\delta}\lambda^{2+\delta}(\vartheta_{0}x_{t-1}) & =\smlabs{x_{t-1}}^{2+\delta}\lambda^{2+\delta}(\vartheta_{0}x_{t-1})\Phi(-\vartheta_{0}x_{t-1})\nonumber \\
 & =\vartheta_{0}^{-(2+\delta)}\smlabs{\vartheta_{0}x_{t-1}}^{2+\delta}\lambda^{1+\delta}(\vartheta_{0}x_{t-1})\varphi(\vartheta_{0}x_{t-1})\nonumber \\
 & \eqdef\vartheta_{0}^{-(2+\delta)}G_{\delta}(\vartheta_{0}x_{t-1}),\label{eq:Gdelta}
\end{align}
where $G_{\delta}$ is bounded, and $G_{\delta}(z)\goesto0$ as $z\goesto+\infty$,
by Lemma~\ref{lem:mills}. Further, by H{\"o}lder's inequality
and Lemma~\ref{lem:moments}\ref{enu:zeros}, for any $\delta>0$,
\begin{multline*}
T^{-1}\sum_{t=1}^{T}\indic\{y_{t}=0\}(\vartheta_{0}^{-1}-\vartheta_{0}u_{t}^{2})^{2}\\
\leq T^{-1}\left(\sum_{t=1}^{T}\left(\frac{1}{\vartheta_{0}}-\vartheta_{0}u_{t}^{2}\right)^{2(1+\delta)}\right)^{1/(1+\delta)}\left(\sum_{t=1}^{T}\indic\{y_{t}=0\}\right)^{\delta/(1+\delta)}\inprob0.
\end{multline*}
Since
\[
\norm{T^{-1}\sum_{t=1}^{T}\left(\frac{1}{\vartheta_{0}}-\vartheta_{0}u_{t}^{2}\right)^{2}\indic\{y_{t}=0\}}_{1+\delta}\leq\norm{\left(\frac{1}{\vartheta_{0}}-\vartheta_{0}u_{1}^{2}\right)^{2}}_{1+\delta}<\infty
\]
it follows that the preceding convergence also holds in $L^{1}$.
Hence by the above and Lemma~\ref{lem:integ}\ref{enu:integ},
\[
\smlcv{\Delta_{T}}\leq o_{p}(1)+2\vartheta_{0}^{-1}T^{-1}\sum_{t=1}^{T}G_{0}(\vartheta_{0}x_{t-1})=o_{p}(1).
\]
Deduce that $\Delta_{T}=o_{p}(1)$ by Corollary 3.1 in \citet{HH80}.
\end{proof}

In view of Lemmas~\ref{lem:Sqt} and \ref{lem:Sht}, it remains to
determine the (joint) limiting distribution of the martingales ${\cal M}_{\lcof,T}$
and ${\cal M}_{h,T}$, whose definitions appear in the statements
of those lemmas. Note
\[
{\cal M}_{\lcof,T}=\vartheta_{0}^{2}\sum_{t=1}^{T}\begin{bmatrix}T^{-1/2}\\
T^{-1}y_{t-1}\\
T^{-1/2}\Delta\vec y_{t-1}
\end{bmatrix}u_{t}\eqdef\begin{bmatrix}{\cal M}_{a,T}\\
{\cal M}_{c,T}\\
{\cal M}_{\vec f,T}
\end{bmatrix}.
\]
We first consider ${\cal M}_{\vec f,T}$. By Lemma~\ref{lem:bd2022}\ref{enu:deltayt},
this has conditional variance
\begin{align*}
\vartheta_{0}^{4}T^{-1}\sum_{t=1}^{T}\Delta\vec y_{t-1}\Delta\vec y_{t-1}^{\trans}\expect_{t-1}u_{t}^{2} & \inprob\sigma_{0}^{-2}\Omega,
\end{align*}
where the convergence holds in probability since the limit is degenerate.
Further, by that same result,
\[
\sum_{t=1}^{T}\smlnorm{z_{t-1,T}}^{4}\expect_{t-1}\smlabs{u_{t}}^{4}\leq C\sum_{t=1}^{T}\smlnorm{z_{t-1,T}}^{4}=O_{p}(T^{-1})
\]
so that ${\cal M}_{\vec f,T}$ satisfies a conditional Lyapunov condition
(and therefore also a conditional Lindeberg condition). Hence by Corollary
3.1 in \citet{HH80},
\begin{equation}
{\cal M}_{\vec f,T}\indist\sigma_{0}^{-1}\Omega^{1/2}\xi_{(1)},\label{eq:wkcMfT}
\end{equation}
where $\xi_{(1)}\sim N[0,I_{k}]$. Next, since ${\cal M}_{h,T}$ is
merely the (scaled) sum of i.i.d.\ random variables (with finite
variances), and
\[
\expect(\vartheta_{0}^{-1}-\vartheta_{0}u_{t}^{2})^{2}=\sigma_{0}^{2}\expect(1-(\sigma_{0}^{-1}u_{t})^{2})^{2}=\sigma_{0}^{2}(1-2+3)=2\sigma_{0}^{2},
\]
since $\sigma_{0}^{-1}u_{t}\sim N[0,1]$, it also follows by the central
limit theorem that
\begin{equation}
{\cal M}_{h,T}\indist2^{1/2}\sigma_{0}\xi_{(2)}\label{eq:wkcMhT}
\end{equation}
for $\xi_{(2)}\sim N[0,1]$.

Our next step is to establish that the marginal convergences (\ref{eq:wkcMfT})
and (\ref{eq:wkcMhT}) hold jointly with the weak convergence of $({\cal M}_{a,T},{\cal M}_{c,T})$,
to mutually independent limits. To that end, we first observe that
${\cal M}_{a,T}=\vartheta_{0}^{2}U_{T}(1)$, where $U_{T}(\tau)\defeq T^{-1}\sum_{t=1}^{\smlfloor{\tau T}}u_{t}$.
Now fix $\{\lambda_{i}\}_{i=1}^{m}\subset[0,1]$ with $\lambda_{i}<\lambda_{i+1}$,
and consider the martingale
\[
{\cal N}_{T}(\{\lambda_{i}\})\defeq T^{-1/2}\sum_{t=1}^{T}\begin{bmatrix}\indic\{\lambda_{1}<t/T\leq\lambda_{2}\}u_{t}\\
\vdots\\
\indic\{\lambda_{m-1}<t/T\leq\lambda_{m}\}u_{t}\\
\Delta\vec y_{t-1}u_{t}\\
(\vartheta_{0}^{-1}-\vartheta_{0}u_{t}^{2})
\end{bmatrix}=\begin{bmatrix}{\cal U}_{T}(\lambda_{1},\lambda_{2})\\
\vdots\\
{\cal U}_{T}(\lambda_{m-1},\lambda_{m})\\
{\cal M}_{\vec f,T}\\
{\cal M}_{h,T}
\end{bmatrix},
\]
where ${\cal U}_{T}(\lambda_{1},\lambda_{2})=U_{T}(\lambda_{2})-U_{T}(\lambda_{1})$.
Since each element of the above satisfies a conditional Lindeberg
condition, by Corollary 3.1 in \citet{HH80}, and the Cram{\'e}r--Wold
device, it suffices to show that all their conditional covariances
converge in probability to zero. We have
\begin{align*}
\smlcv{{\cal U}_{T}(\lambda_{i-1},\lambda_{i}),{\cal M}_{\vec f,T}} & =T^{-1}\sum_{t=1}^{T}\indic\{\lambda_{i-1}<t/T\leq\lambda_{i}\}\Delta\vec y_{t-1}\expect_{t-1}u_{t}^{2}\\
 & =\sigma_{0}^{2}T^{-1}\sum_{t=\smlfloor{\lambda_{i-1}T}+1}^{\smlfloor{\lambda_{i}T}}\Delta\vec y_{t-1}\\
 & =\sigma_{0}^{2}T^{-1}(\vec y_{\smlfloor{\lambda_{i}T}}-\vec y_{\smlfloor{\lambda_{i-1}T}})=O_{p}(T^{-1/2})
\end{align*}
by Theorem~3.2 in \citet{BD22},
\[
\smlcv{{\cal U}_{T}(\lambda_{i-1},\lambda_{i}),{\cal M}_{h,T}}=T^{-1}\sum_{t=1}^{T}\indic\{\lambda_{i-1}<t/T\leq\lambda_{i}\}\expect_{t-1}(\vartheta_{0}^{-1}-\vartheta_{0}u_{t}^{2})u_{t}=0,
\]
and for $i\neq j$,
\[
\smlcv{{\cal U}_{T}(\lambda_{i-1},\lambda_{i}),{\cal U}_{T}(\lambda_{j-1},\lambda_{j})}=0.
\]
It follows that ${\cal N}_{T}(\{\lambda_{i}\})$ weakly converges
to a normal random vector with variance matrix
\[
\diag\{\sigma_{0}^{2}(\lambda_{2}-\lambda_{1}),\ldots,\sigma_{0}^{2}(\lambda_{m}-\lambda_{m-1}),\sigma_{0}^{2}\Omega,2\sigma_{0}^{2}\}.
\]
Deduce that (\ref{eq:wkcMfT}) and (\ref{eq:wkcMhT}) hold jointly with
$U_{T}(\cdot)\indist \sigma_0W(\cdot)$ (the latter, on $D[0,1]$, by the functional central
limit theorem), with $\xi_{(1)}$, $\xi_{(2)}$ and $W$ being mutually
independent.

Finally, we note that by Theorem~3.2 in \citet{BD22} and Theorem~2.1
in \citet{ito_convergence}, the convergences
\[
\begin{bmatrix}{\cal M}_{a,T}\\
{\cal M}_{c,T}
\end{bmatrix}=\vartheta_{0}^{2}\begin{bmatrix}U_{T}(1)\\
T^{-1}\sum_{t=1}^{T}y_{t-1}u_{t}
\end{bmatrix}\indist\sigma_{0}^{-2}\begin{bmatrix}\sigma_0W(1)\\
\sigma_0\int_{0}^{1}Y(\tau)\diff W(\tau)
\end{bmatrix}
=\sigma_0^{-1}\begin{bmatrix}W(1)\\
\int_{0}^{1}Y(\tau)\diff W(\tau)
\end{bmatrix}
\]
hold jointly with $Y_{T}\indist Y$ and $U_{T}\indist \sigma_0 W$ (on $D[0,1]$).
Since $Y$ is a function only of $W$, it follows that the preceding
holds jointly with (\ref{eq:wkcMfT}) and (\ref{eq:wkcMhT}), to mutually
independent limits.\hfill{}\qedsymbol{}

\subsubsection{Proof of part~\ref{enu:hess}}

We partition the Hessian as
\[
{\cal H}_{T}(\lpar)=\begin{bmatrix}{\cal H}_{\lcof\lcof,T}(\lpar) & {\cal H}_{h\lcof,T}(\lpar)\\
{\cal H}_{\lcof h,T}(\lpar) & {\cal H}_{hh,T}(\lpar)
\end{bmatrix}\defeq\sum_{t=1}^{T}\begin{bmatrix}H_{\lcof\lcof,t}(\lpar) & H_{h\lcof,t}(\lpar)\\
H_{\lcof h,t}(\lpar) & H_{hh,t}(\lpar)
\end{bmatrix},
\]
and consider each of these components in turn.

\paragraph{$\boldsymbol{{\cal H}_{\protect\lcof\protect\lcof,T}}$.}

From (\ref{eq:Hqq}), we have
\begin{align}
{\cal H}_{\lcof\lcof,T}(\lpar) & =-\sum_{t=1}^{T}\vartheta_{T}^{2}(h)\left[\indic\{y_{t}>0\}+\lambda^{\prime}[\vartheta_{T}(h)(z_{t-1,T}^{\trans}\lcof-v_{t})]\indic\{y_{t}=0\}\right]z_{t-1,T}z_{t-1,T}^{\trans}\nonumber \\
 & =-\vartheta_{T}^{2}(h)\sum_{t=1}^{T}\indic\{y_{t}>0\}z_{t-1,T}z_{t-1,T}^{\trans}\nonumber \\
 & \qquad\qquad-\vartheta_{T}^{2}(h)\sum_{t=1}^{T}\indic\{y_{t}=0\}\lambda^{\prime}[\vartheta_{T}(h)(z_{t-1,T}^{\trans}\lcof-v_{t})]z_{t-1,T}z_{t-1,T}^{\trans}\nonumber \\
 & \eqdef{\cal H}_{\lcof\lcof,T}^{0}(\lpar)+{\cal R}_{\lcof\lcof,T}(\lpar),\label{eq:Hqqdecomp}
\end{align}
where ${\cal H}_{\lcof\lcof,T}^{0}(\lpar)$ depends on $\lpar$ only
through $h$. Further,
\begin{align*}
\sum_{t=1}^{T}\indic\{y_{t}>0\}z_{t-1,T}z_{t-1,T}^{\trans} & =\sum_{t=1}^{T}z_{t-1,T}z_{t-1,T}^{\trans}-\sum_{t=1}^{T}\indic\{y_{t}=0\}z_{t-1,T}z_{t-1,T}^{\trans}\indist Q_{ZZ}
\end{align*}
by Lemma~\ref{lem:moments}\ref{enu:ZZ} and the fact that, by the
Cauchy--Schwarz (CS) inequality,
\[
\norm{\sum_{t=1}^{T}\indic\{y_{t}=0\}z_{t-1,T}z_{t-1,T}^{\trans}}^{2}\leq\sum_{t=1}^{T}\indic\{y_{t}=0\}\sum_{t=1}^{T}\smlnorm{z_{t-1,T}}^{4}=o_{p}(T)\cdot O_{p}(T^{-1})=o_{p}(1)
\]
by parts \ref{enu:zeros} and \ref{enu:fourthz} of Lemma~\ref{lem:moments}.
Since $\vartheta_{T}^{2}(h)\inprob\vartheta_{0}^{2}$ uniformly on
compacta, it therefore follows that ${\cal H}_{\lcof\lcof,T}^{0}(\lpar)\indist\vartheta_{0}^{2}Q_{ZZ}$
on $\ucc(\reals^{k+2})$.

Regarding the second r.h.s.\ term in (\ref{eq:Hqqdecomp}), we note
by Lemma~\ref{lem:mills} $\smlabs{\lambda^{\prime}(x)}$ is bounded,
so there exists a $C<\infty$ such that
\begin{align*}
\smlnorm{{\cal R}_{\lcof\lcof,T}(\lpar)} & \leq C\vartheta_{T}^{2}(h)\sum_{t=1}^{T}\indic\{y_{t}=0\}\smlnorm{z_{t-1,T}}^{2}
\end{align*}
The sum on the r.h.s.\ is $o_{p}(1)$, since parts \ref{enu:zeros}
and \ref{enu:fourthz} of Lemma~\ref{lem:moments}, and the CS inequality,
yield
\begin{equation}
\left(\sum_{t=1}^{T}\indic\{y_{t}=0\}\smlnorm{z_{t-1,T}}^{2}\right)^{2}\leq\sum_{t=1}^{T}\indic\{y_{t}=0\}\sum_{t=1}^{T}\smlnorm{z_{t-1,T}}^{4}=o_{p}(1).\label{eq:y0vz}
\end{equation}
Deduce that ${\cal R}_{\lcof\lcof,T}(\lpar)\inprob0$ uniformly on
compacta, whence ${\cal H}_{\lcof\lcof,T}(\lpar)\indist\vartheta_{0}^{2}Q_{ZZ}$
on $\ucc(\reals^{k+2})$.

\paragraph{$\boldsymbol{{\cal H}_{hh,T}}$.}

From (\ref{eq:Hhh}), we have
\begin{align*}
{\cal H}_{hh,T}(\lpar) & =-T^{-1}\sum_{t=1}^{T}(\vartheta_{T}^{-2}(h)+u_{t}^{2})+T^{-1}\sum_{t=1}^{T}(\vartheta_{T}^{-2}(h)+u_{t}^{2})\indic\{y_{t}=0\}\\
 & \qquad\qquad+T^{-1}\sum_{t=1}^{T}[2(z_{t-1,T}^{\trans}\lcof)u_{t}-(z_{t-1,T}^{\trans}\lcof)^{2}]\indic\{y_{t}>0\}\\
 & \qquad\qquad-T^{-1}\sum_{t=1}^{T}(z_{t-1,T}^{\trans}\lcof-v_{t})^{2}\lambda^{\prime}[\vartheta_{T}(h)(z_{t-1,T}^{\trans}\lcof-v_{t})]\indic\{y_{t}=0\}\}\\
 & \eqdef{\cal H}_{hh,T}^{0}(\lpar)+{\cal R}_{hh,T}^{1}(\lpar)+{\cal R}_{hh,T}^{2}(\lpar)+{\cal R}_{hh,T}^{3}(\lpar),
\end{align*}
where, by the law of large numbers (LLN),
\[
{\cal H}_{hh,T}^{0}(\lpar)=-\vartheta_{T}^{-2}(h)-T^{-1}\sum_{t=1}^{T}u_{t}^{2}\inprob-2\sigma_{0}^{2}
\]
on $\ucc(\reals^{k+2})$. Regarding the ${\cal R}_{hh,T}^{i}$ terms,
we first note that by the CS inequality and Lemma~\ref{lem:moments}\ref{enu:zeros}
\begin{align*}
\smlabs{{\cal R}_{hh,T}^{1}(\lpar)} & \leq\vartheta_{T}^{-2}(h)T^{-1}\sum_{t=1}^{T}\indic\{y_{t}=0\}+T^{-1}\left(\sum_{t=1}^{T}\indic\{y_{t}=0\}\right)^{1/2}\left(\sum_{t=1}^{T}u_{t}^{4}\right)^{1/2}\inprob0
\end{align*}
uniformly on compacta. Moreover, by parts \ref{enu:ZZ} and \ref{enu:xprodpos}
of Lemma~\ref{lem:moments},
\begin{align*}
\smlabs{{\cal R}_{hh,T}^{2}(\lpar)} & \leq2\smlnorm{\lcof}\norm{T^{-1}\sum_{t=1}^{T}z_{t-1,T}u_{t}\indic\{y_{t}>0\}}+\smlnorm{\lcof}^{2}T^{-1}\sum_{t=1}^{T}\smlnorm{z_{t-1,T}}^{2}\\
 & =2\smlnorm{\lcof}o_{p}(T^{-1/2})+\smlnorm{\lcof}^{2}O_{p}(T^{-1})\inprob0
\end{align*}
uniformly on compacta. Finally, by Lemma~\ref{lem:mills} there exists
a $C<\infty$ such that
\begin{align*}
\smlabs{{\cal R}_{hh,T}^{3}(\lpar)} & \leq CT^{-1}\sum_{t=1}^{T}\indic\{y_{t}=0\}(\smlnorm{\lcof}^{2}\smlnorm{z_{t-1,T}}^{2}+\smlabs{v_{t}}^{2})\\
 & =C\left[\smlnorm{\lcof}^{2}T^{-1}\sum_{t=1}^{T}\indic\{y_{t}=0\}\smlnorm{z_{t-1,T}}^{2}+T^{-1}\sum_{t=1}^{T}\indic\{y_{t}=0\}\smlabs{v_{t}}^{2}\right].
\end{align*}
That both of the sums on the r.h.s.\ are $o_{p}(1)$ then follows
from (\ref{eq:y0vz}) above, and Lemma~\ref{lem:moments}\ref{enu:vtzero}.
Deduce that ${\cal R}_{hh,T}^{3}(\lpar)\inprob0$ uniformly on compacta.
It follows that ${\cal H}_{hh,T}(\lpar)\inprob-2\sigma_{0}^{2}$ on
$\ucc(\reals^{k+2})$.

\paragraph{$\boldsymbol{{\cal H}_{\protect\lcof h,T}}$.}

We have from (\ref{eq:Hqh}) above that
\begin{align*}
{\cal H}_{\lcof h,T}(\lpar) & =\vartheta_{T}^{-1}(h)T^{-1/2}{\cal S}_{\lcof,T}(\lpar)\\
 & \qquad\qquad+\vartheta_{T}(h)T^{-1/2}\sum_{t=1}^{T}\indic\{y_{t}>0\}z_{t-1,T}(u_{t}-z_{t-1,T}^{\trans}\lcof)\\
 & \qquad\qquad-\vartheta_{T}(h)T^{-1/2}\sum_{t=1}^{T}\indic\{y_{t}=0\}z_{t-1,T}(z_{t-1,T}^{\trans}\lcof-v_{t})\lambda^{\prime}[\vartheta_{T}(z_{t-1,T}^{\trans}\lcof-v_{t})]\\
 & \eqdef T^{-1/2}\vartheta_{T}^{-1}(h){\cal S}_{\lcof,T}(\lpar)+\vartheta_{T}(h){\cal R}_{\lcof,T}^{1}(\lpar)+\vartheta_{T}(h){\cal R}_{\lcof,T}^{2}(\lpar).
\end{align*}
It follows from parts \ref{enu:ZZ} and \ref{enu:xprodpos} of Lemma~\ref{lem:moments}
that
\[
\smlabs{{\cal R}_{\lcof,T}^{1}(\lpar)}\leq\norm{T^{-1/2}\sum_{t=1}^{T}\indic\{y_{t}>0\}z_{t-1,T}u_{t}}+\smlnorm{\lcof}^{2}T^{-1/2}\sum_{t=1}^{T}\smlnorm{z_{t-1,T}}^{2}\inprob0
\]
uniformly on compacta. Moreover, by Lemma~\ref{lem:mills} there
exists a $C<\infty$ such that
\begin{align*}
\smlabs{{\cal R}_{\lcof,T}^{2}(\lpar)} & \leq CT^{-1/2}\sum_{t=1}^{T}\indic\{y_{t}=0\}\smlnorm{z_{t-1,T}}(\smlnorm{\lcof}\smlnorm{z_{t-1,T}}+\smlabs{v_{t}})\\
 & \leq CT^{-1/2}\left[\smlnorm{\lcof}\sum_{t=1}^{T}\smlnorm{z_{t-1,T}^{\trans}}^{2}+\sum_{t=1}^{T}\indic\{y_{t}=0\}\smlnorm{z_{t-1,T}^{\trans}}\smlabs{v_{t}}\right]
\end{align*}
That each of the sums on the r.h.s.\ is $o_{p}(T^{1/2})$ follows
from parts \ref{enu:ZZ} and \ref{enu:vtzero} of Lemma~\ref{lem:moments},
noting in particular that
\begin{equation}
\sum_{t=1}^{T}\indic\{y_{t}=0\}\smlnorm{z_{t-1,T}^{\trans}}\smlabs{v_{t}}\leq\left(\sum_{t=1}^{T}\smlnorm{z_{t-1,T}^{\trans}}^{2}\right)^{1/2}\left(\sum_{t=1}^{T}\indic\{y_{t}=0\}\smlabs{v_{t}}^{2}\right)^{1/2}=o_{p}(T^{1/2})\label{eq:y0zv}
\end{equation}
by the CS inequality. Deduce that ${\cal R}_{\lcof,T}^{2}(\lpar)\inprob0$
uniformly on compacta. Finally, we have from (\ref{eq:Sqt}) that
\begin{multline*}
{\cal S}_{\lcof,T}(\lpar)=\vartheta_{T}^{2}(h)\sum_{t=1}^{T}z_{t-1,T}(u_{t}-z_{t-1,T}^{\trans}\lcof)\indic\{y_{t}>0\}\\
-\vartheta_{T}(h)\sum_{t=1}^{T}z_{t-1,T}\lambda[\vartheta_{T}(h)(z_{t-1,T}^{\trans}\lcof-v_{t})]\indic\{y_{t}=0\}.
\end{multline*}
It follows from Lemma~\ref{lem:mills} that there exists a $C<\infty$
such that
\[
\lambda[\vartheta_{T}(h)(z_{t-1,T}^{\trans}\lcof-v_{t})]\leq C[1+\vartheta_{T}(h)(\smlnorm{\lcof}\smlnorm{z_{t-1,T}}+\smlabs{v_{t}})].
\]
Hence
\begin{align*}
\smlabs{{\cal S}_{\lcof,T}(\lpar)} & \leq\vartheta_{T}^{2}(h)\norm{\sum_{t=1}^{T}z_{t-1,T}u_{t}\indic\{y_{t}>0\}}+\smlnorm{\lcof}\vartheta_{T}^{2}(h)\sum_{t=1}^{T}\smlnorm{z_{t-1,T}}^{2}\\
 & \qquad\qquad+C\vartheta_{T}(h)\Biggl[\sum_{t=1}^{T}\indic\{y_{t}=0\}\smlnorm{z_{t-1,T}}+\vartheta_{T}(h)\smlnorm{\lcof}\sum_{t=1}^{T}\smlnorm{z_{t-1,T}}^{2}\\
 & \qquad\qquad\qquad\qquad\qquad\qquad+\vartheta_{T}(h)\sum_{t=1}^{T}\indic\{y_{t}=0\}\smlnorm{z_{t-1,T}}\smlabs{v_{t}}\Biggr].
\end{align*}
That each of the sums on the r.h.s.\ are $o_{p}(T^{1/2})$ follows
from parts \ref{enu:ZZ} and \ref{enu:vtzero} of Lemma~\ref{lem:moments},
(\ref{eq:y0zv}) above, and the fact that
\[
\sum_{t=1}^{T}\indic\{y_{t}=0\}\smlnorm{z_{t-1,T}}\leq\left(\sum_{t=1}^{T}\indic\{y_{t}=0\}\right)^{1/2}\left(\sum_{t=1}^{T}\smlnorm{z_{t-1,T}}^{2}\right)^{1/2}=o_{p}(T^{1/2})
\]
by the CS inequality. Deduce that $T^{-1/2}{\cal S}_{\lcof,T}(\lpar)\inprob0$
uniformly on compacta. Hence ${\cal H}_{\lcof h,T}(\lpar)\inprob0$
uniformly on compacta.

\paragraph{Joint convergence of ${\cal S}_{T}(0)$ and ${\cal H}_{T}(\protect\lpar)$.}

It follows from the preceding that ${\cal H}_{T}(\lpar)\indist{\cal H}$
on $\ucc(\reals^{k+2})$. In view of Lemma~\ref{lem:moments}\ref{enu:ZZ},
this convergence holds jointly with $U_{T}\indist \sigma_0 W$, with ${\cal H}$
being $\sigma(W)$-measurable. We have shown in the proof of part~\ref{enu:score}
that the convergence ${\cal S}_{T}(0)\indist{\cal S}$ also holds jointly
with $U_{T}\indist \sigma_0 W$, with each element of ${\cal S}$ either being
$\sigma(W)$-measurable, or independent of $W$. Deduce that ${\cal S}_{T}(0)\indist{\cal S}$
jointly with ${\cal H}_{T}(\lpar)\indist{\cal H}$, as claimed.\hfill{}\qedsymbol{}

\subsection{Proof of Proposition~\ref{prop:maximiser}}

\label{app:proofmaximiser}

Given $\lcof\in\reals^{k+1}$, $h\in\reals$ and $T\in\naturals$,
define $\tilde{\lcof}\in\reals^{k+1}$ to be such that $\lcof=\vartheta_{T}^{-1}(h)\tilde{\lcof}$.
Similarly, to \citet{Ols78Ecta}, setting $\lpar_{T}(\tilde{\lcof},h)\defeq(\vartheta_{T}^{-1}(h)\tilde{\lcof}^{\trans},h)^{\trans}$,
we have
\begin{align*}
\tilde{\ell}_{T}(\tilde{\lcof},h) & \defeq\ell_{T}[\lpar_{T}(\tilde{\lcof},h)]\\
 & =N_{T}[\log\vartheta_{T}(h)-\tfrac{1}{2}\log2\pi]-\tfrac{1}{2}\vartheta_{T}^{2}(h)\sum_{t=1}^{T}(u_{t}-\vartheta_{T}^{-1}(h)z_{t-1,T}^{\trans}\tilde{\lcof})^{2}\indic\{y_{t}>0\}\\
 & \qquad\qquad+\sum_{t=1}^{T}\indic\{y_{t}=0\}\log\{1-\Phi[\vartheta_{T}(h)(\vartheta_{T}^{-1}(h)z_{t-1,T}^{\trans}\tilde{\lcof}-v_{t})]\}\\
 & =N_{T}[\log(\vartheta_{0}+T^{-1/2}h)-\tfrac{1}{2}\log2\pi]-\tfrac{1}{2}\sum_{t=1}^{T}(\vartheta_{0}u_{t}+hT^{-1/2}u_{t}-z_{t-1,T}^{\trans}\tilde{\lcof})^{2}\indic\{y_{t}>0\}\\
 & \qquad\qquad+\sum_{t=1}^{T}\indic\{y_{t}=0\}\log\{1-\Phi[(z_{t-1,T}^{\trans}\tilde{\lcof}-hT^{-1/2}v_{t}-\vartheta_{0}v_{t})]\}.
\end{align*}
The penultimate line in the preceding display is clearly concave in
$(\tilde{\lcof},h)$, while the concavity of the final line follows
from log-concavity of the normal distribution (e.g.\ \citealp{BB05ET}).
Hence $\tilde{\ell}_{T}$ is concave.

It follows from (\ref{eq:like-exp}) that for every $(\tilde{\lcof},h)\in\reals^{k+2}$
and $T$ sufficiently large, there exists a $\lambda\in[0,1]$ such
that
\begin{align*}
\tilde{\ell}_{T}(\tilde{\lcof},h)-\tilde{\ell}_{T}(0,0) & =\ell_{T}[\lpar_{T}(\tilde{\lcof},h)]-\ell_{T}(0)\\
 & ={\cal S}_{T}(0)^{\trans}\lpar_{T}(\tilde{\lcof},h)+\tfrac{1}{2}\lpar_{T}(\tilde{\lcof},h)^{\trans}{\cal H}_{T}(0)\lpar_{T}(\tilde{\lcof},h)\\
 & \qquad\qquad+\lpar_{T}(\tilde{\lcof},h)^{\trans}{\cal R}_{T}[\lambda\lpar_{T}(\tilde{\lcof},h)]\lpar_{T}(\tilde{\lcof},h).
\end{align*}
Since for each $(\tilde{\lcof},h)\in\reals^{k+2}$,
\[
\lpar_{T}(\tilde{\lcof},h)=\begin{bmatrix}\vartheta_{T}^{-1}(h)\tilde{\lcof}\\
h
\end{bmatrix}\goesto\begin{bmatrix}\vartheta_{0}^{-1}\tilde{\lcof}\\
h
\end{bmatrix}=\begin{bmatrix}\vartheta_{0}^{-1}I_{k+1} & 0\\
0 & 1
\end{bmatrix}\begin{bmatrix}\tilde{\lcof}\\
h
\end{bmatrix}\eqdef M\tilde{\lpar}
\]
and ${\cal R}_{T}(\lpar)\inprob0$ uniformly on compacta, it follows
that
\[
\tilde{\ell}_{T}(\tilde{\lpar})-\tilde{\ell}_{T}(0)\indist{\cal S}^{\trans}(M\tilde{\lpar})+\tfrac{1}{2}(M\tilde{\lpar})^{\trans}{\cal H}(M\tilde{\lpar})
\]
in the sense of finite-dimensional convergence. Since both the left
and right hand sides of the preceding display are concave, the l.h.s.\ is
maximised at $(\vartheta_{T}(\hat{h}_{T})\hat{\lcof}_{T},\hat{h}_{T})$
and the r.h.s.\ almost surely has an unique maximum at $\tilde{\lpar}^{\ast}\defeq-M^{-1}{\cal H}^{-1}{\cal S}$,
it follows by Lemma~A in \citet{Kni89CJS} that
\[
M^{-1}\hat{\lpar}_{T}=\begin{bmatrix}\vartheta_{T}(\hat{h}_{T})\hat{\lcof}_{T}\\
\hat{h}_{T}
\end{bmatrix}+o_{p}(1)\indist\tilde{\lpar}^{\ast}=-M^{-1}{\cal H}^{-1}{\cal S}.\tag*{\qedsymbol}
\]

\section{Censored least absolute deviations}

\subsection{Proof of Theorem~\ref{thm:tobitLAD}}

The CLAD estimators $\hat{\rho}_{T}^{\lad}\defeq(\hat{\alpha}_{T}^{\lad},\hat{\beta}_{T}^{\lad},\hat{\vec{\phi}}{}_{T}^{\lad})$
are minimisers of
\begin{equation}
S_{T}(\rho)\defeq S_{T}(\alpha,\beta,\vec{\phi})\defeq\sum_{t=1}^{T}\smlabs{y_{t}-[\alpha+\beta y_{t-1}+\vec{\phi}^{\trans}\Delta\vec y_{t-1}]_{+}}\label{eq:ladcriterion}
\end{equation}
over $\Pi$. To economise on notation, in this section we shall generally
write $\hat{\alpha}_{T}^{\lad}$ as merely $\hat{\alpha}_{T}$, etc.
Our first result establishes that $\hat{\beta}_{T}$ concentrates
in a $O_{p}(T^{-1/2})$ neighbourhood of $\beta_{T,0}$; its proof
appears in Appendix~\ref{subsec:propproof} below.
\begin{prop}
\label{prop:consprelim}Suppose \ref{ass:INIT}--\ref{ass:JSR} and
\ref{ass:LAD} hold. Then $T^{1/2}(\hat{\beta}_{T}-\beta_{T,0})=O_{p}(1)$.
\end{prop}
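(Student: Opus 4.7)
\textit{Proof proposal.} The plan is to establish tightness of $T^{1/2}(\hat\beta_T - \beta_{T,0})$ by a coercivity argument on the centred CLAD criterion, exploiting the optimality inequality $S_T(\hat\rho_T) \leq S_T(\rho_{T,0})$. Introduce the uniformly-scaled local parameter $\tilde r := T^{1/2}(\rho - \rho_{T,0})$ and the centred objective
\[
\tilde S_T(\tilde r) := S_T(\rho_{T,0} + T^{-1/2}\tilde r) - S_T(\rho_{T,0}),
\]
so that $\tilde S_T(\hat{\tilde r}_T) \leq 0$, with the $\beta$-component of $\hat{\tilde r}_T$ equal to $T^{1/2}(\hat\beta_T - \beta_{T,0})$. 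By the interior-point condition in Assumption~\ref{ass:LAD}, for every fixed $M$ the ball $\{\|\tilde r\| \leq M\}$ is contained in $T^{1/2}(\Pi - \rho_{T,0})$ for all $T$ sufficiently large. A standard argmin-tightness device (cf.\ \citealp{Poll91ET}) then reduces the claim to showing: for every $\epsilon > 0$ there is $M < \infty$ such that $\Pr(\inf_{\|\tilde r\| \geq M} \tilde S_T(\tilde r) > 0) \geq 1 - \epsilon$ for all $T$ large.

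The central step is deriving a uniform quadratic-plus-linear lower bound. Via term-by-term application of Knight's identity to $|y_t - [x_t(\rho)]_+| - |y_t - [x_t(\rho_{T,0})]_+|$, I would obtain a decomposition $\tilde S_T(\tilde r) = L_T(\tilde r) + Q_T(\tilde r) + R_T(\tilde r)$, where: $L_T$ is a linear (martingale) term of order $O_p(\|\tilde r\|)$, handled by the same CLT machinery used for the MLE score in Lemma~\ref{lem:Sqt} (now applied to $\mathrm{sgn}(u_t)$ rather than $u_t$, using $\med(u_t) = 0$); $Q_T$ is a positive quadratic form which, upon conditioning on the past and using continuity and positivity of $f_u$ at zero (Assumption~\ref{ass:LAD}), is well approximated by $f_u(0)\tilde r^\trans \bigl(T^{-1}\sum_t w_{t-1}w_{t-1}^\trans\bigr)\tilde r$, where $w_{t-1} := (1, y_{t-1}, \Delta \vec y_{t-1}^\trans)^\trans$; and $R_T$ collects the residuals arising from censored observations $\{y_t = 0\}$ and from observations where the sign of $x_t(\cdot)$ flips between $\rho_{T,0}$ and $\rho$. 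The quadratic form is bounded below by $c_*\|\tilde r\|^2$ with probability approaching one by Lemma~\ref{lem:moments}\ref{enu:ZZ} (after appropriate rescaling of coordinates), which suffices since even in the $\beta$-direction the diagonal entry $T^{-1}\sum_t y_{t-1}^2$ is $O_p(T)$, more than enough for the $O_p(1)$ conclusion. The censored contribution to $R_T$ is controlled on compacta by $T^{-1}\sum_t \mathbb 1\{y_t=0\} = o_p(1)$ and $\sum_t \mathbb 1\{y_t=0\}|v_t|^m = o_p(T)$ (Lemma~\ref{lem:moments}\ref{enu:zeros} and \ref{enu:vtzero}); the sign-flip contribution, noting that flips occur only within $\{|x_t(\rho_{T,0})| \leq T^{-1/2}\|\tilde r\|\cdot\|w_{t-1}\|\}$, is $o_p(1)$ by Lemma~\ref{lem:integ}\ref{enu:indicbnd}.

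The principal technical obstacle is the non-convexity of $\tilde S_T$, which rules out the convexity-based uniform convergence arguments employed in the uncensored LAD literature (\citealp{Poll91ET}, \citealp{Herce96ET}). Each pointwise-in-$\tilde r$ estimate must therefore be upgraded to a uniform one over $\{\|\tilde r\| \leq M\}$ by independent means. I plan to exploit the coordinate-wise monotonicity of the sign-flip indicators $\mathbb 1\{x_t(\rho_{T,0}) + T^{-1/2}\tilde r^\trans w_{t-1} > 0\}$ in $\tilde r$ (for each fixed $t$), which enables a finite-cover/bracketing reduction of uniformity to control at a finite grid of $\tilde r$ values, and similarly for the conditional-expectation remainders in $Q_T$. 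Putting everything together yields $\tilde S_T(\tilde r) \geq c_*\|\tilde r\|^2 - O_p(\|\tilde r\|) - o_p(1 + \|\tilde r\|^2)$ uniformly in $\tilde r$ on compacta, which is positive for $\|\tilde r\|$ sufficiently large, completing the argument.
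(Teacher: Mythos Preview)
Your argument has a genuine gap at the step where you pass from ``uniformly on compacta'' to ``positive for $\|\tilde r\|$ sufficiently large''. With the uniform scaling $\tilde r = T^{1/2}(\rho-\rho_{T,0})$, the parameter set $T^{1/2}(\Pi-\rho_{T,0})$ has diameter of order $T^{1/2}$ in the $\alpha$- and $\vec\phi$-directions (since $\Pi$ is merely compact), so the minimiser $\hat{\tilde r}_T$ is a priori allowed to wander over an unbounded set. Your quadratic lower bound $Q_T(\tilde r)\gtrsim c_*\|\tilde r\|^2$ holds only on fixed compacta, because it rests on the second-order approximation $\int_0^h[F_u(s)-F_u(0)]\diff s\approx \tfrac{1}{2}f_u(0)h^2$, which breaks down once $h=T^{-1/2}\tilde r^\trans w_{t-1}$ is of order one---as occurs when $\|\tilde r\|\sim T^{1/2}$. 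At that scale the integral grows only linearly in $|h|$, and since the martingale term $L_T$ is also of that order, you cannot conclude positivity. In the uncensored LAD setting this is precisely where convexity would rescue the argument (Pollard, Hjort--Pollard), but as you correctly note that route is closed here, and your proposed monotonicity/bracketing device addresses only the upgrade from pointwise to compacta, not the extension to the unbounded tail.

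The paper sidesteps this by proving something weaker at this stage: Proposition~\ref{prop:consprelim} concerns \emph{only} the $\beta$-coordinate, and the argument uses a crude $L^1$-type lower bound that holds globally over $\Pi$, with no quadratic expansion at all. Writing $h_{t-1}(\rho)=[x_{t-1}(\rho)]_+-[x_{t-1}(\rho_{T,0})]_+$ and $w_t=y_t-[x_{t-1}(\rho_{T,0})]_+$, one has the elementary bound $|w_t - h_{t-1}(\rho)| - |w_t| \geq |u_t| - 2|u_t|\indic\{|h_{t-1}(\rho)|\leq 3|u_t|\}$. Summing, Cauchy--Schwarz, and the LLN reduce the claim to showing that $T^{-1}\sum_t\indic\{|h_{t-1}(\rho)|>3|u_t|\}\to_p 1$ uniformly over $\{\rho\in\Pi:|\beta-\beta_{T,0}|\geq T^{-1/2}\kappa_T\}$, for every $\kappa_T\to\infty$. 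This holds because $y_{t-1}=O_p(T^{1/2})$, so $|(\beta-\beta_{T,0})y_{t-1}|\gtrsim\kappa_T$ dominates both $|u_t|$ and the bounded contributions from $(\alpha,\vec\phi)$, regardless of where the latter lie in $\Pi$. Only after this preliminary rate for $\beta$ is secured does the paper reparametrize to the compact $\varpi$-space and deploy the quadratic-plus-linear machinery you describe; that is essentially the content of Propositions~\ref{prop:centring}--\ref{prop:mg} and the shell argument in the proof of Theorem~\ref{thm:tobitLAD}.
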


In view of the preceding, it will be convenient to work with the following
reparametrisation of the model

\[
\varpi\defeq\begin{bmatrix}\alpha-\alpha_{T,0}\\
\delta\\
\vec{\phi}-\vec{\phi}_{0}
\end{bmatrix}\defeq\begin{bmatrix}\alpha-\alpha_{T,0}\\
T^{1/2}(\beta-\beta_{T,0})\\
\vec{\phi}-\vec{\phi}_{0}
\end{bmatrix}.
\]
Since Proposition~\ref{prop:consprelim} implies that for any $\epsilon>0$
we may choose $M<\infty$ such that $\liminf_{T\goesto\infty}\Prob\{\smlabs{\hat{\delta}_{T}}\leq M\}>1-\epsilon/2$,
we shall henceforth treat the parameter space $\bar{\Pi}$ for $\varpi$
as though it were compact. (Recall that the parameter spaces for $\alpha$
and $\vec{\phi}$ are compact under Assumption~\ref{ass:LAD}.) Similarly
to Appendix~\ref{sec:mlproof}, we define the local parameters
\begin{align*}
\lcof\defeq\begin{bmatrix}a\\
c\\
\vec f
\end{bmatrix} & =\begin{bmatrix}T^{1/2} & 0 & 0\\
0 & T & 0\\
0 & 0 & T^{1/2}
\end{bmatrix}\begin{bmatrix}\alpha-\alpha_{T,0}\\
\beta-\beta_{T,0}\\
\vec{\phi}-\vec{\phi}_{0}
\end{bmatrix}=T^{1/2}\begin{bmatrix}\alpha-\alpha_{T,0}\\
\delta\\
\vec{\phi}-\vec{\phi}_{0}
\end{bmatrix}=T^{1/2}\varpi,
\end{align*}
and denote the systematic part of the r.h.s.\ of the model (\ref{eq:tobitark}),
evaluated at the true parameters, as
\begin{equation}\label{eq:xtm1again}
x_{t-1}\defeq\alpha_{T,0}+\beta_{T,0}y_{t-1}+\vec{\phi}_{0}^{\trans}\Delta\vec y_{t-1}.
\end{equation}
We may then rewrite the part of the LAD criterion function (\ref{eq:ladcriterion})
that depends on $\rho=(\alpha,\beta,\vec{\phi}^{\trans})^{\trans}$
in terms of the local parameters $\varpi$ or $\lcof$ as
\[
\alpha+\beta y_{t-1}+\vec{\phi}^{\trans}\Delta\vec y_{t-1}=x_{t-1}+\varpi^{\trans}\Z_{t-1,T}=x_{t-1}+\lcof^{\trans}z_{t-1,T},
\]
where, recalling \eqref{eq:ztm1} above,
\begin{align*}
z_{t-1,T} & =\begin{bmatrix}T^{-1/2}\\
T^{-1}y_{t-1}\\
T^{-1/2}\Delta\vec y_{t-1}
\end{bmatrix}, & \mathcal{Z}_{t-1,T} & \defeq\begin{bmatrix}1\\
T^{-1/2}y_{t-1}\\
\Delta\vec y_{t-1}
\end{bmatrix}=T^{1/2}z_{t-1,T}.
\end{align*}

To establish the rate of convergence and thence the limiting distribution
of the CLAD estimator, we will need to consider (appropriately scaled)
counterparts of the CLAD criterion in terms of both $\varpi$ and $\lcof$,
defined here (making a slight abuse of notation in reusing $S_{T}$)
as
\begin{align*}
S_{T}(\varpi) & \defeq\frac{1}{T}\sum_{t=1}^{T}\smlabs{y_{t}-[x_{t-1}+\varpi^{\trans}\Z_{t-1,T}]_{+}}, & \S_{T}(\lcof) & \defeq\sum_{t=1}^{T}\smlabs{y_{t}-[x_{t-1}+\lcof^{\trans}z_{t-1,T}]_{+}}.
\end{align*}
The first of these will be used to establish the rate of convergence, while the second will help us to derive the limiting distribution.
Because $\{y_{t}\}$ is nonstationary, the appropriate centring for
each of these functions is given not by their unconditional expectations,
but instead by
\begin{align*}
\bar{S}_{T}(\varpi) & \defeq\frac{1}{T}\sum_{t=1}^{T}\expect_{t-1}\smlabs{y_{t}-[x_{t-1}+\varpi^{\trans}\Z_{t-1,T}]_{+}}\\
\bar{\S}_{T}(\lcof) & \defeq\sum_{t=1}^{T}\expect_{t-1}\smlabs{y_{t}-[x_{t-1}+\lcof^{\trans}z_{t-1,T}]_{+}}.
\end{align*}
Define the associated recentred criterion functions
\begin{align*}
U_{T}(\varpi) & \defeq S_{T}(\varpi)-\bar{S}_{T}(\varpi), & \U_{T}(\lcof) & \defeq\S_{T}(\lcof)-\bar{\S}_{T}(\lcof),
\end{align*}
and let $\{e_{t}\}$ denote the (bounded) martingale difference sequence
defined by
\begin{equation}
e_{t}\defeq\sgn([x_{t-1}+u_{t}]_{+}-[x_{t-1}]_{+})-\expect_{t-1}\sgn([x_{t-1}+u_{t}]_{+}-[x_{t-1}]_{+}).\label{eq:emds}
\end{equation}

Our main results on the asymptotics of these functions are the following.
\begin{prop}
\label{prop:centring}Suppose \ref{ass:INIT}--\ref{ass:JSR} and
\ref{ass:LAD} hold.
\begin{enumerate}
\item \label{enu:consistency} For every $\epsilon,\delta>0$, there exists
an $\eta>0$ such that
\[
\liminf_{T\goesto\infty}\Prob\left\{ \inf_{\smlnorm{\varpi}\geq\delta}[\bar{S}_{T}(\varpi)-\bar{S}_{T}(0)]>\eta\right\} \geq1-\epsilon.
\]
\item \label{enu:centrerate}For every $\epsilon>0$, there exist $\delta,\eta>0$
such that
\begin{equation}
\liminf_{T\goesto\infty}\Prob\{\bar{S}_{T}(\varpi)-\bar{S}_{T}(0)\geq\eta\smlnorm{\varpi}^{2}\sep\forall\smlnorm{\varpi}\leq\delta\}\geq1-\epsilon.\label{eq:minorise}
\end{equation}
\item \label{enu:centrelim}Uniformly on compact subsets of $\reals^{k+1}$, for $Q_{ZZ}$ defined in Lemma~\ref{lem:moments}\ref{enu:ZZ},
\[
\bar{\S}_{T}(\lcof)-\bar{\S}_{T}(0)\indist f_{u}(0)\lcof^{\trans}Q_{ZZ}\lcof.
\]
\end{enumerate}
\end{prop}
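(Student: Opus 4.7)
The plan is to reduce all three parts to a careful analysis of the per-observation centring function
\begin{equation*}
\psi_t(c) \defeq \expect_{t-1}\smlabs{y_t - [x_{t-1} + c]_+} - \expect_{t-1}\smlabs{y_t - [x_{t-1}]_+}.
\end{equation*}
When $x_{t-1}\geq 0$, the conditional median $\med_{t-1}(y_t)$ equals $[x_{t-1}]_+ = x_{t-1}$; splitting the range of integration in $u_t$ at $-x_{t-1}$ and using $y_t=[x_{t-1}+u_t]_+$, I expect to obtain the clean identity
\begin{equation*}
\psi_t(c) = \begin{cases}g(c) & \text{if } c\geq -x_{t-1},\\ g(-x_{t-1}) & \text{if } c<-x_{t-1},\end{cases}
\qquad g(c)\defeq \expect\smlabs{u_t-c}-\expect\smlabs{u_t}.
\end{equation*}
Since $\med(u_t)=0$, $g$ is convex with $g(0)=g'(0)=0$, and under \ref{ass:LAD} the boundedness of $f_u$ and $f_u'$ gives $g(c)=f_u(0)c^2+O(\smlabs{c}^3)$ uniformly near zero, together with a global subquadratic lower bound $g(c)\geq c_1\min\{c^2,\smlabs{c}\}$. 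An analogous but weaker bound will suffice when $x_{t-1}<0$ because such indices are sparse by Lemma~\ref{lem:integ}\ref{enu:indicbnd}. Establishing this identity and extracting its quantitative consequences is the main technical lever.

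For part~\ref{enu:centrelim}, $\max_t\smlabs{\lcof^\trans z_{t-1,T}}=o_p(1)$ by Lemma~\ref{lem:moments}\ref{enu:max}, so for every $t$ with $x_{t-1}\geq M$ the condition $\lcof^\trans z_{t-1,T}\geq -x_{t-1}$ holds w.p.a.1, and a Taylor expansion yields $\psi_t(\lcof^\trans z_{t-1,T})=f_u(0)(\lcof^\trans z_{t-1,T})^2+R_t$ with $\sum_t\smlabs{R_t}\leq C\smlnorm{\lcof}^3\max_t\smlnorm{z_{t-1,T}}\cdot\sum_t\smlnorm{z_{t-1,T}}^2=o_p(1)$. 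The remaining indices contribute negligibly: Cauchy--Schwarz combined with Lemmas~\ref{lem:integ}\ref{enu:indicbnd} and \ref{lem:moments}\ref{enu:fourthz} yields
\begin{equation*}
\sum_{t:x_{t-1}<M}\smlabs{\lcof^\trans z_{t-1,T}}^2
\leq \smlnorm{\lcof}^2\biggl(\sum_t\indic\{x_{t-1}<M\}\biggr)^{1/2}\biggl(\sum_t\smlnorm{z_{t-1,T}}^4\biggr)^{1/2}=o_p(1),
\end{equation*}
which also bounds their direct contribution to $\sum_t\psi_t(\lcof^\trans z_{t-1,T})$ via the Lipschitz estimate $\smlabs{\psi_t(c)}\leq\smlabs{c}$. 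The leading term then equals $f_u(0)\lcof^\trans\sum_t z_{t-1,T}z_{t-1,T}^\trans\lcof\indist f_u(0)\lcof^\trans Q_{ZZ}\lcof$ by Lemma~\ref{lem:moments}\ref{enu:ZZ}; uniformity on compacta follows from continuity of the limit and the Lipschitz bound $\smlabs{\psi_t(c_1)-\psi_t(c_2)}\leq\smlabs{c_1-c_2}$ applied on a finite $\epsilon$-net.

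Parts~\ref{enu:consistency} and \ref{enu:centrerate} follow from the same per-$t$ bounds at the original (rather than local) scale. For part~\ref{enu:centrerate}, choosing $\delta$ small enough, the quadratic lower bound gives $\psi_t(\varpi^\trans\Z_{t-1,T})\geq\tfrac12 f_u(0)(\varpi^\trans\Z_{t-1,T})^2$ for $\smlnorm{\varpi}\leq\delta$ on $\{x_{t-1}\geq M\}$; dividing by $T$ and invoking Lemma~\ref{lem:moments}\ref{enu:ZZ} yields a lower bound approaching $\tfrac12 f_u(0)\varpi^\trans Q_{ZZ}\varpi\geq\tfrac12 f_u(0)\lambda_{\min}(Q_{ZZ})\smlnorm{\varpi}^2$. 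For part~\ref{enu:consistency}, the global bound $g(c)\geq c_1\min\{c^2,\smlabs{c}\}$, together with the positive definiteness of $Q_{ZZ}$ and compactness of $\bar\Pi$, produces a continuous and coercive limiting functional of $\varpi$ that is strictly positive for $\varpi\neq 0$, and hence bounded below by a positive constant on $\{\smlnorm{\varpi}\geq\delta\}\cap\bar\Pi$.

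The main obstacle throughout is the treatment of indices where $x_{t-1}$ is small or negative: there the clean identity for $\psi_t$ collapses and the quadratic control fails. Lemma~\ref{lem:integ}\ref{enu:indicbnd} supplies the appropriate density bound, but coupling it with the fourth-moment estimate of Lemma~\ref{lem:moments}\ref{enu:fourthz} via Cauchy--Schwarz — crucially relying on $\delta_u>2$ to secure the required moments — is what keeps these exceptional contributions $o_p(1)$.
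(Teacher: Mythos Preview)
Your strategy is the same as the paper's: both compute the per-observation centring $\psi_t(c)$, extract the quadratic $f_u(0)c^2$ via Taylor expansion, and dispose of small-$x_{t-1}$ indices by combining Lemma~\ref{lem:integ}\ref{enu:indicbnd} with Cauchy--Schwarz against Lemma~\ref{lem:moments}\ref{enu:fourthz}. Your identity $\psi_t(c)=g(c)$ for $x_{t-1}\geq0$, $c\geq-x_{t-1}$ is exactly the paper's Lemma~\ref{lem:ladsummands} specialised to $[x_{t-1}]_-=0$; the paper's more general decomposition \eqref{eq:laddecmp}, together with the bound \eqref{eq:firstbnd}, is what supplies the ``analogous but weaker'' control you defer for $x_{t-1}<0$.

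Two points need repair. First, in part~\ref{enu:centrelim} the Lipschitz estimate $\smlabs{\psi_t(c)}\leq\smlabs c$ does \emph{not} bound the contribution of $\{x_{t-1}<M\}$: it yields only $\sum_{x_{t-1}<M}\smlabs{\lcof^\trans z_{t-1,T}}=o_p(T^{1/2})$. What you need is the quadratic upper bound $0\leq\psi_t(c)\leq Cc^2$, valid for all $x_{t-1}$ via \eqref{eq:firstbnd}--\eqref{eq:secondbnd}, after which your Cauchy--Schwarz display applies as written.

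Second, and more substantively, in parts~\ref{enu:consistency}--\ref{enu:centrerate} the increments $c_t=\varpi^\trans\Z_{t-1,T}$ are \emph{not} uniformly small: $\max_t\smlnorm{\Z_{t-1,T}}=O_p(T^{1/(2+\delta_u)})$ by Lemma~\ref{lem:moments}\ref{enu:max}. Hence on $\{x_{t-1}\geq M\}$ with $M$ fixed, both $c_t<-x_{t-1}$ (where $\psi_t(c_t)=g(-x_{t-1})\neq g(c_t)$) and large $\smlabs{c_t}$ (where the cubic remainder dominates) can occur, so your per-$t$ inequality $\psi_t(c_t)\geq\tfrac12 f_u(0)c_t^2$ fails as stated. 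The paper fixes part~\ref{enu:centrerate} by carrying the cubic remainder globally as $\smlnorm\varpi^3\cdot T^{-1}\sum_t\smlnorm{\Z_{t-1,T}}^3=\smlnorm\varpi^3 O_p(1)$ and restricting to a $T$-dependent good set $\{x_{t-1}>T^\kappa,\ \smlnorm{\Z_{t-1,T}}\leq T^\kappa\}$ on which $\ell_{t-1}(c_t)=c_t$ exactly. For part~\ref{enu:consistency}, there is no ``limiting functional'' of $\varpi$ in the usual sense (the $\Delta\vec y_{t-1}$ component of $\Z_{t-1,T}$ has no weak limit), so the paper instead uses the cruder bound $g(c)\geq\gamma_\tau\indic\{\smlabs c\geq\tau\}$ and lower-bounds $T^{-1}\sum_t\indic\{\smlabs{\varpi^\trans\Z_{t-1,T}}\geq\tau\}$ via H\"older against $\lambda_{\min}$ of the sample design matrix.
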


\begin{prop}
\label{prop:mg} Suppose \ref{ass:INIT}--\ref{ass:JSR} and \ref{ass:LAD}
hold. Define $Z_{t,T}\defeq\sum_{s=0}^{t}\smlnorm{z_{s,T}}^{2}$.
\begin{enumerate}
\item \label{enu:serate}For every $\epsilon>0$, there exist $C_{\epsilon},K_{\epsilon}<\infty$
such that $\Prob\{Z_{T-1,T}>K_{\epsilon}\}<\epsilon$, and
\begin{equation}
\Prob\left\{ \sup_{\smlnorm{\varpi}\leq\delta}T^{1/2}\smlabs{U_{T}(\varpi)-U_{T}(0)}\geq\kappa\sep Z_{T-1,T}\leq K_{\epsilon}\right\} <\frac{C_{\epsilon}\delta}{\kappa}\label{eq:ratebnd}
\end{equation}
for all $\delta,\kappa>0$, for all $T$ sufficiently large.
\item \label{enu:selim}$\{\U_{T}(\lcof)\}$ is stochastically equicontinuous
on $\reals^{k+1}$.
\item \label{enu:mglimit} For each $\lcof=(a,c,\vec f^{\trans})^{\trans}\in\reals^{k+1}$,
\begin{align*}
\U_{T}(\lcof)-\U_{T}(0) & =-\lcof^{\trans}\sum_{t=1}^{T}z_{t-1,T}e_{t}+o_{p}(1)\\
 & \indist-\lcof^{\trans}\zeta=-\int_{0}^{1}[a+cY(\tau)]\diff \widetilde{W}(\tau)-\vec f^{\trans}\xi
\end{align*}
where
\[
\zeta\defeq\begin{bmatrix}\widetilde{W}(1) & \int_{0}^{1}Y(\tau)\diff \widetilde{W}(\tau) & \xi^{\trans}\end{bmatrix}^{\trans}
\]
for $(\sigma_0 W,\widetilde{W})$ a bivariate Brownian motion with covariance matrix
\[
\mathbb{V}\begin{bmatrix}\sigma_0 W(1)\\
\widetilde{W}(1)
\end{bmatrix}=\begin{bmatrix}\expect u_{t}^{2} & \expect\smlabs{u_{t}}\\
\expect\smlabs{u_{t}} & 1
\end{bmatrix},
\]
and $\xi\distrib\mathcal{N}[0,\Omega]$ independent of $(W,\widetilde{W})$.
\end{enumerate}
\end{prop}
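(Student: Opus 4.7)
The proof of Proposition~\ref{prop:mg} has a common backbone. Define
\[
\eta_{t}(\varpi)\defeq\smlabs{y_{t}-[x_{t-1}+\varpi^{\trans}\Z_{t-1,T}]_{+}}-\expect_{t-1}\smlabs{y_{t}-[x_{t-1}+\varpi^{\trans}\Z_{t-1,T}]_{+}},
\]
so that $\{\eta_{t}(\varpi)\}$ is a martingale difference sequence for each $\varpi$, and the elementary bound $\smlabs{[a]_{+}-[b]_{+}}\leq\smlabs{a-b}$ gives $\smlabs{\eta_{t}(\varpi)-\eta_{t}(\varpi^{\prime})}\leq4\smlnorm{\Z_{t-1,T}}\smlnorm{\varpi-\varpi^{\prime}}$. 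Since $T[U_{T}(\varpi)-U_{T}(0)]=\U_{T}(T^{1/2}\varpi)-\U_{T}(0)=\sum_{t}[\eta_{t}(\varpi)-\eta_{t}(0)]$ is itself a martingale, its conditional variance is bounded by $C\smlnorm{\varpi}^{2}\sum_{t}\smlnorm{\Z_{t-1,T}}^{2}=C\smlnorm{\varpi}^{2}TZ_{T-1,T}$.

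Parts \ref{enu:serate} and \ref{enu:selim} follow from Markov's inequality combined with a covering argument. For \ref{enu:serate}, on $\{Z_{T-1,T}\leq K_{\epsilon}\}$ the conditional variance of $T^{1/2}[U_{T}(\varpi)-U_{T}(0)]$ is $O(K_{\epsilon}\smlnorm{\varpi}^{2})$; combined with the Lipschitz oscillation bound, a standard bracketing argument over $\{\smlnorm{\varpi}\leq\delta\}$ yields the expected-supremum bound $O(K_{\epsilon}^{1/2}\delta)$, whence (\ref{eq:ratebnd}) follows. For \ref{enu:selim}, the same variance estimate rewritten on the $\lcof$-scale gives an $L^{2}$-modulus of continuity of order $\smlnorm{\lcof-\lcof^{\prime}}Z_{T-1,T}^{1/2}=O_{p}(\smlnorm{\lcof-\lcof^{\prime}})$, from which stochastic equicontinuity on compact subsets of $\reals^{k+1}$ follows by standard empirical-process arguments.

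For part \ref{enu:mglimit}, the central device is Knight's identity
\[
\smlabs{a-b}-\smlabs{a}=-b\sgn(a)+2\int_{0}^{b}[\indic\{a\leq s\}-\indic\{a\leq0\}]\diff s,
\]
applied with $a=A_{t}\defeq y_{t}-[x_{t-1}]_{+}$ and $b=B_{t}(\lcof)\defeq[x_{t-1}+\lcof^{\trans}z_{t-1,T}]_{+}-[x_{t-1}]_{+}$. Because $B_{t}(\lcof)$ is $\filt_{t-1}$-measurable and $\sgn(A_{t})=\sgn([x_{t-1}+u_{t}]_{+}-[x_{t-1}]_{+})$, centring the resulting expansion via $\expect_{t-1}$ yields
\[
\U_{T}(\lcof)-\U_{T}(0)=-\sum_{t=1}^{T}B_{t}(\lcof)e_{t}+2\sum_{t=1}^{T}\bigl\{R_{t}(\lcof)-\expect_{t-1}R_{t}(\lcof)\bigr\},
\]
where $R_{t}(\lcof)\defeq\int_{0}^{B_{t}(\lcof)}[\indic\{A_{t}\leq s\}-\indic\{A_{t}\leq0\}]\diff s$. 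Since $\smlabs{R_{t}}\leq\smlabs{B_{t}}\indic\{0<A_{t}\leq B_{t}\}$ and $f_{u}$ is bounded, the conditional variance of the second martingale is at most $C\sum_{t}\smlabs{B_{t}(\lcof)}^{3}\leq C\smlnorm{\lcof}^{3}\max_{t}\smlnorm{z_{t-1,T}}\sum_{t}\smlnorm{z_{t-1,T}}^{2}=o_{p}(1)$, by parts \ref{enu:ZZ} and \ref{enu:max} of Lemma~\ref{lem:moments}, so this term vanishes in probability.

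The main technical obstacle is showing that $\sum_{t}[B_{t}(\lcof)-\lcof^{\trans}z_{t-1,T}]e_{t}=o_{p}(1)$. A case analysis on the signs of $x_{t-1}$ and $x_{t-1}+\lcof^{\trans}z_{t-1,T}$ shows that $\smlabs{B_{t}(\lcof)-\lcof^{\trans}z_{t-1,T}}\leq\smlabs{\lcof^{\trans}z_{t-1,T}}$ and vanishes unless $x_{t-1}\leq M_{T}\defeq\smlnorm{\lcof}\max_{t}\smlnorm{z_{t-1,T}}=o_{p}(1)$ (by Lemma~\ref{lem:moments}\ref{enu:max}). Hence by Cauchy--Schwarz,
\[
\sum_{t}\bigl(B_{t}(\lcof)-\lcof^{\trans}z_{t-1,T}\bigr)^{2}\leq\smlnorm{\lcof}^{2}\biggl(\sum_{t}\smlnorm{z_{t-1,T}}^{4}\biggr)^{1/2}\biggl(\sum_{t}\indic\{x_{t-1}\leq M_{T}\}\biggr)^{1/2},
\]
where the first factor is $O_{p}(T^{-1/2})$ by Lemma~\ref{lem:moments}\ref{enu:fourthz} (which is precisely where $\delta_{u}>2$ is used) and the second is $o_{p}(T^{1/2})$ by Lemma~\ref{lem:integ}\ref{enu:indicbnd}, so the product is $o_{p}(1)$. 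The conditional variance of the MDS $\sum_{t}[B_{t}-\lcof^{\trans}z_{t-1,T}]e_{t}$ is therefore $o_{p}(1)$, and the sum itself vanishes by Chebyshev. The stated weak convergence then follows from a joint martingale CLT for $\sum z_{t-1,T}e_{t}$ (delivering the components $\widetilde{W}(1)$, $\int_{0}^{1}Y\diff\widetilde{W}$, and $\xi$) together with the FCLT for $T^{-1/2}\sum u_{t}\indist\sigma_{0}W$, paralleling the argument in the proof of Proposition~\ref{prop:mlescorehess}\ref{enu:score}; the off-diagonal entry $\expect\smlabs{u_{t}}$ in $\Sigma$ arises from $\expect_{t-1}u_{t}e_{t}=\expect u_{t}\sgn(u_{t})=\expect\smlabs{u_{t}}$ in the uncensored regime.
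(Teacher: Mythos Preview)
Your treatment of part~\ref{enu:mglimit} is essentially the same as the paper's: Knight's identity (the paper's (\ref{eq:herce})), show the integral remainder is negligible via its conditional variance being $O_p(\sum\smlnorm{z_{t-1,T}}^{3})=o_p(1)$, replace $B_t(\lcof)$ by $\lcof^{\trans}z_{t-1,T}$ using the Cauchy--Schwarz/indicator argument you describe, and finish with a joint martingale FCLT. One point you glossed over: Knight's identity as you state it is only valid when $a\neq0$, and here $A_t=y_t-[x_{t-1}]_+$ has a point mass at zero whenever $x_{t-1}\leq0$ (since then $A_t=[x_{t-1}+u_t]_+$, which vanishes with positive probability). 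The paper's identity (\ref{eq:herce}) carries the extra term $\indic\{A_t=0\}\smlabs{B_t(\lcof)}$; its martingale contribution has conditional variance $\leq\smlnorm{\lcof}^{2}\sum\smlnorm{z_{t-1,T}}^{2}\indic\{x_{t-1}\leq0\}=o_p(1)$, so it is harmless, but it has to be acknowledged.

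The real gap is in parts \ref{enu:serate} and \ref{enu:selim}. You write that ``on $\{Z_{T-1,T}\leq K_\epsilon\}$ the conditional variance is $O(K_\epsilon\smlnorm{\varpi}^2)$'' and then invoke a bracketing/$L^2$ chaining argument. But $\{Z_{T-1,T}\leq K_\epsilon\}$ is $\filt_{T-1}$-measurable, not $\filt_{t-1}$-measurable for $t<T$, so restricting to this event destroys the martingale structure: you cannot bound $\expect\bigl[\sum_t(\eta_t(\varpi)-\eta_t(\varpi'))\bigr]^2$ by the sum of conditional variances on that event. The paper fixes this by introducing the stopping time $\varsigma_T\defeq\inf\{t:Z_{t,T}>K_\epsilon\}$ and working with the \emph{stopped} martingale $N_{t,T}(\varpi)\defeq M_{t\wedge\varsigma_T,T}(\varpi)$; this yields a \emph{deterministic} bound $\sum_{t=1}^{T}[\Delta N_{t,T}(\varpi)-\Delta N_{t,T}(\tilde\varpi)]^2\leq4K_\epsilon\smlnorm{\varpi-\tilde\varpi}^2$, which then feeds into Burkholder's inequality in $L^{2j}$ for $2j>k+1$, and finally into a chaining bound (Corollary~2.2.5 of \citet{VVW96} for \ref{enu:serate}, Kolmogorov's continuity criterion for \ref{enu:selim}). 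Since $\{Z_{T-1,T}\leq K_\epsilon\}\subset\{\varsigma_T\geq T\}\subset\{N_{T,T}\equiv M_{T,T}\}$, the bound on the stopped process delivers (\ref{eq:ratebnd}). Without the stopping-time device your $L^2$ estimate is not available, and the covering argument does not go through.
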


We may now finally proceed with the proof.
\begin{proof}[Proof of Theorem~\ref{thm:tobitLAD}]
 We want to derive the limiting distribution of
\[
D_{\lcof,T}(\hat{\rho}_{T}-\rho_{T,0})=T^{1/2}\hat{\varpi}_{T}=\hat{\lcof}_{T}.
\]
By definition, $\hat{\lcof}_{T}$ minimises
\begin{equation}
\S_{T}(\lcof)-\S_{T}(0)=[\bar{\S}_{T}(\lcof)-\bar{\S}_{T}(0)]+[\U_{T}(\lcof)-\U_{T}(0)]\indist f_{u}(0)\lcof^{\trans}Q_{ZZ}\lcof-\lcof^{\trans}\zeta\label{eq:critcvg}
\end{equation}
where the convergence holds on $\ucc(\reals^{k+1})$, by Propositions~\ref{prop:centring}\ref{enu:centrelim},
\ref{prop:mg}\ref{enu:selim} and \ref{prop:mg}\ref{enu:mglimit}.
The r.h.s.\ defines a continuous function of $\lcof$, which since
$Q_{ZZ}$ is a.s.\ positive definite (by Lemma~\ref{lem:moments}\ref{enu:ZZ})
also has a unique minimum a.s. Therefore, once we have shown below
that
\begin{equation}
\hat{\lcof}_{T}=O_{p}(1)\label{eq:rtight}
\end{equation}
it will follow by the argmax continuous mapping theorem (Theorem~3.2.2
in \citealp{VVW96}) that $\hat{\lcof}_{T}$ converges in distribution
to the minimiser of the r.h.s.\ of (\ref{eq:critcvg}), i.e.
\[
\hat{\lcof}_{T}\indist\frac{1}{2f_{u}(0)}Q_{ZZ}^{-1}\zeta.
\]

It remains to show (\ref{eq:rtight}): equivalently, that $T^{1/2}\hat{\varpi}_{T}=O_{p}(1)$.
Although we cannot apply their result directly, the argument used
here closely follows the proof of Theorem~3.2.5 in \citet{VVW96}.
Let $\epsilon>0$, and note that, for $Z_{T-1,T}$ as appears in Proposition~\ref{prop:mg}\ref{enu:serate},
we may choose $K$ such that for every $L>0$,
\begin{align}
\Prob\{T^{1/2}\smlnorm{\hat{\varpi}_{T}}>L\} & \leq\Prob\{T^{1/2}\smlnorm{\hat{\varpi}_{T}}>L\sep Z_{T-1,T}\leq K\}+\Prob\{Z_{T-1,T}> K\}\nonumber \\
 & \leq\Prob\{T^{1/2}\smlnorm{\hat{\varpi}_{T}}>L\sep Z_{T-1,T}\leq K\}+\epsilon\label{eq:ratesplit}
\end{align}
for all $T$ sufficiently large. Further, by defining the `shells'
\[
\bar{\Pi}_{j,T}\defeq\{\varpi\in\bar{\Pi}\mid2^{j}<T^{1/2}\smlnorm{\varpi}\leq2^{j+1}\}
\]
for $j\in\naturals$, we obtain for any $M\in\naturals$ and $\gamma>0$
that
\begin{multline}
\Prob\{T^{1/2}\smlnorm{\hat{\varpi}_{T}}>2^{M}\sep Z_{T-1,T}\leq K\}\\
\leq\Prob\left(\Union_{\substack{j\geq M\\
2^{j}\leq T^{1/2}\gamma
}
}\{\hat{\varpi}_{T}\in\bar{\Pi}_{j,T}\sep Z_{T-1,T}\leq K\}\right)+\Prob\{\smlnorm{\hat{\varpi}_{T}}\geq\gamma\sep Z_{T-1,T}\leq K\}.\label{eq:shellbound}
\end{multline}
We will now show that the r.h.s.\ can be made arbitrarily small,
for all $T$ sufficiently large, by choice of $M$.

Regarding the final r.h.s.\ probability in (\ref{eq:shellbound}),
we note that if $\smlnorm{\hat{\varpi}_{T}}\geq\gamma$, then we must
have the first inequality in
\[
0\geq\inf_{\smlnorm{\varpi}\geq\gamma}[S_{T}(\varpi)-S_{T}(0)]\geq\inf_{\smlnorm{\varpi}\geq\gamma}[\bar{S}_{T}(\varpi)-\bar{S}_{T}(0)]+\inf_{\smlnorm{\varpi}\geq\gamma}[U_{T}(\varpi)-U_{T}(0)]
\]
where the second inequality follows from the decomposition $S_{T}=\bar{S}_{T}+U_{T}$.
This further implies the first inequality in
\[
\inf_{\smlnorm{\varpi}\geq\gamma}[\bar{S}_{T}(\varpi)-\bar{S}_{T}(0)]\leq-\inf_{\smlnorm{\varpi}\geq\gamma}[U_{T}(\varpi)-U_{T}(0)]\leq\sup_{\varpi\in\Pi}\smlabs{U_{T}(\varpi)-U_{T}(0)}
\]
whence
\begin{multline*}
\Prob\{\smlnorm{\hat{\varpi}_{T}}\geq\gamma\sep Z_{T-1,T}\leq K\}\\
\leq\Prob\left\{ \sup_{\varpi\in\Pi}\smlabs{U_{T}(\varpi)-U_{T}(0)}\geq\inf_{\smlnorm{\varpi}\geq\gamma}[\bar{S}_{T}(\varpi)-\bar{S}_{T}(0)]\sep Z_{T-1,T}\leq K\right\} .
\end{multline*}
In view of Proposition~\ref{prop:centring}\ref{enu:consistency},
we may choose $\eta>0$ such that
\[
\Prob\left\{ \inf_{\smlnorm{\varpi}\geq\gamma}[\bar{S}_{T}(\varpi)-\bar{S}_{T}(0)]>\eta\right\} >1-\epsilon/2
\]
for all $T$ sufficiently large. Thus
\begin{align*}
\Prob\{\smlnorm{\hat{\varpi}_{T}}\geq\gamma\sep Z_{T-1,T}\leq K\} & \leq\Prob\left\{ \sup_{\varpi\in\Pi}\smlabs{U_{T}(\varpi)-U_{T}(0)}\geq\eta\sep Z_{T-1,T}\leq K\right\} \\
 & \qquad\qquad\qquad\qquad\qquad+\Prob\left\{ \inf_{\smlnorm{\varpi}\geq\gamma}[\bar{S}_{T}(\varpi)-\bar{S}_{T}(0)]\leq\eta\right\} \\
 & \leq\Prob\left\{ \sup_{\varpi\in\Pi}\smlabs{U_{T}(\varpi)-U_{T}(0)}\geq\eta\sep Z_{T-1,T}\leq K\right\} +\epsilon/2
\end{align*}
for all $T$ sufficiently large. Finally, taking $\kappa=T^{1/2}\eta$
in Proposition~\ref{prop:mg}\ref{enu:serate}, and $\delta$ sufficiently
large that the $\delta$-ball centred at zero contains the whole of
$\bar{\Pi}$, we obtain that for some $C$ depending only on $\epsilon$,
\[
\Prob\left\{ \sup_{\varpi\in\Pi}\smlabs{U_{T}(\varpi)-U_{T}(0)}\geq\eta\sep Z_{T-1,T}\leq K\right\} <\frac{C\delta}{T^{1/2}\eta}\goesto0
\]
as $T\goesto\infty$. Deduce that
\begin{equation}
\limsup_{T\goesto\infty}\Prob\{\smlnorm{\hat{\varpi}_{T}}\geq\gamma\sep Z_{T-1,T}\leq K\}<\epsilon.\label{eq:consist}
\end{equation}

We next consider the events in the union on the r.h.s.\ of (\ref{eq:shellbound}).
Observe that the largest value that $\smlnorm{\varpi}$ can take if
$\varpi$ lies in $\Union_{\{j\in\naturals\mid2^{j}\leq T^{1/2}\gamma\}}\bar{\Pi}_{j,T}$
is bounded by $2\gamma$. Without disturbing the conclusion of (\ref{eq:consist}),
which holds for arbitrary $\gamma>0$, we may take $\gamma$ sufficiently
small that Proposition~\ref{prop:centring}\ref{enu:centrerate}
applies for $\delta=2\gamma$ and the $\epsilon>0$ given above; i.e.\ there
exists an $\eta>0$ (in general, different from that given previously)
such that for the event
\begin{equation}
A_{T}\defeq\{\bar{S}_{T}(\varpi)-\bar{S}_{T}(0)\geq\eta\smlnorm{\varpi}^{2}\sep\forall\smlnorm{\varpi}\leq2\gamma\}\label{eq:minorset}
\end{equation}
we have
\begin{equation}
\liminf_{T\goesto\infty}\Prob (A_{T})\geq1-\epsilon.\label{eq:minorrate}
\end{equation}
This ensures that $\bar{S}_{T}(\varpi)-\bar{S}_{T}(0)$ is minorised,
with high probability, by $\eta\smlnorm{\varpi}^{2}$ simultaneously
on each of the $\bar{\Pi}_{j,T}$ appearing on the r.h.s.\ of (\ref{eq:shellbound}).
We can accordingly bound
\begin{multline}
\Prob\left(\Union_{\substack{j\geq M\\
2^{j}\leq T^{1/2}\gamma
}
}\{\hat{\varpi}_{T}\in\bar{\Pi}_{j,T}\sep Z_{T-1,T}\leq K\}\right)\\
\leq\sum_{\substack{j\geq M\\
2^{j}\leq T^{1/2}\gamma
}
}\Prob\left(\{\hat{\varpi}_{T}\in\bar{\Pi}_{j,T}\sep Z_{T-1,T}\leq K\}\intsect A_{T}\right)+\Prob (A_{T}^{c})\label{eq:unionsplit}
\end{multline}
such that the final term is eventually bounded by $\epsilon$, while
the minorisation of $\bar{S}_{T}(\varpi)-\bar{S}_{T}(0)$ holds for
each of the events entering the sum.

Now $\hat{\varpi}_{T}\in\bar{\Pi}_{j,T}$ implies the first inequality
in
\[
0\geq\inf_{\varpi\in\bar{\Pi}_{j,T}}[S_{T}(\varpi)-S_{T}(0)]\geq\inf_{\varpi\in\bar{\Pi}_{j,T}}[\bar{S}_{T}(\varpi)-\bar{S}_{T}(0)]+\inf_{\varpi\in\bar{\Pi}_{j,T}}[U_{T}(\varpi)-U_{T}(0)]
\]
whence, in view of $\bar{\Pi}_{j,T}=\{\varpi\in\bar{\Pi}\mid2^{j}<T^{1/2}\smlnorm{\varpi}\leq2^{j+1}\}$
\[
\inf_{\varpi\in\bar{\Pi}_{j,T}}[\bar{S}_{T}(\varpi)-\bar{S}_{T}(0)]\leq\sup_{\smlnorm{\varpi}\leq T^{-1/2}2^{j+1}}\smlabs{U_{T}(\varpi)-U_{T}(0)}.
\]
Hence, in view of (\ref{eq:minorset}), and the fact that $\smlnorm{\varpi}$
is bounded below by $T^{-1/2}2^{j}$ for $\varpi\in\bar{\Pi}_{j,T}$,
\begin{align*}
 & \Prob\left(\{\hat{\varpi}_{T}\in\bar{\Pi}_{j,T}\sep Z_{T-1,T}\leq K\}\intsect A_{T}\right)\\
 & \qquad\leq\Prob\left(\left\{ \sup_{\smlnorm{\varpi}\leq T^{-1/2}2^{j+1}}\smlabs{U_{T}(\varpi)-U_{T}(0)}\geq\inf_{\varpi\in\bar{\Pi}_{j,T}}[\bar{S}_{T}(\varpi)-\bar{S}_{T}(0)]\sep Z_{T-1,T}\leq K\right\} \intsect A_{T}\right)\\
 & \qquad\leq\Prob\left\{ \sup_{\smlnorm{\varpi}\leq T^{-1/2}2^{j+1}}\smlabs{U_{T}(\varpi)-U_{T}(0)}\geq\eta T^{-1}2^{2j}\sep Z_{T-1,T}\leq K\right\}
\end{align*}
for all $j\geq M$ such that $2^{j}\leq T^{1/2}\gamma$, for all $T$
sufficiently large. Finally, we have by Proposition~\ref{prop:mg}\ref{enu:serate}
with $\kappa=\eta T^{-1/2}2^{2j}$ and $\delta=T^{-1/2}2^{j+1}$ that
for $C$ depending only on $\epsilon$,
\begin{align}
\Prob\left\{ \sup_{\smlnorm{\varpi}\leq T^{-1/2}2^{j+1}}\smlabs{U_{T}(\varpi)-U_{T}(0)}\geq\eta T^{-1}2^{2j}\sep Z_{T-1,T}\leq K\right\}  & \leq\frac{2C}{\eta}2^{-j}\label{eq:shellbnd}
\end{align}
for all $j\geq M$ such that $2^{j}\leq T^{1/2}\gamma$, for all $T$
sufficiently large. That is, the preceding bound applies simultaneously
to every summand on the r.h.s.\ of (\ref{eq:unionsplit}), for $T$
sufficiently large.

Thus it follows that, for all $T$ sufficiently large,
\begin{align*}
\Prob\{T^{1/2}\smlnorm{\hat{\varpi}_{T}}>2^{M}\} & \leq_{(1)}\Prob\{T^{1/2}\smlnorm{\hat{\varpi}_{T}}>2^{M}\sep Z_{T-1,T}\leq K\}+\epsilon\\
 & \leq_{(2)}\sum_{\substack{j\geq M\\
2^{j}\leq T^{1/2}\gamma
}
}\Prob\left(\{\hat{\varpi}_{T}\in\bar{\Pi}_{j,T}\sep Z_{T-1,T}\leq K\}\intsect A_{T}\right)+\Prob (A_{T}^{c})\\
 & \qquad\qquad\qquad\qquad+\Prob\{\smlnorm{\hat{\varpi}_{T}}\geq\gamma\sep Z_{T-1,T}\leq K\}+\epsilon\\
 & \leq_{(3)}\sum_{\substack{j\geq M\\
2^{j}\leq T^{1/2}\gamma
}
}\frac{2C}{\eta}2^{-j}+3\epsilon
\leq2^{-M}\frac{4C}{\eta}+3\epsilon
\end{align*}
where $\leq_{(1)}$ holds by (\ref{eq:ratesplit}) with $L=2^{M}$,
$\leq_{(2)}$ by (\ref{eq:shellbound}) and (\ref{eq:unionsplit}),
and $\leq_{(3)}$ by (\ref{eq:consist}), (\ref{eq:minorrate}) and
(\ref{eq:shellbnd}). Since $C$ depends on $\epsilon$ but not $M$,
the r.h.s.\ can be made arbitrarily small by suitable choice of $\epsilon>0$
and then $M\in\naturals$. Thus $T^{1/2}\hat{\varpi}_{T}=O_{p}(1)$.
\end{proof}

\subsection{Proofs of Propositions \ref{prop:consprelim}--\ref{prop:mg}}

\label{subsec:propproof}

\subsubsection{Preliminaries}

For future use, we note the following identity (see also \citealp{Herce96ET},
p.\ 150):
\begin{multline}
\smlabs{A-B}-\smlabs A=-\sgn(A)B+\indic\{A=0\}\smlabs B\\
+2(B-A)[\indic\{B>A>0\}-\indic\{B<A<0\}].\label{eq:herce}
\end{multline}
In proving Propositions~\ref{prop:centring} and \ref{prop:mg},
we will make repeated use of the following, the proof of which appears
immediately below. 
\begin{lem}
\label{lem:ladsummands}Suppose \ref{ass:INIT}--\ref{ass:JSR} and
\ref{ass:LAD} hold. Let $w_{t-1}$ be $\filt_{t-1}$-measurable,
and define
\begin{align*}
\ell_{t-1}(w_{t-1}) & \defeq[x_{t-1}+w_{t-1}]_{+}-[x_{t-1}]_{+}\\
G(b,a) & \defeq\int_{a}^{b}[F_{u}(u)-F_{u}(a)]\diff u,
\end{align*}
with the convention that $\int_{a}^{b}v(x)\diff x=-\int_{b}^{a}v(x)\diff x$
when $b<a$. Then
\begin{align}
 & \expect_{t-1}\{\smlabs{y_{t}-[x_{t-1}+w_{t-1}]_{+}}-\smlabs{y_{t}-[x_{t-1}]_{+}}\}\nonumber \\
 & \qquad=\int_{-\infty}^{\infty}\{\smlabs{u-\ell_{t-1}(w_{t-1})}-\smlabs u\}f_{u}(u-[x_{t-1}]_{-})\diff u\label{eq:expladsnd}\\
 & \qquad=2\ell_{t-1}(w_{t-1})[F_{u}(-[x_{t-1}]_{-})-F_{u}(0)]+2G\{\ell_{t-1}(w_{t-1})-[x_{t-1}]_{-},-[x_{t-1}]_{-}\}.\label{eq:laddecmp}
\end{align}
Moreover, there exists a $C<\infty$ such that
\begin{equation}
\smlabs{\ell_{t-1}(w_{t-1})[F_{u}(-[x_{t-1}]_{-})-F_{u}(0)]}\leq C\smlabs{w_{t-1}}^{2}\indic\{x_{t-1}<0\}\label{eq:firstbnd}
\end{equation}
and
\begin{equation}
\smlabs{G\{\ell_{t-1}(w_{t-1})-[x_{t-1}]_{-},-[x_{t-1}]_{-}\}-\tfrac{1}{2}f_{u}(-[x_{t-1}]_{-})\ell_{t-1}^{2}(w_{t-1})}\leq C\smlabs{w_{t-1}}^{3}.\label{eq:secondbnd}
\end{equation}
\end{lem}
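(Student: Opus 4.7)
I would prove the four displayed claims in sequence. The bulk of the work lies in identity \eqref{eq:expladsnd}; once this is in hand, the remaining three statements follow from routine manipulations.

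For \eqref{eq:expladsnd}, the strategy is to compute $\expect_{t-1}$ directly against the law of $u_t$, using the representation $y_{t}=[x_{t-1}+u_{t}]_{+}$. I would split the integral over $u_t$ according to whether $u_t > -x_{t-1}$ (the uncensored case, in which $y_t = x_{t-1}+u_t$) or $u_t \leq -x_{t-1}$ (the censored case, in which $y_t = 0$); the first region contributes a continuous piece, the second a point-mass-like term of size $\ell_{t-1}(w_{t-1})\,F_u(-x_{t-1})$. After the change of variable $v = u + [x_{t-1}]_-$ (so that $f_u(u)\diff u = f_u(v - [x_{t-1}]_-)\diff v$), the continuous piece matches $\int_{-[x_{t-1}]_+}^{\infty}\{|v-\ell_{t-1}(w_{t-1})|-|v|\}\,f_u(v-[x_{t-1}]_-)\diff v$. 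It then remains to verify that the censored term equals $\int_{-\infty}^{-[x_{t-1}]_+}\{|v-\ell_{t-1}(w_{t-1})|-|v|\}\,f_u(v-[x_{t-1}]_-)\diff v$; on that region the integrand reduces to the constant $\ell_{t-1}(w_{t-1})$, thanks to the key inequality $\ell_{t-1}(w_{t-1}) \geq -[x_{t-1}]_+$, which is immediate from $[x_{t-1}+w_{t-1}]_+ \geq 0$.

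Starting from \eqref{eq:expladsnd}, identity \eqref{eq:laddecmp} follows by applying the Knight-type representation $|u-\ell|-|u| = -\ell + 2\int_{0}^{\ell}\indic\{u\leq s\}\diff s$ (with the usual sign convention for $\ell<0$) together with Fubini, giving $-\ell_{t-1}(w_{t-1}) + 2\int_{0}^{\ell_{t-1}(w_{t-1})} F_u(s-[x_{t-1}]_-)\diff s$. Decomposing $F_u(s-[x_{t-1}]_-) = F_u(-[x_{t-1}]_-) + [F_u(s-[x_{t-1}]_-) - F_u(-[x_{t-1}]_-)]$, using the zero-median assumption $F_u(0)=1/2$ from \ref{ass:LAD} to recombine constants into $2\ell_{t-1}(w_{t-1})[F_u(-[x_{t-1}]_-)-F_u(0)]$, and substituting $u = s - [x_{t-1}]_-$ in the residual integral yields the $G$-term.

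The two bounds are then standard Taylor/Lipschitz estimates. For \eqref{eq:firstbnd}, when $x_{t-1}\geq0$ we have $[x_{t-1}]_-=0$ so the factor $F_u(-[x_{t-1}]_-)-F_u(0)$ vanishes identically; for $x_{t-1}<0$, the mean-value theorem and boundedness of $f_u$ give $|F_u(-x_{t-1})-F_u(0)| \leq C|x_{t-1}|$, and I would combine this with the observations that $\ell_{t-1}(w_{t-1})\neq 0$ forces $x_{t-1}+w_{t-1}>0$ (hence $|x_{t-1}|<|w_{t-1}|$) and that $|\ell_{t-1}(w_{t-1})| \leq |w_{t-1}|$ by Lipschitz continuity of $[\,\cdot\,]_+$, which multiply to the required quadratic bound. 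For \eqref{eq:secondbnd}, I would Taylor-expand $F_u$ about $a=-[x_{t-1}]_-$ with integral remainder and plug into $G(b,a)$ with $b-a=\ell_{t-1}(w_{t-1})$: the leading term is exactly $\tfrac{1}{2}f_u(-[x_{t-1}]_-)\,\ell_{t-1}^{2}(w_{t-1})$, while the remainder is controlled by $\tfrac{1}{6}\|f_u'\|_\infty |\ell_{t-1}(w_{t-1})|^{3} \leq C|w_{t-1}|^{3}$. The principal delicacy is the case analysis underlying \eqref{eq:expladsnd}, where one must simultaneously track the sign of $x_{t-1}$ (which determines $[x_{t-1}]_{\pm}$), the sign of $\ell_{t-1}(w_{t-1})$ (which determines the piecewise form of $|u-\ell|-|u|$), and the support-level inequality $\ell_{t-1}(w_{t-1}) \geq -[x_{t-1}]_+$ that collapses the integrand on $(-\infty,-[x_{t-1}]_+]$ to a constant; once this bookkeeping is dispatched, the remaining manipulations are mechanical.
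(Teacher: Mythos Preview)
Your proposal is correct and follows essentially the same architecture as the paper's proof: split the conditional expectation into censored/uncensored pieces, apply the change of variables $v=u+[x_{t-1}]_-$, reduce the resulting integral via a pointwise identity for $|u-\ell|-|u|$, and finish with the same Lipschitz/mean-value and Taylor arguments for \eqref{eq:firstbnd}--\eqref{eq:secondbnd}.

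Two cosmetic differences are worth noting. For \eqref{eq:expladsnd}, the paper writes the uncensored piece as an integral over all of $\reals$ minus a correction, then shows this correction exactly cancels the censored term; you instead verify directly that the censored contribution $\ell_{t-1}(w_{t-1})\,F_u(-x_{t-1})$ equals $\int_{-\infty}^{-[x_{t-1}]_+}\{|v-\ell|-|v|\}f_u(v-[x_{t-1}]_-)\,\diff v$ via the inequality $\ell_{t-1}(w_{t-1})\geq -[x_{t-1}]_+$. Both arguments are equivalent; yours is arguably more transparent. For \eqref{eq:laddecmp}, the paper applies the Herce identity \eqref{eq:herce} and then integrates by parts to obtain $G$, whereas you start from the Knight representation $|u-\ell|-|u|=-\ell+2\int_0^{\ell}\indic\{u\leq s\}\,\diff s$ and arrive at $G$ by Fubini and a substitution; the two identities are integrated versions of one another, so nothing of substance changes.
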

\begin{proof}
Letting $(\mathrm{I})_{t-1}$ denote the l.h.s.\ of (\ref{eq:expladsnd}),
we have
\begin{align*}
(\mathrm{I})_{t-1} & =\int_{-\infty}^{\infty}\{\smlabs{[x_{t-1}+u]_{+}-[x_{t-1}+w_{t-1}]_{+}}-\smlabs{[x_{t-1}+u]_{+}-[x_{t-1}]_{+}}\}f_{u}(u)\diff u\\
 & =\int_{-x_{t-1}}^{\infty}\{\smlabs{x_{t-1}+u-[x_{t-1}+w_{t-1}]_{+}}-\smlabs{x_{t-1}+u-[x_{t-1}]_{+}}\}f_{u}(u)\diff u\\
 & \qquad\qquad+\int_{-\infty}^{-x_{t-1}}\{[x_{t-1}+w_{t-1}]_{+}-[x_{t-1}]_{+}\}f_{u}(u)\diff u\\
 & \eqdef(\mathrm{II})_{t-1}+(\mathrm{III})_{t-1}
\end{align*}
where
\begin{align*}
(\mathrm{II})_{t-1} & =\int_{-\infty}^{\infty}\{\smlabs{x_{t-1}+u-[x_{t-1}+w_{t-1}]_{+}}-\smlabs{x_{t-1}+u-[x_{t-1}]_{+}}\}f_{u}(u)\diff u\\
 & \qquad\qquad-\int_{-\infty}^{-x_{t-1}}\{\smlabs{x_{t-1}+u-[x_{t-1}+w_{t-1}]_{+}}-\smlabs{x_{t-1}+u-[x_{t-1}]_{+}}\}f_{u}(u)\diff u\\
 & \eqdef(\mathrm{IV})_{t-1}+(\mathrm{V})_{t-1}.
\end{align*}
Since $x_{t-1}+u\leq0$ for all $u$ in the range of integration in
$(\mathrm{V})_{t-1}$, we have
\begin{align*}
(\mathrm{V})_{t-1} & =-\int_{-\infty}^{-x_{t-1}}\{([x_{t-1}+w_{t-1}]_{+}-(x_{t-1}+u))-([x_{t-1}]_{+}-(x_{t-1}+u)\}f_{u}(u)\diff u\\
 & =-\int_{-\infty}^{-x_{t-1}}\{[x_{t-1}+w_{t-1}]_{+}-[x_{t-1}]_{+}\}f_{u}(u)\diff u\\
 & =-(\mathrm{III})_{t-1}.
\end{align*}
Deduce
\begin{align}
(\mathrm{I})_{t-1} & =(\mathrm{IV})_{t-1}+(\mathrm{V})_{t-1}+(\mathrm{III})_{t-1}\nonumber \\
 & =\int_{-\infty}^{\infty}\{\smlabs{x_{t-1}+u-[x_{t-1}+w_{t-1}]_{+}}-\smlabs{x_{t-1}+u-[x_{t-1}]_{+}}\}f_{u}(u)\diff u\nonumber \\
 & =\int_{-\infty}^{\infty}\{\smlabs{u+[x_{t-1}]_{-}-([x_{t-1}+w_{t-1}]_{+}-[x_{t-1}]_{+})}-\smlabs{u+[x_{t-1}]_{-}}\}f_{u}(u)\diff u\nonumber \\
 & =\int_{-\infty}^{\infty}\{\smlabs{u-\ell_{t-1}(w_{t-1})}-\smlabs u\}f_{u}(u-[x_{t-1}]_{-})\diff u\label{eq:ladpartresult}
\end{align}
as per (\ref{eq:expladsnd}), where the final equality follows by
a change of variables.

To prove (\ref{eq:laddecmp}), we first apply the identity (\ref{eq:herce})
with $A=u$ and $B=\ell_{t-1}(w_{t-1})$ to the term inside the braces
in the preceding display. Since $u=0$ has Lebesgue measure zero,
this yields
\begin{align*}
(\mathrm{I})_{t-1} & =-\int_{-\infty}^{\infty}\sgn(u)\ell_{t-1}(w_{t-1})f_{u}(u-[x_{t-1}]_{-})\diff u\\
 & \qquad+2\int_{\infty}^{-\infty}[\ell_{t-1}(w_{t-1})-u]\\
 & \qquad\qquad\quad\cdot[\indic\{\ell_{t-1}(w_{t-1})>u>0\}-\indic\{\ell_{t-1}(w_{t-1})<u<0\}]f_{u}(u-[x_{t-1}]_{-})\diff u\\
 & \eqdef(\mathrm{VI})_{t-1}+2(\mathrm{VII})_{t-1}.
\end{align*}
For $(\mathrm{VI})_{t-1}$, we have
\begin{align*}
(\mathrm{VI})_{t-1} & =-\ell_{t-1}(w_{t-1})\left[\int_{0}^{\infty}f_{u}(u-[x_{t-1}]_{-})\diff u-\int_{-\infty}^{0}f_{u}(u-[x_{t-1}]_{-})\diff u\right]\\
 & =\ell_{t-1}(w_{t-1})\{2F_{u}(-[x_{t-1}]_{-})-1\}
  =2\ell_{t-1}(w_{t-1})\{F_{u}(-[x_{t-1}]_{-})-F_{u}(0)\}
\end{align*}
since $F_{u}(0)=\tfrac{1}{2}$ by Assumption~\ref{ass:LAD}; this
gives the first r.h.s.\ term in (\ref{eq:laddecmp}). Regarding $(\mathrm{VII})_{t-1}$,
following the convention that $\int_{a}^{b}v(x)\diff x=-\int_{b}^{a}v(x)\diff x$
when $b<a$, and then making a change of variables,
\begin{align*}
(\mathrm{VII})_{t-1} & =\indic\{\ell_{t-1}(w_{t-1})>0\}\int_{0}^{\ell_{t-1}(w_{t-1})}[\ell_{t-1}(w_{t-1})-u]f_{u}(u-[x_{t-1}]_{-})\diff u\\
 & \qquad\qquad-\indic\{\ell_{t-1}(w_{t-1})<0\}\int_{\ell_{t-1}(w_{t-1})}^{0}[\ell_{t-1}(w_{t-1})-u]f_{u}(u-[x_{t-1}]_{-})\diff u\\
 & =\int_{0}^{\ell_{t-1}(w_{t-1})}[\ell_{t-1}(w_{t-1})-u]f_{u}(u-[x_{t-1}]_{-})\diff u\\
 & =\int_{-[x_{t-1}]_{-}}^{\ell_{t-1}(w_{t-1})-[x_{t-1}]_{-}}[\ell_{t-1}(w_{t-1})-[x_{t-1}]_{-}-u]f_{u}(u)\diff u.
\end{align*}
To put this into the required form, note that by integration by parts
\[
\int_{a}^{b}(b-u)f_{u}(u)\diff u=\int_{a}^{b}[F_{u}(u)-F_{u}(a)]\diff u=G(b,a).
\]
Hence
\[
2(\mathrm{VII})_{t-1}=2G\{\ell_{t-1}(w_{t-1})-[x_{t-1}]_{-},-[x_{t-1}]_{-}\}
\]
corresponds to the second r.h.s.\ term in (\ref{eq:laddecmp}).

It remains to prove (\ref{eq:firstbnd}) and (\ref{eq:secondbnd}).
Regarding (\ref{eq:firstbnd}), observe that
\[
\ell_{t-1}(w_{t-1})\{F_{u}(-[x_{t-1}]_{-})-F_{u}(0)\}=\{[x_{t-1}+w_{t-1}]_{+}-[x_{t-1}]_{+}\}\{F_{u}(-[x_{t-1}]_{-})-F_{u}(0)\}
\]
is zero if $x_{t-1}\geq0$, or if both $x_{t-1}<0$ and $x_{t-1}+w_{t}<0$.
Suppose therefore that $x_{t-1}<0$ and $x_{t-1}+w_{t}\geq0$, in
which case the preceding is equal to
\[
(x_{t-1}+w_{t-1})\{F_{u}(-x_{t-1})-F_{u}(0)\}=(x_{t-1}+w_{t-1})f_{u}(-\tilde{x}_{t-1})(-x_{t-1}),
\]
where the equality holds by the mean value theorem, for some $\tilde{x}_{t-1}\in[x_{t-1},0]$.
From $w_{t}\geq-x_{t-1}>0$, it follows that $\smlabs{x_{t-1}}\leq\smlabs{w_{t}}$.
Since $f_{u}$ is bounded by Assumption~\ref{ass:LAD}, it follows
that the preceding is bounded by $C\smlabs{w_{t-1}}^{2}$, for some
$C<\infty$. Hence
\begin{align*}
\smlabs{\ell_{t-1}(w_{t-1})\{F_{u}(-[x_{t-1}]_{-})-F_{u}(0)\}} & \leq C\smlabs{w_{t-1}}^{2}\indic\{x_{t-1}<0\sep x_{t-1}+w_{t}\geq0\}\\
 & \leq C\smlabs{w_{t-1}}^{2}\indic\{x_{t-1}<0\}.
\end{align*}
Regarding (\ref{eq:secondbnd}), observe that a Taylor expansion of
$b\elmap G(b,a)$ around $b=a$ yields
\[
G(b,a)=\tfrac{1}{2}f_{u}(a)(b-a)^{2}+\tfrac{1}{3!}f_{u}^{\prime}(\tilde{b})(b-a)^{3}
\]
where $\tilde{b}$ lies between $a$ and $b$; and since $f_{u}^{\prime}$
is bounded by Assumption~\ref{ass:LAD}, there exists a $C<\infty$
such that
\[
\smlabs{G(b,a)-\tfrac{1}{2}f_{u}(a)(b-a)^{2}}\leq C(b-a)^{3}.
\]
Thus (\ref{eq:secondbnd}) follows by taking $b=\ell_{t-1}(w_{t-1})-[x_{t-1}]_{-}$, $a=-[x_{t-1}]_{-}$ and noting that $|\ell_{t-1}(w_{t-1})|\leq|w_{t-1}|$.
\end{proof}

\subsubsection{Proof of Proposition~\ref{prop:consprelim}}
\label{app:consprelim}


Define
\begin{equation*}
x_{t-1}(\rho)\defeq\alpha+\beta y_{t-1}+\vec{\phi}^{\trans}\Delta\vec y_{t-1}
\end{equation*}
so that $x_{t-1}(\rho_{T,0}) = x_{t-1}$, where the latter is as in \eqref{eq:xtm1again}, and
\[
w_{t}\defeq y_{t}-[x_{t-1}(\rho_{T,0})]_{+}=[x_{t-1}(\rho_{T,0})+u_{t}]_{+}-[x_{t-1}(\rho_{T,0})]_{+},
\]
so that
\begin{align*}
y_{t}-[x_{t-1}(\rho)]_{+} & =(y_{t}-[x_{t-1}(\rho_{T,0})]_{+})-([x_{t-1}(\rho)]_{+}-[x_{t-1}(\rho_{T,0})]_{+})\\
 & =w_{t}-([x_{t-1}(\rho)]_{+}-[x_{t-1}(\rho_{T,0})]_{+})\\
 & \eqdef w_{t}-h_{t-1}(\rho,\rho_{T,0}).
\end{align*}
Then we can write the recentred and rescaled LAD criterion function
as
\begin{align}
Q_{T}(\rho) & \defeq T^{-1}[S_{T}(\rho)-S_{T}(\rho_{0})]=T^{-1}\sum_{t=1}^{T}[\smlabs{w_{t}-h_{t-1}(\rho,\rho_{T,0})}-\smlabs{w_{t}}].\label{eq:QTlad}
\end{align}
Henceforth set $h_{t-1}(\rho)\defeq h_{t-1}(\rho,\rho_{T,0})$, for a
more compact notation.

Since $x\elmap[x]_{+}$ is Lipschitz continuous (with Lipschitz constant
equal to unity), $\smlabs{w_{t}}\leq\smlabs{u_{t}}$. Hence it is
always the case that
\[
\smlabs{w_{t}-h_{t-1}(\rho)}-\smlabs{w_{t}}\geq0-\smlabs{w_{t}}\ge-\smlabs{u_{t}},
\]
while if $\smlabs{h_{t-1}(\rho)}>3\smlabs{u_{t}}\geq3\smlabs{w_{t}}$,
then
\[
\smlabs{w_{t}-h_{t-1}(\rho)}-\smlabs{w_{t}}>2\smlabs{u_{t}}-\smlabs{w_{t}}\geq\smlabs{u_{t}}.
\]
Therefore for each summand in (\ref{eq:QTlad}),
\begin{align*}
\smlabs{w_{t}-h_{t-1}(\rho)}-\smlabs{w_{t}} & =(\smlabs{w_{t}-h_{t-1}(\rho)}-\smlabs{w_{t}})\indic\{\smlabs{h_{t-1}(\rho)}>3\smlabs{u_{t}}\}\\
 & \qquad\qquad+(\smlabs{w_{t}-h_{t-1}(\rho)}-\smlabs{w_{t}})\indic\{\smlabs{h_{t-1}(\rho)}\leq3\smlabs{u_{t}}\}\\
 & \geq\smlabs{u_{t}}\indic\{\smlabs{h_{t-1}(\rho)}>3\smlabs{u_{t}}\}-\smlabs{u_{t}}\indic\{\smlabs{h_{t-1}(\rho)}\leq3\smlabs{u_{t}}\}\\
 & =\smlabs{u_{t}}-2\smlabs{u_{t}}\indic\{\smlabs{h_{t-1}(\rho)}\leq3\smlabs{u_{t}}\}.
\end{align*}
Hence we have the lower bound
\begin{align*}
Q_{T}(\rho) & \geq T^{-1}\sum_{t=1}^{T}\smlabs{u_{t}}-2T^{-1}\sum_{t=1}^{T}\smlabs{u_{t}}\indic\{\smlabs{h_{t-1}(\rho)}\leq3\smlabs{u_{t}}\}\\
 & \geq T^{-1}\sum_{t=1}^{T}\smlabs{u_{t}}-2\left(T^{-1}\sum_{t=1}^{T}u_{t}^{2}\right)^{1/2}\left(1-T^{-1}\sum_{t=1}^{T}\indic\{\smlabs{h_{t-1}(\rho)}>3\smlabs{u_{t}}\}\right)^{1/2}\\
 & \eqdef T^{-1}\sum_{t=1}^{T}\smlabs{u_{t}}-2\left(T^{-1}\sum_{t=1}^{T}u_{t}^{2}\right)^{1/2}\left(1-K_{T}(\rho)\right)^{1/2}
\end{align*}
which depends on $\rho$ only through $K_{T}(\rho$).

We shall show below that for \emph{every} non-negative $\kappa_{T}\goesto\infty$,
\begin{equation}
\sup_{\{\rho\in\Pi\,\mid\,\smlabs{\beta-\beta_{T,0}}\geq T^{-1/2}\kappa_{T}\}}\smlabs{K_{T}(\rho)-1}\inprob0.\label{eq:indiclim}
\end{equation}
It then follows by the LLN that
\[
\Prob\left\{ \inf_{\{\rho\in\Pi\,\mid\,\smlabs{\beta-\beta_{0}}\geq T^{-1/2}\kappa_{T}\}}Q_{T}(\rho)\geq\tfrac{1}{2}\expect\smlabs{u_{1}}\right\} \goesto1.
\]
Since $Q_{T}(\rho_{T,0})=0$, it follows in particular that
\[
\Prob\{\smlabs{\hat{\beta}_{T}-\beta_{T,0}}\geq\kappa_{T}T^{-1/2}\}\leq\Prob\left\{ \inf_{\{\rho\in\Pi\,\mid\,\smlabs{\beta-\beta_{0}}\geq T^{-1/2}\kappa_{T}\}}Q_{T}(\rho)\leq0\right\} \goesto0.
\]
Deduce that $T^{1/2}(\hat{\beta}_{T}-\beta_{T,0})=o_{p}(\kappa_{T})$
for \emph{every} sequence $\kappa_{T}\goesto\infty$, whence $T^{1/2}(\hat{\beta}_{T}-\beta_{T,0})=O_{p}(1)$.

It remains to prove (\ref{eq:indiclim}). We first note that since
the parameter space $\Pi$ is compact, there exists a $C$ such that
\[
\smlabs{(\alpha-\alpha_{0})+(\vec{\phi}-\vec{\phi}_{0})^{\trans}\Delta\vec y_{t-1}}\leq C(1+\smlnorm{\Delta\vec y_{t-1}}).
\]
Hence we have the following lower bound,
\begin{align*}
\smlabs{h_{t-1}(\rho)} & =\smlabs{[x_{t-1}(\rho)]_{+}-[x_{t-1}(\rho_{T,0})]_{+}}\\
 & \geq\smlabs{x_{t-1}(\rho)-x_{t-1}(\rho_{T,0})}\indic\{x_{t-1}(\rho)>0\}\indic\{x_{t-1}(\rho_{T,0})>0\}\\
 & =\smlabs{(\beta-\beta_{T,0})y_{t-1}+(\alpha-\alpha_{0})+(\vec{\phi}-\vec{\phi}_{0})^{\trans}\Delta\vec y_{t-1}}\indic\{x_{t-1}(\rho)>0\}\indic\{x_{t-1}(\rho_{T,0})>0\}\\
 & \geq[\smlabs{\beta-\beta_{T,0}}y_{t-1}-C(1+\smlnorm{\Delta\vec y_{t-1}})]\indic\{x_{t-1}(\rho_{T,0})>0\}\\
 & \qquad\qquad\cdot\indic\{\beta y_{t-1}-C(1+\smlnorm{\Delta\vec y_{t-1}})>0\},
\end{align*}
since $y_{t-1}\geq0$, and so
\begin{align*}
\indic\{\smlabs{h_{t-1}(\rho)}>3\smlabs{u_{t}}\} & \geq\indic\{\smlabs{\beta-\beta_{T,0}}y_{t-1}-C(1+\smlnorm{\Delta\vec y_{t-1}})>3\smlabs{u_{t}}\}\\
 & \qquad\qquad\indic\{\beta y_{t-1}-C(1+\smlnorm{\Delta\vec y_{t-1}})>0\}\indic\{x_{t-1}(\rho_{0})>0\}.
\end{align*}
Observe that the r.h.s.\ is increasing in $\smlabs{\beta-\beta_{T,0}}$
and $\beta$. Suppose that $\beta\geq\beta_{T,0}+T^{-1/2}\kappa_{T}$.
Then for all such $\rho=(\alpha,\beta,\vec{\phi})$,
\begin{align}
K_{T}(\rho) & \geq T^{-1}\sum_{t=1}^{T}\indic\{T^{-1/2}y_{t-1}-\kappa_{T}^{-1}C(1+\smlnorm{\Delta\vec y_{t-1}})-3\kappa_{T}^{-1}\smlabs{u_{t}}>0\}\nonumber \\
 & \qquad\qquad\cdot\indic\{\beta_{T,0}T^{-1/2}y_{t-1}-T^{-1/2}C(1+\smlnorm{\Delta\vec y_{t-1}})>0\}\indic\{T^{-1/2}x_{t-1}(\rho_{0})>0\}\nonumber \\
 & \geq T^{-1}\sum_{t=1}^{T}g_{M}[T^{-1/2}y_{t-1}-\kappa_{T}^{-1}C(1+\smlnorm{\Delta\vec y_{t-1}})-3\kappa_{T}^{-1}\smlabs{u_{t}}]\nonumber \\
 & \qquad\qquad\cdot g_{M}[\beta_{T,0}T^{-1/2}y_{t-1}-T^{-1/2}C(1+\smlnorm{\Delta\vec y_{t-1}})]g_{M}[T^{-1/2}x_{t-1}(\rho_{0})]\nonumber \\
 & \eqdef L_{T}^{(1)}(M)\label{eq:KTbnd1}
\end{align}
where $g_{M}$ is a Lipschitz function such that $\indic\{x>0\}\geq g_{M}(x)\geq\indic\{x>M\}$.
Similarly, if $\beta\in[\tfrac{1}{2}\beta_{T,0},\beta_{T,0}-T^{-1/2}\kappa_{T}]$,
we have the lower bound
\begin{align}
K_{T}(\rho) & \geq T^{-1}\sum_{t=1}^{T}g_{M}[T^{-1/2}y_{t-1}-\kappa_{T}^{-1}C(1+\smlnorm{\Delta\vec y_{t-1}})-3\kappa_{T}^{-1}\smlabs{u_{t}}]\nonumber \\
 & \qquad\qquad\cdot g_{M}[\tfrac{1}{2}\beta_{T,0}T^{-1/2}y_{t-1}-T^{-1/2}C(1+\smlnorm{\Delta\vec y_{t-1}})]g_{M}[T^{-1/2}x_{t-1}(\rho_{0})]\nonumber \\
 & \eqdef L_{T}^{(2)}(M).\label{eq:KTbnd2}
\end{align}

Finally, suppose $\beta\leq\tfrac{1}{2}\beta_{T,0}$: we need a slightly
different lower bound for $\smlabs{h_{t-1}(\rho)}$ in this case. We
first note that if $a>0$ and $b<\tfrac{3}{4}a$, then $\smlabs{[a]_{+}-[b]_{+}}>\tfrac{1}{4}a.$
Hence
\begin{align*}
\smlabs{h_{t-1}(\rho)} & =\smlabs{[x_{t-1}(\rho_{T,0})]_{+}-[x_{t-1}(\rho)]_{+}}\\
 & \geq\tfrac{1}{4}x_{t-1}(\rho_{T,0})\indic\{x_{t-1}(\rho_{T,0})>0\}\indic\{x_{t-1}(\rho)<\tfrac{3}{4}x_{t-1}(\rho_{T,0})\}\\
 & =\tfrac{1}{4}[\alpha_{T,0}+\beta_{T,0}y_{t-1}+\vec{\phi}_{0}^{\trans}\Delta\vec y_{t-1}]\indic\{x_{t-1}(\rho_{T,0})>0\}\\
 & \qquad\qquad\indic\{(\tfrac{3}{4}\beta_{T,0}-\beta)y_{t-1}+(\tfrac{3}{4}\alpha_{T,0}-\alpha)+(\tfrac{3}{4}\vec{\phi}_{0}-\vec{\phi})^{\trans}\Delta\vec y_{t-1}>0\}\\
 & \geq\tfrac{1}{4}[\beta_{T,0}y_{t-1}+\alpha_{T,0}+\vec{\phi}_{0}^{\trans}\Delta\vec y_{t-1}]\indic\{x_{t-1}(\rho_{T,0})>0\}\\
 & \qquad\qquad\indic\{\tfrac{1}{4}\beta_{T,0}y_{t-1}-C_{1}(1+\smlnorm{\Delta\vec y_{t-1}})>0\}.
\end{align*}
where the final inequality holds since $y_{t-1}\geq0$, and $\tfrac{3}{4}\beta_{T,0}-\beta\geq\tfrac{1}{4}\beta_{T,0}$.
Therefore,
\begin{align*}
&\indic\{\smlabs{h_{t-1}(\rho)}>3\smlabs{u_{t}}\}  \geq\indic\{\beta_{T,0}T^{-1/2}y_{t-1}+T^{-1/2}\alpha_{T,0}+\vec{\phi}_{0}^{\trans}T^{-1/2}\Delta\vec y_{t-1}-12T^{-1/2}\smlabs{u_{t}}>0\}\\
 & \qquad\cdot\indic\{\tfrac{1}{4}\beta_{T,0}T^{-1/2}y_{t-1}-T^{-1/2}C_{1}(1+\smlnorm{\Delta\vec y_{t-1}})>0\}\indic\{T^{-1/2}x_{t-1}(\rho_{T,0})>0\},
\end{align*}
whence
\begin{align}
K_{T}(\rho) & \geq T^{-1}\sum_{t=1}^{T}g_{M}[\beta_{0}T^{-1/2}y_{t-1}+T^{-1/2}\alpha_{0}+\vec{\phi}_{0}^{\trans}T^{-1/2}\Delta\vec y_{t-1}-12T^{-1/2}\smlabs{u_{t}}]\nonumber \\
 & \qquad\qquad\qquad\cdot g_{M}[\tfrac{1}{4}\beta_{T,0}T^{-1/2}y_{t-1}-T^{-1/2}C_{1}(1+\smlnorm{\Delta\vec y_{t-1}})]g_{M}[T^{-1/2}x_{t-1}(\rho_{T,0})]\nonumber \\
 & \eqdef L_{T}^{(3)}(M).\label{eq:KTbnd3}
\end{align}

It follows from (\ref{eq:KTbnd1})--(\ref{eq:KTbnd3}) that
\[
1\geq K_{T}(\rho)\geq\min\{L_{T}^{(i)}(M)\}_{i=1}^{3}
\]
for all $\rho\in\Pi$ such that $\smlabs{\beta-\beta_{T,0}}\geq T^{-1/2}\kappa_{T}$.
Hence it remains to show that $L_{T}^{(i)}(M)\inprob1$ for each $i\in\{1,2,3\}$.
We give the proof only when $i=1$; the proof in the other cases is
analogous. To that end, we note that by the continuous mapping theorem
(CMT), Lemma~\ref{lem:bd2022}\ref{enu:mombnd}, and Theorem~3.2
in \citet{BD22}, the finite-dimensional distributions of the process
\begin{align*}
\xi_{T}(\tau) & \defeq g_{M}[T^{-1/2}y_{\smlfloor{\tau T}-1}-\kappa_{T}^{-1}C(1+\smlnorm{\Delta\vec y_{\smlfloor{\tau T}-1}})-3\kappa_{T}^{-1}\smlabs{u_{\smlfloor{\tau T}}}]\\
 & \qquad\qquad\cdot g_{M}[\beta_{T,0}T^{-1/2}y_{\smlfloor{\tau T}-1}-T^{-1/2}C(1+\smlnorm{\Delta\vec y_{\smlfloor{\tau T}-1}})]g_{M}[T^{-1/2}x_{\smlfloor{\tau T}-1}(\rho_{0})],
\end{align*}
for $\tau\in[0,1]$, converge weakly to those of
\[
{\cal \xi}(\tau)\defeq g_{M}^{3}[Y(\tau)],
\]
since $\beta_{T,0}\goesto1$. (This cannot be strengthened to weak
convergence in $D[0,1]$, since we may have $\kappa_{T}\goesto\infty$
arbitrarily slowly.) Since $g_{M}$ is bounded by unity, $\sup_{\tau\in[0,1]}\expect\smlabs{\xi_{T}(\tau)}\leq1$.
Finally, since $g_{M}$ is Lipschitz and bounded by unity, there exists
a constant $C_{M}$ depending only on $M$ such that
\begin{align*}
 & \sup_{\smlabs{\tau^{\prime}-\tau}\leq\epsilon}\expect\smlabs{\xi_{T}(\tau^{\prime})-\xi_{T}(\tau)}\\
 & \qquad\qquad\leq C_{M}\Biggl[\sup_{\smlabs{\tau^{\prime}-\tau}\leq\epsilon}\expect\min\{\smlabs{T^{-1/2}y_{\smlfloor{\tau^{\prime}T}-1}-T^{-1/2}y_{\smlfloor{\tau T}-1}},1\}\\
 & \qquad\qquad\qquad\qquad+\max\{\kappa_{T}^{-1},T^{-1/2}\}\max_{1\leq t\leq T}\expect\smlnorm{\Delta\vec y_{t}}+\kappa_{T}^{-1}\max_{1\leq t\leq T}\expect\smlabs{u_{t}}\Biggr]\goesto0
\end{align*}
as $T\goesto\infty$ and then $\epsilon\goesto0$, by Lemma~\ref{lem:bd2022}\ref{enu:mombnd}
and since $T^{-1/2}y_{\smlfloor{\tau T}}$is tight in $D[0,1]$ by
Theorem~3.2 in \citet{BD22}. It follows by Theorem~IX.7.1 in \citet{GS69}
that
\[
L_{T}(M)=\int_{0}^{1}\xi_{T}(\tau)\diff\tau\indist\int_{0}^{1}\xi(\tau)\diff\tau\geq\int_{0}^{1}\indic\{Y(\tau)>M\}\diff\tau\inprob1
\]
as $T\goesto\infty$ and then $M\goesto0$, by Lemma~\ref{lem:integ}\ref{enu:loctime}.\hfill\qedsymbol{}

\subsubsection{Proof of Proposition~\ref{prop:centring}}

\paragraph{\ref{enu:consistency}.}

Defining
\[
\nu(b,a)\defeq\int_{-\infty}^{\infty}\{\smlabs{u-b}-\smlabs u\}f_{u}(u-a)\diff u,
\]
we have by Lemma~\ref{lem:ladsummands}, in particular by (\ref{eq:expladsnd}),
that
\begin{align}
\bar{S}_{T}(\varpi)-\bar{S}_{T}(0) & =\frac{1}{T}\sum_{t=1}^{T}\nu(\ell_{t-1}(w_{t-1}),[x_{t-1}]_{-})\nonumber \\
 & =\frac{1}{T}\sum_{t=1}^{T}\nu(\ell_{t-1}(w_{t-1}),0)-\frac{1}{T}\sum_{t=1}^{T}\nu(\ell_{t-1}(w_{t-1}),0)\indic\{x_{t-1}<0\}\nonumber \\
 & \qquad\qquad\qquad+\frac{1}{T}\sum_{t=1}^{T}\nu(\ell_{t-1}(w_{t-1}),[x_{t-1}]_{-})\indic\{x_{t-1}<0\}\nonumber \\
 & \eqdef(\mathrm{I})_{T}-(\mathrm{II})_{T}+(\mathrm{III})_{T}\label{eq:sbarcons}
\end{align}
with $w_{t-1}=\varpi^{\trans}\Z_{t-1,T}$. Since $\smlabs{\nu(b,a)}\leq\smlabs b$
and $\smlabs{\ell_{t-1}(w_{t-1})}\leq\smlabs{w_{t-1}}$, it follows
that for all $\omega\in\bar{\Pi}$,
\begin{align}
\smlabs{(\mathrm{II})_{T}} & \leq\frac{1}{T}\sum_{t=1}^{T}\smlabs{\ell_{t-1}(\varpi^{\trans}\Z_{t-1,T})}\indic\{x_{t-1}<0\}
 \leq\frac{1}{T}\sum_{t=1}^{T}\smlabs{\varpi^{\trans}\Z_{t-1,T}}\indic\{x_{t-1}<0\}\nonumber \\
 & \leq_{(1)}C\frac{1}{T}\sum_{t=1}^{T}\smlnorm{\Z_{t-1,T}}\indic\{x_{t-1}<0\}\nonumber \\
 & \leq_{(2)}C\left(\frac{1}{T}\sum_{t=1}^{T}\smlnorm{\Z_{t-1,T}}^{2}\right)^{1/2}\left(\frac{1}{T}\sum_{t=1}^{T}\indic\{x_{t-1}<0\}\right)^{1/2}=_{(3)}o_{p}(1),\label{eq:sbar2}
\end{align}
where $\leq_{(1)}$ follows by the compactness of $\bar{\Pi}$, $\leq_{(2)}$
by the Cauchy--Schwarz inequality, and $=_{(3)}$ by Lemmas~\ref{lem:integ}\ref{enu:max}
and \ref{lem:moments}\ref{enu:ZZ}. An identical argument shows that
$(\mathrm{III})_{T}=o_{p}(1)$.

To handle $(\mathrm{I})_{T}$, we need to consider the map $b\elmap\nu(b,0)\eqdef\nu(b)$,
which is minimised at $b=\med(u_{t})=0$ by Assumption~\ref{ass:LAD}.
Since $b\elmap\smlabs{u-b}$ is differentiable at all $b\in\reals\backslash\{u\}$,
with bounded derivatives, the function $b\elmap\nu(b)$ has first
derivative
\begin{align*}
\nu^{\prime}(b) & =-\int_{-\infty}^{\infty}\sgn(u-b)f_{u}(u)\diff u=\Prob\{u_{t}<b\}-\Prob\{u_{t}>b\}\\
 & =F_{u}(b)-[1-F_{u}(b)]=2F_{u}(b)-1
\end{align*}
so that $\nu^{\prime}(b)$ is (weakly) increasing in $b$,
with $\nu^{\prime}(b)\geq0$ when $b\geq0$, and $\nu^{\prime}(b)\leq0$
when $b\leq0$. Since $f_{u}(0)>0$ by Assumption~\ref{ass:LAD},
these inequalities hold strictly for $b$ in a neighbourhood of zero,
and so it follows that for any given $\tau>0$,
\[
\nu(b)\geq\gamma_{\tau}\indic\{\smlabs b\geq\tau\}
\]
for all $b\in\reals$, with $\gamma_{\tau}\defeq\min\{\nu(\tau),\nu(-\tau)\}>0$.
Hence
\begin{equation}
(\mathrm{I})_{T}=\frac{1}{T}\sum_{t=1}^{T}\nu[\ell_{t-1}(\varpi^{\trans}\Z_{t-1,T})]\geq\gamma_{\tau}\frac{1}{T}\sum_{t=1}^{T}\indic\{\smlabs{\varpi^{\trans}\Z_{t-1,T}}\geq\tau\}.\label{eq:sbar1}
\end{equation}
It remains to lower bound the sum on the r.h.s.

To that end, we note that by H\"{o}lder's inequality,
\begin{multline*}
T^{-1}\sum_{t=1}^{T}\indic\{\smlabs{\varpi^{\trans}\Z_{t-1,T}}\geq\tau\}\smlnorm{\mathcal{Z}_{t-1,T}}^{2}\\
\leq\left(T^{-1}\sum_{t=1}^{T}\indic\{\smlabs{\varpi^{\trans}\Z_{t-1,T}}\geq\tau\}\right)^{\delta_{u}/(2+\delta_{u})}\left(T^{-1}\sum_{t=1}^{T}\smlnorm{\mathcal{Z}_{t-1,T}}^{2+\delta_{u}}\right)^{2/(2+\delta_{u})}
\end{multline*}
where
\begin{multline*}
\xi_{T}\defeq T^{-1}\sum_{t=1}^{T}\smlnorm{\mathcal{Z}_{t-1,T}}^{2+\delta_{u}}\\
\leq C\left(1+T^{-1}\sum_{t=1}^{T}\smlabs{T^{-1/2}y_{t-1}}^{2+\delta_{u}}+T^{-1}\sum_{t=1}^{T}\smlnorm{\Delta\vec y_{t-1}}^{2+\delta_{u}}\right)=O_{p}(1)
\end{multline*}
by Lemma~\ref{lem:bd2022}\ref{enu:mombnd} and Theorem~3.2 in \citet{BD22}.
Further, for all $\varpi\in\bar{\Pi}$,
\begin{align*}
 & T^{-1}\sum_{t=1}^{T}\indic\{\smlabs{\varpi^{\trans}\Z_{t-1,T}}>\tau\}\smlnorm{\mathcal{Z}_{t-1,T}}^{2}
 \geq T^{-1}\sum_{t=1}^{T}\indic\{\smlabs{\varpi^{\trans}\Z_{t-1,T}}>\tau\}\smlabs{\varpi^{\trans}\Z_{t-1,T}}^{2}\smlnorm{\varpi}^{-2}\\
 & =\smlnorm{\varpi}^{-2}\Biggl[\varpi^{\trans}\left(T^{-1}\sum_{t=1}^{T}\mathcal{Z}_{t-1,T}\mathcal{Z}_{t-1,T}^{\trans}\right)\varpi
 -T^{-1}\sum_{t=1}^{T}\indic\{\smlabs{\varpi^{\trans}\Z_{t-1,T}}\leq\tau\}\smlabs{\varpi^{\trans}\Z_{t-1,T}}^{2}\Biggr]\\
 & \geq\lambda_{\min}(Q_{T})-\smlnorm{\varpi}^{-2}\tau^{2},
\end{align*}
where $\lambda_{\min}(A)$ denotes the smallest eigenvalue of the
positive semi-definite matrix $A$, and
\begin{equation}
Q_{T}\defeq T^{-1}\sum_{t=1}^{T}\mathcal{Z}_{t-1,T}\mathcal{Z}_{t-1,T}^{\trans}=\sum_{t=1}^{T}z_{t-1,T}z_{t-1,T}^{\trans}\indist Q_{ZZ}\label{eq:QTlim}
\end{equation}
by Lemma~\ref{lem:moments}\ref{enu:ZZ}, where $Q_{ZZ}$ is positive
definite. Hence for $\delta>0$ as in the statement of part~\ref{enu:consistency}
the proposition,
\begin{align*}
 & \inf_{\smlnorm{\varpi}\geq\delta}\frac{1}{T}\sum_{t=1}^{T}\indic\{\smlabs{\varpi^{\trans}\Z_{t-1,T}}\geq\tau\}\\
 & \qquad\qquad\geq\left(\inf_{\smlnorm{\varpi}\geq\delta}\frac{1}{T}\sum_{t=1}^{T}\indic\{\smlabs{\varpi^{\trans}\Z_{t-1,T}}\geq\tau\}\smlnorm{\mathcal{Z}_{t-1,T}}^{2}\right)^{(2+\delta_{u})/\delta_{u}}\xi_{T}^{-2/\delta_{u}}\\
 & \qquad\qquad\geq\left(\lambda_{\min}(Q_{T})-\delta^{-2}\tau^{2}\right)^{(2+\delta_{u})/\delta_{u}}\xi_{T}^{-2/\delta_{u}}
\end{align*}
it follows that for $\epsilon>0$ as in the statement of part~\ref{enu:consistency}
of the proposition, we may choose $\kappa,\tau>0$ sufficiently small
that
\begin{multline*}
\Prob\left\{ \inf_{\smlnorm{\varpi}\geq\delta}\frac{1}{T}\sum_{t=1}^{T}\indic\{\smlabs{\varpi^{\trans}\Z_{t-1,T}}\geq\tau\}>\kappa\right\} \\
\geq\Prob\left\{ \left(\lambda_{\min}(Q_{T})-\delta^{-2}\tau^{2}\right)^{(2+\delta_{u})/\delta_{u}}>\kappa\xi_{T}^{2/\delta_{u}}\right\} >1-\epsilon/2
\end{multline*}
for all $T$ sufficiently large. In view of (\ref{eq:sbarcons}),
(\ref{eq:sbar2}) and (\ref{eq:sbar1}), it follows that
\[
\liminf_{T\goesto\infty}\Prob\left\{ \inf_{\smlnorm{\varpi}\geq\delta}[\bar{S}_{T}(\varpi)-\bar{S}_{T}(0)]>\kappa\gamma_{\tau}\right\} >1-\epsilon,
\]
whereupon the result follows by taking $\eta=\kappa\gamma_{\tau}>0$.

\paragraph{\ref{enu:centrerate}.}

For suitable choices of $w_{t-1}$, both $\bar{S}_{T}(\varpi)-\bar{S}_{T}(0)$
and $\bar{\S}_{T}(\lcof)-\bar{\S}_{T}(0)$ can be expressed in terms
of
\begin{align}
 & \sum_{t=1}^{T}\expect_{t-1}\{\smlabs{y_{t}-[x_{t-1}+w_{t-1}]_{+}}-\smlabs{y_{t}-[x_{t-1}]_{+}}\label{eq:ladsummands}\\
 & =2\sum_{t=1}^{T}\ell_{t-1}(w_{t-1})[F_{u}(-[x_{t-1}]_{-})-F_{u}(0)]\nonumber
 +2\sum_{t=1}^{T}G\{\ell_{t-1}(w_{t-1})-[x_{t-1}]_{-},-[x_{t-1}]_{-}\}\nonumber \\
 &\eqdef2(\mathrm{IV})_{T}+2(\mathrm{V})_{T},\label{eq:ladsumdecomp}
\end{align}
where the first equality follows by Lemma~\ref{lem:ladsummands}.
That same result also entails that
\begin{equation}
\smlabs{(\mathrm{IV})_{T}}\leq C\sum_{t=1}^{T}\smlabs{w_{t-1}}^{2}\indic\{x_{t-1}<0\}\label{eq:ladbnd1}
\end{equation}
and
\begin{equation}
\abs{(\mathrm{V})_{T}-\frac{1}{2}\sum_{t=1}^{T}f_{u}(-[x_{t-1}]_{-})\ell_{t-1}^{2}(w_{t-1})}\leq C\sum_{t=1}^{T}\smlabs{w_{t-1}}^{3}.\label{eq:ladbnd2}
\end{equation}
We now use (\ref{eq:ladsumdecomp})--(\ref{eq:ladbnd2}) to prove
parts~\ref{enu:centrerate} and \ref{enu:centrelim} of the proposition.

Noting that $\bar{S}_{T}(\varpi)-\bar{S}_{T}(0)$ is equal to (\ref{eq:ladsummands})
times $T^{-1}$, with $w_{t-1}=\varpi^{\trans}\Z_{t-1,T}$, we obtain
from (\ref{eq:ladsumdecomp})--(\ref{eq:ladbnd2}) that
\begin{align}
\bar{S}_{T}(\varpi)-\bar{S}_{T}(0) & \geq\frac{1}{T}\sum_{t=1}^{T}f_{u}(-[x_{t-1}]_{-})\ell_{t-1}^{2}(\varpi^{\trans}\Z_{t-1,T})\nonumber \\
 & \qquad\qquad-\smlnorm{\varpi}^{2}\frac{2C}{T}\sum_{t=1}^{T}\smlnorm{\Z_{t-1,T}}^{2}\indic\{x_{t-1}<0\}-\smlnorm{\varpi}^{3}\frac{2C}{T}\sum_{t=1}^{T}\smlnorm{\Z_{t-1,T}}^{3}\nonumber \\
 & \eqdef(\mathrm{VI})_{T}-(\mathrm{VII})_{T}-(\mathrm{VII     I})_{T}.\label{eq:Sbar}
\end{align}
To determine the limit of $(\mathrm{VI})_{T}$, note $\max_{1\leq t\leq T}\smlnorm{\Z_{t-1,T}}=O_{p}(T^{1/(2+\delta_{u})})$
by Lemma~\ref{lem:moments}\ref{enu:max}, and so for $\kappa\defeq1/(2+\delta_{u}/2)<1/2$
is bounded by $T^{\kappa}$, w.p.a.1. Define
\begin{align*}
\indic_{t-1} & \defeq\indic\{x_{t-1}>T^{\kappa}\sep\smlnorm{\Z_{t-1,T}}\leq T^{\kappa}\}, & \indic_{t-1}^{c} & \defeq1-\indic_{t-1}
\end{align*}
and observe that when $\indic_{t-1}=1$, we must have $x_{t-1}>0$
and
\[
x_{t-1}+\varpi^{\trans}\Z_{t-1,T}\geq x_{t-1}-\smlnorm{\varpi}\smlnorm{\Z_{t-1,T}}>T^{\kappa}(1-\smlnorm{\varpi}),
\]
and so the preceding must be non-negative if $\smlnorm{\varpi}\leq1$.
Deduce that for all such $\varpi$,
\begin{align*}
f_{u}(-[x_{t-1}]_{-})\ell_{t-1}^{2}(\varpi^{\trans}\Z_{t-1,T})\indic_{t-1} & =f_{u}(-[x_{t-1}]_{-})\{[x_{t-1}+\varpi^{\trans}\Z_{t-1,T}]_{+}-[x_{t-1}]_{+}\}^{2}\indic_{t-1}\\
 & =f_{u}(0)\{(x_{t-1}+\varpi^{\trans}\Z_{t-1,T})-x_{t-1}\}^{2}\indic_{t-1}\\
 & =f_{u}(0)(\varpi^{\trans}\Z_{t-1,T})^{2}\indic_{t-1}.
\end{align*}
Hence
\begin{align*}
(\mathrm{VI})_{T} & =f_{u}(0)\frac{1}{T}\sum_{t=1}^{T}(\varpi^{\trans}\Z_{t-1,T})^{2}+\Biggl[-\frac{1}{T}\sum_{t=1}^{T}f_{u}(0)(\varpi^{\trans}\Z_{t-1,T})^{2}\indic_{t-1}^{c}\\
 & \qquad\qquad+\frac{1}{T}\sum_{t=1}^{T}f_{u}(-[x_{t-1}]_{-})\ell_{t-1}^{2}(\varpi^{\trans}\Z_{t-1,T})\indic_{t-1}^{c}\Biggr] \eqdef(\mathrm{IX})_{T}+(\mathrm{X})_{T}.
\end{align*}
Since $f_{u}$ is bounded, and
\begin{align*}
\ell_{t-1}^{2}(\varpi^{\trans}\Z_{t-1,T}) & =\smlabs{[x_{t-1}+\varpi^{\trans}\Z_{t-1,T}]_{+}-[x_{t-1}]_{+}}^{2}
  \leq\smlnorm{\varpi}^{2}\smlnorm{\Z_{t-1,T}}^{2}=\smlnorm{\varpi}^{2}T\smlnorm{z_{t-1,T}}^{2}
\end{align*}
we may bound
\begin{align*}
\smlabs{(\mathrm{X})_{T}} & \leq C\smlnorm{\varpi}^{2}\sum_{t=1}^{T}\smlnorm{z_{t-1,T}}^{2}\indic_{t-1}^{c}\\
 & \leq C\smlnorm{\varpi}^{2}\left(\sum_{t=1}^{T}\smlnorm{z_{t-1,T}}^{4}\right)^{1/2}\left(\sum_{t=1}^{T}\indic_{t-1}^{c}\right)^{1/2}=\smlnorm{\varpi}^{2}o_{p}(1),
\end{align*}
by the CS inequality, where the final equality holds by Lemma~\ref{lem:moments}\ref{enu:fourthz},
and
\begin{align}
T^{-1}\sum_{t=1}^{T}\indic_{t-1}^{c} & \leq T^{-1}\sum_{t=1}^{T}\indic\{T^{-1/2}x_{t-1}\leq T^{\kappa-1/2}\}+T^{-1}\sum_{t=1}^{T}\indic\{\smlnorm{\Z_{t-1,T}}>T^{\kappa}\}\nonumber \\
 & \leq_{(1)}T^{-1}\sum_{t=1}^{T}\indic\{T^{-1/2}x_{t-1}\leq\gamma\}+\indic\left\{ \max_{1\leq t\leq T}\smlnorm{\Z_{t-1,T}}>T^{\kappa}\right\} \inprob_{(2)}0\label{eq:indcomp}
\end{align}
where $\leq_{(1)}$ holds for any fixed $\gamma>0$, for all $T$
sufficiently large, and $\inprob_{(2)}$ holds as $T\goesto\infty$
and then $\gamma\goesto0$, by Lemma~\ref{lem:integ}\ref{enu:indicbnd}
and the noted stochastic order of $\max_{1\leq t\leq T}\smlnorm{\Z_{t-1,T}}$.
Recalling the definition of $Q_{T}$ in (\ref{eq:QTlim}) above, we
thus have
\begin{equation}
(\mathrm{VI})_{T}\geq(\mathrm{IX})_{T}-\smlnorm{\varpi}^{2}o_{p}(1)=f_{u}(0)\varpi^{\trans}[Q_{T}+o_{p}(1)]\varpi\label{eq:iiibnd}
\end{equation}

Returning now to (\ref{eq:Sbar}), we see that by the same argument
that was applied to $(\mathrm{X})_{T}$, with Lemma~\ref{lem:integ}\ref{enu:indicbnd}
playing the role of (\ref{eq:indcomp}), we obtain
\begin{equation}
(\mathrm{VII})_{T}=C\smlnorm{\varpi}^{2}o_{p}(1).\label{eq:ivlim}
\end{equation}
Regarding $(\mathrm{VIII})_{T}$, it follows from Lemma~\ref{lem:bd2022}\ref{enu:mombnd}
(since $2+\delta_{u}>3$ under Assumption~\ref{ass:LAD}), and Theorem~3.2
in \citet{BD22}, that
\begin{equation}
\smlabs{(\mathrm{VIII})_{T}}=\smlnorm{\varpi}^{3}\frac{C}{T}\sum_{t=1}^{T}\smlnorm{\Z_{t-1,T}}^{3}=\smlnorm{\varpi}^{3}O_{p}(1).\label{eq:vbnd}
\end{equation}
It follows from (\ref{eq:Sbar}), (\ref{eq:iiibnd}), (\ref{eq:ivlim})
and (\ref{eq:vbnd}) that there exist random sequences $\xi_{1,T}=o_{p}(1)$
and $\xi_{2,T}=O_{p}(1)$ such that
\begin{align*}
\bar{S}_{T}(\varpi)-\bar{S}_{T}(0) & \geq f_{u}(0)\varpi^{\trans}[Q_{T}+\xi_{1,T}]\varpi+\xi_{2,T}\smlnorm{\varpi}^{3}\\
 & \geq f_{u}(0)[\lambda_{\min}(Q_{T})+\xi_{1,T}]\smlnorm{\varpi}^{2}+\xi_{2,T}\smlnorm{\varpi}^{3}
\end{align*}
for all $\varpi\in\bar{\Pi}$. Now let $\epsilon>0$ be as in the
statement of part~\ref{enu:centrerate} of the proposition. Since
$Q_{T}\indist Q_{ZZ}$, which is a.s.\ positive definite, we may
choose $\eta>0$ sufficiently small that
\[
\Prob\{f_{u}(0)[\lambda_{\min}(Q_{T})+\xi_{1,T}]<2\eta\}<\epsilon/2
\]
for all $T$ sufficiently large. Moreover, we may take $\delta>0$
such that
\[
\Prob\{\smlnorm{\varpi}^{3}\xi_{2,T}<-\eta\smlnorm{\varpi}^{2}\sep\forall\smlnorm{\varpi}\leq\delta\}\leq\Prob\{\smlabs{\xi_{2,T}}>\eta\delta^{-1}\}<\epsilon/2,
\]
for all $T$ sufficiently large, and therefore
\begin{align*}
 & \Prob\{\bar{S}_{T}(\varpi)-\bar{S}_{T}(0)\geq\eta\smlnorm{\varpi}^{2}\sep\forall\smlnorm{\varpi}\leq\delta\}\\
 & \qquad\qquad\qquad\geq\Prob\{f_{u}(0)[\lambda_{\min}(Q_{T})+\xi_{1,T}]\smlnorm{\varpi}^{2}\geq2\eta\smlnorm{\varpi}^{2}\sep\forall\smlnorm{\varpi}\leq\delta\}\\
 & \qquad\qquad\qquad\qquad\qquad\qquad\qquad\qquad\qquad-\Prob\{\smlnorm{\varpi}^{3}\xi_{2,T}<-\eta\smlnorm{\varpi}^{2}\sep\forall\smlnorm{\varpi}\leq\delta\}\\
 & \qquad\qquad\qquad\geq\Prob\{f_{u}(0)[\lambda_{\min}(Q_{T})+\xi_{1,T}]\geq2\eta\}-\Prob\{\smlabs{\xi_{2,T}}>\eta\delta^{-1}\}>1-\epsilon
\end{align*}
for all $T$ sufficiently large.

\paragraph{\ref{enu:centrelim}.}

Noting in this case that $\bar{\S}_{T}(\lcof)-\bar{\S}_{T}(0)$ equals
(\ref{eq:ladsummands}) with $w_{t-1}=\lcof^{\trans}z_{t-1,T}$, and
recalling that
\begin{equation}
\smlabs{\ell_{t-1}(\lcof^{\trans}z_{t-1,T})}=\smlabs{[x_{t-1}+\lcof^{\trans}z_{t-1,T}]_{+}-[x_{t-1}]_{+}}\leq\smlabs{\lcof^{\trans}z_{t-1,T}}\label{eq:lipbnd}
\end{equation}
we obtain from (\ref{eq:ladsumdecomp})--(\ref{eq:ladbnd2}) that
for all $\lcof\in\reals^{k+1}$,
\begin{align}
 & \abs{[\bar{\S}_{T}(\lcof)-\bar{\S}_{T}(0)]-\sum_{t=1}^{T}f_{u}(-[x_{t-1}]_{-})\ell_{t-1}^{2}(\lcof^{\trans}z_{t-1,T})}\label{eq:Sdiff}\\
 & \leq C\sum_{t=1}^{T}\smlabs{\lcof^{\trans}z_{t-1,T}}^{2}\indic\{x_{t-1}<0\}+C\sum_{t=1}^{T}\smlabs{\lcof^{\trans}z_{t-1,T}}^{3}\nonumber \\
 & \leq C\smlnorm{\lcof}^{2}\left(\sum_{t=1}^{T}\smlnorm{z_{t-1,T}}^{4}\right)^{1/2}\left(\sum_{t=1}^{T}\indic\{x_{t-1}<0\}\right)^{1/2}
 +C\smlnorm{\lcof}^{3}\max_{1\leq s\leq T}\smlnorm{z_{s-1,T}}\sum_{t=1}^{T}\smlnorm{z_{t-1,T}}^{2}\nonumber \\
 &=o_{p}(1)\nonumber
\end{align}
uniformly on compact subsets of $\reals^{k+1}$, by the CS inequality,
Theorem~3.2 of \citet{BD22} and Lemmas~\ref{lem:integ}\ref{enu:indicbnd}, \ref{lem:moments}\ref{enu:ZZ}, \ref{lem:moments}\ref{enu:max}
and \ref{lem:moments}\ref{enu:fourthz}.

It remains to consider the l.h.s.\ of (\ref{eq:Sdiff}). Let
\begin{align}
\indic_{t-1}(\lcof) & \defeq\indic\{x_{t-1}\geq0\sep x_{t-1}+\lcof^{\trans}z_{t-1,T}\geq0\}, & \indic_{t-1}^{c}(\lcof) & \defeq1-\indic_{t-1}(\lcof).\label{eq:bothpos}
\end{align}
We note that
\begin{align}
\indic_{t-1}^{c}(\lcof) & =\indic\{x_{t-1}<0\}+\indic\{x_{t-1}\geq0\}\indic\{x_{t-1}+\lcof^{\trans}z_{t-1,T}<0\}\nonumber \\
 & =\indic\{x_{t-1}<0\}+\indic\{\lcof^{\trans}z_{t-1,T}<-x_{t-1}\leq0\}\nonumber \\
 & \leq\indic\{x_{t-1}<0\}+\indic\{\smlabs{x_{t-1}}\leq\smlnorm{\lcof^{\trans}z_{t-1,T}}\}\nonumber \\
 & \leq\indic\{x_{t-1}<0\}+\indic\{\smlabs{x_{t-1}}\leq1\}+\indic\{1\leq\smlnorm{\lcof^{\trans}z_{t-1,T}}\}\nonumber \\
 & \leq2\cdot\indic\{x_{t-1}\leq1\}+\indic\left\{ 1\leq\smlnorm{\lcof}\sup_{1\leq s\leq T}\smlnorm{z_{s-1,T}}\right\} \eqdef2\cdot\indic_{1,t-1}^{c}+\indic_{2}^{c}(\lcof),\label{eq:compldecomp}
\end{align}
where
\begin{equation}
\sum_{t=1}^{T}\indic_{1,t-1}^{c}=\sum_{t=1}^{T}\indic\{x_{t-1}\leq1\}=o_{p}(T)\label{eq:1cnegl}
\end{equation}
and, since $\sup_{1\leq s\leq T}\smlnorm{z_{s-1,T}}=o_{p}(1)$, $\indic_{2}^{c}(\lcof)=o_{p}(1)$
uniformly on compact subsets of $\reals^{k+1}$. Now decompose
\begin{align}
 & \sum_{t=1}^{T}f_{u}(-[x_{t-1}]_{-})\ell_{t-1}^{2}(\lcof^{\trans}z_{t-1,T})\nonumber \\
 & =\sum_{t=1}^{T}f_{u}(0)\ell_{t-1}^{2}(\lcof^{\trans}z_{t-1,T})\indic_{t-1}(\lcof)
 +\sum_{t=1}^{T}f_{u}(-[x_{t-1}]_{-})\ell_{t-1}^{2}(\lcof^{\trans}z_{t-1,T})\indic_{t-1}^{c}(\lcof)\nonumber \\
 & =\sum_{t=1}^{T}f_{u}(0)(\lcof^{\trans}z_{t-1,T})^{2}-\sum_{t=1}^{T}f_{u}(0)(\lcof^{\trans}z_{t-1,T})^{2}\indic_{t-1}^{c}(\lcof)\nonumber \\
 & \qquad\qquad\qquad\qquad\qquad\qquad+\sum_{t=1}^{T}f_{u}(-[x_{t-1}]_{-})\ell_{t-1}^{2}(\lcof^{\trans}z_{t-1,T})\indic_{t-1}^{c}(\lcof)\nonumber \\
 & \eqdef(\mathrm{XI})_{T}+(\mathrm{XII})_{T}+(\mathrm{XIII})_{T}\label{eq:quaddecmp}
\end{align}
where in obtaining the second equality, we have used the fact that
\begin{align*}
\ell_{t-1}^{2}(\lcof^{\trans}z_{t-1,T})\indic_{t-1}(\lcof) & =\{[x_{t-1}+\lcof^{\trans}z_{t-1,T}]_{+}-[x_{t-1}]_{+}\}^{2}\indic\{x_{t-1}\geq0\sep x_{t-1}+\lcof^{\trans}z_{t-1,T}\geq0\}\\
 & =(\lcof^{\trans}z_{t-1,T})^{2}\indic_{t-1}(\lcof)
\end{align*}
and similarly $f_{u}(-[x_{t-1}]_{-})\indic_{t-1}(\lcof)=f_{u}(0)\indic_{t-1}(\lcof)$.

Regarding the final r.h.s.\ term in (\ref{eq:quaddecmp}), in view
of (\ref{eq:lipbnd}) and the boundedness of $f_{u}$, there exists
a $C<\infty$ such that
\begin{align}
\smlabs{(\mathrm{XIII})_{T}} & \leq C\smlnorm{\lcof}^{2}\sum_{t=1}^{T}\smlnorm{z_{t-1,T}}^{2}\{2\cdot\indic_{1,t-1}^{c}+\indic_{2}^{c}(\lcof)\}\nonumber \\
 & \leq_{(1)}C\smlnorm{\lcof}^{2}\left[2\left(\sum_{t=1}^{T}\smlnorm{z_{t-1,T}}^{4}\right)^{1/2}\left(\sum_{t=1}^{T}\indic_{1,t-1}^{c}\right)^{1/2}+\indic_{2}^{c}(\lcof)\sum_{t=1}^{T}\smlnorm{z_{t-1,T}}^{2}\right]\nonumber \\
 & =_{(2)}o_{p}(1)\label{eq:ixnegl}
\end{align}
uniformly on compact subsets of $\reals^{k+1}$, where $\leq_{(1)}$
holds by the CS inequality, and $=_{(2)}$ by Lemmas~\ref{lem:moments}\ref{enu:ZZ}
and \ref{lem:moments}\ref{enu:fourthz}, (\ref{eq:1cnegl}), and
the noted asymptotic negligibility (on compacta) of $\indic_{2}^{c}(\lcof)$.
An identical argument yields that $(\mathrm{XII})_{T}=o_{p}(1)$,
uniformly on compact subsets of $\reals^{k+1}$. This leaves the first
r.h.s.\ term in (\ref{eq:quaddecmp}), for which
\[
(\mathrm{XI})_{T}=f_{u}(0)\lcof^{\trans}\sum_{t=1}^{T}z_{t-1,T}z_{t-1,T}^{\trans}\lcof\indist f_{u}(0)\lcof^{\trans}Q_{ZZ}\lcof
\]
uniformly on compact subsets of $\reals^{k+1}$, by Lemma~\ref{lem:moments}\ref{enu:ZZ},
as required.\hfill\qedsymbol{}

\subsubsection{Proof of Proposition~\ref{prop:mg}}

Define
\begin{equation}
g_{t}(w_{t-1})\defeq\smlabs{y_{t}-[x_{t-1}+w_{t-1}]_{+}}-\smlabs{y_{t}-[x_{t-1}]_{+}},\label{eq:gt}
\end{equation}
for which
\begin{equation}
\smlabs{g_{t}(w_{t-1})-g_{t}(\tilde{w}_{t-1})}\leq\smlabs{w_{t-1}-\tilde{w}_{t-1}}.\label{eq:glip}
\end{equation}
For suitable choices of $w_{t-1}$, both $U_{T}(\varpi)-U_{T}(0)$
and $\U_{T}(\lcof)-\U_{T}(0)$ can be expressed in terms of
\[
\sum_{t=1}^{T}[g_{t}(w_{t-1})-\expect_{t-1}g_{t}(w_{t-1})].
\]
Since $Z_{T-1,T}=\sum_{s=1}^{T}\smlnorm{z_{s-1,T}}^{2}=O_{p}(1)$
by Lemma~\ref{lem:moments}\ref{enu:ZZ}, for $\epsilon>0$ as given
in the statement of part~\ref{enu:serate} of the lemma, we may choose
$K_{\epsilon}$ sufficiently large that
\[
\limsup_{T\goesto\infty}\Prob\{Z_{T-1,T}\geq K_{\epsilon}\}<\epsilon.
\]
For each $T\in\naturals$, define
\begin{equation}
\varsigma_{T}\defeq\inf\{t\in\naturals\mid Z_{t,T}>K_{\epsilon}\}.\label{eq:stopping}
\end{equation}
Then for each $t\in\naturals$,
\[
\{\varsigma_{T}>t\}=\{Z_{t,T}\leq K_{\epsilon}\}\in\filt_{t},
\]
and so $\varsigma_{T}$ is a stopping time with respect to $\{\filt_{t}\}$.
By construction
\begin{align}
\sum_{s=1}^{T\pmin\varsigma_{T}}\smlnorm{z_{s-1,T}}^{2}=Z_{T\pmin\varsigma_{T}-1,T} & =\indic\{\varsigma_{T}\geq T\}Z_{T-1,T}+\indic\{\varsigma_{T}<T\}Z_{\varsigma_{T}-1,T}\leq K_{\epsilon}.\label{eq:stopbbnd}
\end{align}

\paragraph{\ref{enu:serate}.}
Define
\begin{equation}
U_{t,T}(\varpi)\defeq\frac{1}{T}\sum_{s=1}^{t}\{\smlabs{y_{s}-[x_{s-1}+\varpi^{\trans}\Z_{s-1,T}]_{+}}-\expect_{s-1}\smlabs{y_{s}-[x_{s-1}+\varpi^{\trans}\Z_{s-1,T}]_{+}}\}\label{eq:WtT}
\end{equation}
so that $U_{T}(\varpi)=U_{T,T}(\varpi)$. Consider the martingale
array $\{M_{t,T}(\varpi),\filt_{t}\}_{t=1}^{T}$ defined by
\begin{align*}
M_{t,T}(\varpi) & \defeq T^{1/2}[U_{t,T}(\varpi)-U_{t,T}(0)] =\frac{1}{T^{1/2}}\sum_{s=1}^{t}[g_{s}(\varpi^{\trans}\Z_{s-1,T})-\expect_{s-1}g_{s}(\varpi^{\trans}\Z_{s-1,T})].
\end{align*}
Define the stopped process $N_{t,T}(\varpi)\defeq M_{t\pmin\varsigma_{T},T}(\varpi)$,
and the associated filtration $\filtg_{t}\defeq\filt_{t\pmin\varsigma_{T}}$,
for $t\in\{1,\ldots,T\}$. Observe that since
\[
Z_{T-1,T}\leq K_{\epsilon}\implies\varsigma_{T}\geq T\implies N_{T,T}(\varpi)=M_{T,T}(\varpi)\sep\forall\varpi\in\bar{\Pi},
\]
the event on the l.h.s.\ of (\ref{eq:ratebnd}) may be written as
\begin{align*}
\left\{ \sup_{\smlnorm{\varpi}\leq\delta}\smlabs{M_{T,T}(\varpi)}\geq\kappa\sep Z_{T-1,T}\leq K_{\epsilon}\right\}  & =\left\{ \sup_{\smlnorm{\varpi}\leq\delta}\smlabs{N_{T,T}(\varpi)}\geq\kappa\sep Z_{T-1,T}\leq K_{\epsilon}\right\} \\
 & \subseteq\left\{ \sup_{\smlnorm{\varpi}\leq\delta}\smlabs{N_{T,T}(\varpi)}\geq\kappa\right\}.
\end{align*}
It remains to control the probability of the r.h.s.

By the stopping time lemma (in particular, the corollary to Theorem~9.3.4
in \citealp{Chung01}), $\{N_{t,T}(\varpi),\filtg_{t}\}_{t=1}^{T}$
is a martingale array for every $\varpi\in\bar{\Pi}$, with increments
\begin{align}
\Delta N_{t,T}(\varpi) & \defeq N_{t,T}(\varpi)-N_{t-1,T}(\varpi)
 =M_{t\pmin\varsigma_{T},T}(\varpi)-M_{(t-1)\pmin\varsigma_{T},T}(\varpi)\nonumber \\
 & =T^{-1/2}[g_{t}(\varpi^{\trans}\Z_{t-1,T})-\expect_{t-1}g_{t}(\varpi^{\trans}\Z_{t-1,T})]\indic\{\varsigma_{T}>t-1\}.\label{eq:Nincr}
\end{align}
In view of (\ref{eq:glip}) and the $\filt_{t-1}$-measurability of
$\Z_{t-1,T}$, we have that the martingale sum of squares of $N_{t,T}(\varpi)-N_{t,T}(\tilde{\varpi})$,
at $t=T$, satisfies
\begin{align}
 & \sum_{t=1}^{T}[\Delta N_{t,T}(\varpi)-\Delta N_{t,T}(\tilde{\varpi})]^{2}\nonumber \\
 & \qquad\qquad=T^{-1}\sum_{t=1}^{T}\{[g_{t}(\varpi^{\trans}\Z_{t-1,T})-g_{t}(\tilde{\varpi}^{\trans}\Z_{t-1,T})]\nonumber \\
 & \qquad\qquad\qquad\qquad\qquad\qquad-\expect_{t-1}[g_{t}(\varpi^{\trans}\Z_{t-1,T})-g_{t}(\tilde{\varpi}^{\trans}\Z_{t-1,T})]\}^{2}\indic\{\varsigma_{T}>t-1\}\nonumber \\
 & \qquad\qquad\leq4\smlnorm{\varpi-\tilde{\varpi}}^{2}T^{-1}\sum_{t=1}^{T}\smlnorm{\Z_{t-1,T}}^{2}\indic\{\varsigma_{T}\geq t\}
 =4\smlnorm{\varpi-\tilde{\varpi}}^{2}Z_{T\pmin\varsigma_{T}-1,T}\nonumber \\
 & \qquad\qquad\leq4\smlnorm{\varpi-\tilde{\varpi}}^{2}K_{\epsilon},\label{eq:Nssq}
\end{align}
where the final equality holds by (\ref{eq:stopbbnd}).

Therefore by Burkholder's inequality (Theorem~2.10 in \citealp{HH80}),
we have for any $\varpi,\tilde{\varpi}\in\bar{\Pi}$ and $j\in\naturals$
that
\begin{align*}
\expect\smlabs{N_{T,T}(\varpi)-N_{T,T}(\tilde{\varpi})}^{2j} & \leq C\expect\abs{\sum_{t=1}^{T}[\Delta N_{t,T}(\tilde{\varpi})-\Delta N_{t,T}(\tilde{\varpi})]^{2}}^{j}\\
 & \leq4^{j}CK_{\epsilon}^{j}\smlnorm{\varpi-\tilde{\varpi}}^{2j}
\end{align*}
whence
\[
\smlnorm{N_{T,T}(\varpi)-N_{T,T}(\tilde{\varpi})}_{2j}\leq(4C^{1/j}K_{\epsilon})^{1/2}\smlnorm{\varpi-\tilde{\varpi}}.
\]

Now fix $j\in\naturals$ such that $2j>k+1$. Observe that $N_{T,T}(0)=0$
by construction. By Corollary~2.2.5 of \citet{VVW96}, we therefore
have
\begin{align*}
\expect\sup_{\smlnorm{\varpi}\leq\delta}\smlabs{N_{T,T}(\varpi)}\leq\norm{\sup_{\smlnorm{\varpi}\leq\delta}\smlabs{N_{T,T}(\varpi)}}_{2j} & \leq K_{\epsilon}^{\prime}\int_{0}^{2\delta}(2\delta/u)^{(k+1)/2j}\diff u\\
 & =\delta\cdot2K_{\epsilon}^{\prime}\int_{0}^{1}u^{-(k+1)/2j}\diff u\eqdef C_{\epsilon}\delta
\end{align*}
where $K_{\epsilon}^{\prime}$ and $C_{\epsilon}$ depend only on
$K_{\epsilon}$ (with $k$ and $j$ fixed), noting that the integral
is convergent since $(k+1)/2j<1$ by choice of $j$. Hence by Chebyshev's
inequality
\[
\Prob\left\{ \sup_{\smlnorm{\varpi}\leq\delta}\smlabs{N_{T,T}(\varpi)}\geq\kappa\right\} \leq\frac{\expect\sup_{\smlnorm{\varpi}\leq\delta}\smlabs{N_{T,T}(\varpi)}}{\kappa}\leq\frac{C_{\epsilon}\delta}{\kappa}.
\]

\paragraph{\ref{enu:selim}.}

The argument here is similar to that used to prove part~\ref{enu:serate}.
Analogously to (\ref{eq:WtT}), define
\begin{equation}
\U_{t,T}(\lcof)\defeq\sum_{s=1}^{t}\{\smlabs{y_{s}-[x_{s-1}+\lcof^{\trans}z_{s-1,T}]_{+}}-\expect_{s-1}\smlabs{y_{s}-[x_{s-1}+\lcof^{\trans}z_{s-1,T}]_{+}}\}\label{eq:WrT}
\end{equation}
so that $\U_{T}(\lcof)=\U_{T,T}(\lcof)$. Defined the stopped process
$\N_{t,T}(\lcof)\defeq\U_{t\pmin\varsigma_{T},T}(\lcof)$, so that
$\{\N_{t,T}(\lcof),\filtg_{t}\}_{t=1}^{T}$ is a martingale array,
for each $\lcof\in\reals^{k+1}$. Define
\[
\L_{t,T}(\lcof)\defeq\U_{t,T}(\lcof)-\N_{t,T}(\lcof)=\U_{t,T}(\lcof)-\U_{t\pmin\varsigma_{T},T}(\lcof)
\]
so that
\begin{multline}
\Prob\left\{ \sup_{\smlnorm{\lcof-\tilde{\lcof}}\leq\delta}\smlabs{\U_{T}(\lcof)-\U_{T}(\tilde{\lcof})}>2\epsilon\right\} \\
\leq\Prob\left\{ \sup_{\smlnorm{\lcof-\tilde{\lcof}}\leq\delta}\smlabs{\N_{T,T}(\lcof)-\N_{T,T}(\tilde{\lcof})}>\epsilon\right\} +\Prob\left\{ \sup_{\smlnorm{\lcof-\tilde{\lcof}}\leq\delta}\smlabs{\L_{T,T}(\lcof)-\L_{T,T}(\tilde{\lcof})}>\epsilon\right\} .\label{eq:Wse}
\end{multline}
Thus it suffices to show that, for a suitable choice of $\delta>0$,
both r.h.s.\ probabilities can eventually be bounded by $\epsilon$.
Regarding the second of these, the definition of $\L_{t,T}$ entails
that if $\varsigma_{T}\geq T$, then $\L_{T,T}(\lcof)=0$ for all
$\lcof\in\reals^{k+1}$. Hence, recalling the choice of $K_{\epsilon}$
made above,
\[
\Prob\left\{ \sup_{\smlnorm{\lcof-\tilde{\lcof}}\leq\delta}\smlabs{\L_{T,T}(\lcof)-\L_{T,T}(\tilde{\lcof})}>\epsilon\right\} \leq1-\Prob\{\varsigma_{T}\geq T\}=1-\Prob\{Z_{T-1,T}\leq K_{\epsilon}\}<\epsilon,
\]
where the final bound holds for all $T$ sufficiently large.

It remains to bound the first r.h.s.\ term in (\ref{eq:Wse}). Similarly
to (\ref{eq:Nincr}), we have that the increments of $\N_{t,T}(\lcof)$
are equal to
\begin{align*}
\Delta\N_{t,T}(\lcof) & =\U_{t\pmin\varsigma_{T},T}(\lcof)-\U_{(t-1)\pmin\varsigma_{T},T}(\lcof) =[g_{t}(\lcof^{\trans}z_{t-1,T})-\expect_{t-1}g_{t}(\lcof^{\trans}z_{t-1,T})]\indic\{\varsigma_{T}>t-1\}
\end{align*}
and so, similarly to (\ref{eq:Nssq}),
\[
\sum_{t=1}^{T}[\Delta\N_{t,T}(\lcof)-\Delta\N_{t,T}(\lcof)]^{2}\leq4\smlnorm{\lcof-\tilde{\lcof}}^{2}\sum_{t=1}^{T}\smlnorm{z_{t-1,T}}^{2}\indic\{\varsigma_{T}\geq t\}\leq4K_{\epsilon}\smlnorm{\lcof-\tilde{\lcof}}^{2}.
\]
Hence we have again by Burkholder's inequality that
\[
\expect\smlabs{\N_{T,T}(\lcof)-\N_{T,T}(\tilde{\lcof})}^{2j}\leq4^{j}CK_{\epsilon}^{j}\smlnorm{\lcof-\tilde{\lcof}}^{2j}.
\]
Taking $j$ such that $2j>k+1$, it follows by Kolmogorov's continuity
criterion (Corollary~16.9 in \citealp{Kal01}) that there exists
a $\delta>0$ such that
\[
\limsup_{T\goesto\infty}\Prob\left\{ \sup_{\smlnorm{r-\tilde{r}}\leq\delta}\smlabs{\N_{T,T}(\lcof)-\N_{T,T}(\tilde{\lcof})}>\epsilon\right\} <\epsilon.
\]

\paragraph{\ref{enu:mglimit}.}

Fix an $\lcof\in\reals^{k+1}$. Recalling (\ref{eq:gt}) above, so
that
\begin{align*}
g_{t}(\lcof^{\trans}z_{t-1,T}) & =\smlabs{y_{t}-[x_{t-1}+\lcof^{\trans}z_{t-1,T}]_{+}}-\smlabs{y_{t}-[x_{t-1}]_{+}}\\
 & =\smlabs{(y_{t}-[x_{t-1}]_{+})-([x_{t-1}+\lcof^{\trans}z_{t-1,T}]_{+}-[x_{t-1}]_{+})}-\smlabs{y_{t}-[x_{t-1}]_{+}}
\end{align*}
and we may write
\begin{equation}
\U_{T}(\lcof)-\U_{T}(0)=\sum_{t=1}^{T}[g_{t}(\lcof^{\trans}z_{t-1,T})-\expect_{t-1}g_{t}(\lcof^{\trans}z_{t-1,T})].\label{eq:WTrep}
\end{equation}
Applying the identity (\ref{eq:herce}) to $g_{t}(\lcof^{\trans}z_{t-1,T})$
with
\begin{align*}
A_{t} & \defeq y_{t}-[x_{t-1}]_{+}=[x_{t-1}+u_{t}]_{+}-[x_{t-1}]_{+}\\
B_{t-1}(\lcof) & \defeq[x_{t-1}+\lcof^{\trans}z_{t-1,T}]_{+}-[x_{t-1}]_{+},
\end{align*}
yields
\begin{align}
g_{t}(\lcof^{\trans}z_{t-1,T}) & =-\sgn(A_{t})B_{t-1}(\lcof)+\indic\{A_{t}=0\}\smlabs{B_{t-1}(\lcof)}\nonumber \\
 & \qquad+2[B_{t-1}(\lcof)-A_{t}][\indic\{B_{t-1}(\lcof)>A_{t}>0\}-\indic\{B_{t-1}(\lcof)<A_{t}<0\}]\nonumber \\
 & \eqdef\upsilon_{1,t}+\upsilon_{2,t}+2\upsilon_{3,t}.\label{eq:gdecomp}
\end{align}

We will show that only $\upsilon_{1t}$ contributes non-negligibly
to (\ref{eq:WTrep}), as $T\goesto\infty$. Noting that
\begin{equation}
\smlabs{B_{t-1}(\lcof)}\leq\smlabs{\lcof^{\trans}z_{t-1,T}}\label{eq:Btbound}
\end{equation}
and that $A_{t}=0$ with nonzero probability only if $x_{t-1}\leq0$,
we obtain
\begin{align*}
\expect_{t-1}\upsilon_{2,t}^{2} & =\smlabs{B_{t-1}(\lcof)}^{2}\expect_{t-1}\indic\{A_{t}=0\}\leq\smlabs{\lcof^{\trans}z_{t-1,T}}^{2}\indic\{x_{t-1}\leq0\}.
\end{align*}
Further, since $B_{t-1}(\lcof)>A_{t}>0$ and $B_{t-1}(\lcof)<A_{t}<0$
both imply that
\[
\smlabs{B_{t-1}(\lcof)-A_{t}}\leq\smlabs{B_{t-1}(\lcof)}\text{ and }\smlabs{u_{t}}\leq\smlabs{\lcof^{\trans}z_{t-1,T}},
\]
we have
\begin{align*}
\expect_{t-1}\upsilon_{3,t}^{2} & \leq\smlabs{B_{t-1}(\lcof)}^{2}\indic\{\smlabs{u_{t}}\leq\smlabs{\lcof^{\trans}z_{t-1,T}}\}\\
 & =\smlabs{B_{t-1}(\lcof)}^{2}[F_{u}(\smlabs{\lcof^{\trans}z_{t-1,T}})-F_{u}(-\smlabs{\lcof^{\trans}z_{t-1,T}})]
  \leq C\smlabs{\lcof^{\trans}z_{t-1,T}}^{3}
\end{align*}
for some $C<\infty$ by the mean value theorem, since $f_{u}$ is
bounded. Deduce that
\begin{align*}
 & \sum_{t=1}^{T}\expect_{t-1}[v_{2,t}+2v_{3,t}-\expect_{t-1}(v_{2,t}+2v_{3,t})]^{2}
 \leq\sum_{t=1}^{T}\expect_{t-1}(v_{2,t}^{2}+4v_{3,t}^{2})\\
 & \leq\smlnorm{\lcof}^{2}\sum_{t=1}^{T}\smlnorm{z_{t-1,T}}^{2}\indic\{x_{t-1}\leq0\}+8C\smlnorm{\lcof}^{3}\sum_{t=1}^{T}\smlnorm{z_{t-1,T}}^{3}\\
 & \leq\smlnorm{\lcof}^{2}\left(\sum_{t=1}^{T}\smlnorm{z_{t-1,T}}^{4}\right)^{1/2}\left(\sum_{t=1}^{T}\indic\{x_{t-1}\leq0\}\right)^{1/2}
 +4C\smlnorm{\lcof}^{3}\max_{1\leq s\leq T}\smlnorm{z_{s-1,T}}\sum_{t=1}^{T}\smlnorm{z_{t-1,T}}^{2}\\
 &=o_{p}(1)
\end{align*}
by Lemmas~\ref{lem:integ}\ref{enu:indicbnd}, \ref{lem:moments}\ref{enu:ZZ},
\ref{lem:moments}\ref{enu:max}, and \ref{lem:moments}\ref{enu:fourthz}.
It follows by (\ref{eq:WTrep}), (\ref{eq:gdecomp}) and Corollary~3.1
of \citet{HH80} that
\begin{align*}
\U_{T}(\lcof)-\U_{T}(0) & =-\sum_{t=1}^{T}B_{t-1}(\lcof)[\sgn(A_{t})-\expect_{t-1}\sgn(A_{t})]+o_{p}(1)\\
 & =-\sum_{t=1}^{T}B_{t-1}(\lcof)e_{t}+o_{p}(1)
\end{align*}
for $e_{t}$ the (bounded) m.d.s.\ defined in (\ref{eq:emds}) above.

It remains to determine the weak limit of the r.h.s. To that end we
recall the argument given in the proof of Proposition~\ref{prop:centring}\ref{enu:centrelim},
that by defining
\begin{align*}
\indic_{t-1}(\lcof) & \defeq\indic\{x_{t-1}\geq0\sep x_{t-1}+\lcof^{\trans}z_{t-1,T}\geq0\} & \indic_{t-1}^{c}(\lcof) & \defeq1-\indic_{t-1}(\lcof)
\end{align*}
as in (\ref{eq:bothpos}) above, we obtain the bound
\begin{align}
\indic_{t-1}^{c}(\lcof) & \leq2\cdot\indic\{x_{t-1}\leq1\}+\indic\left\{ 1\leq\smlnorm{\lcof}\sup_{1\leq s\leq T}\smlnorm{z_{s-1,T}}\right\} \nonumber \\
 & \eqdef2\cdot\indic_{1,t-1}^{c}+\indic_{2}^{c}(\lcof),\label{eq:complagain}
\end{align}
per (\ref{eq:compldecomp}) above. Moreover,
\begin{align*}
B_{t-1}(\lcof)\indic_{t-1}(\lcof) & =\{[x_{t-1}+\lcof^{\trans}z_{t-1,T}]_{+}-[x_{t-1}]_{+}\}\indic_{t-1}(\lcof)=(\lcof^{\trans}z_{t-1,T})\indic_{t-1}(\lcof).
\end{align*}
Therefore
\begin{align*}
\U_{T}(\lcof)-\U_{T}(0) & =-\sum_{t=1}^{T}\indic_{t-1}(\lcof)B_{t-1}(\lcof)e_{t}-\sum_{t=1}^{T}\indic_{t-1}^{c}(\lcof)B_{t-1}(\lcof)e_{t}+o_{p}(1)\\
 & =-\sum_{t=1}^{T}(\lcof^{\trans}z_{t-1,T})e_{t}+\sum_{t=1}^{T}\indic_{t-1}^{c}(\lcof)[\lcof^{\trans}z_{t-1,T}-B_{t-1}(\lcof)]e_{t}+o_{p}(1).
\end{align*}
To see that the second term is asymptotically negligible, observe
that this is a martingale with conditional variance
\begin{align*}
\sum_{t=1}^{T}\indic_{t-1}^{c}(\lcof)[\lcof^{\trans}z_{t-1,T}-B_{t-1}(\lcof)]^{2}\expect_{t-1}e_{t}^{2} & \leq_{(1)}C\smlnorm{\lcof}^{2}\sum_{t=1}^{T}\indic_{t-1}^{c}(\lcof)\smlnorm{z_{t-1,T}}^{2}\\
 & \leq_{(2)}C\smlnorm{\lcof}^{2}\sum_{t=1}^{T}\smlnorm{z_{t-1,T}}^{2}\{2\cdot\indic_{1,t-1}^{c}+\indic_{2}^{c}(\lcof)\}\\
 & =_{(3)}o_{p}(1)
\end{align*}
where $\leq_{(1)}$ follows by (\ref{eq:Btbound}) and the boundedness
of $e_{t}$, $\leq_{(2)}$ by (\ref{eq:complagain}) and $=_{(3)}$
by the arguments that yielded (\ref{eq:ixnegl}) in the proof of Proposition~\ref{prop:centring}\ref{enu:centrelim}.
Hence by Corollary~3.1 of \citet{HH80},
\[
\U_{T}(\lcof)-\U_{T}(0)=-\lcof^{\trans}\sum_{t=1}^{T}z_{t-1,T}e_{t}+o_{p}(1).
\]

The martingale on the r.h.s.\ of the preceding is
\begin{equation}
\sum_{t=1}^{T}z_{t-1,T}e_{t}=T^{-1/2}\sum_{t=1}^{T}\begin{bmatrix}1\\
T^{-1/2}y_{t-1}\\
\Delta\vec y_{t-1}
\end{bmatrix}e_{t}.\label{eq:sumze}
\end{equation}
To determine its weak limit, consider the vector martingale process
\[
\mathbb{M}_{T}(\tau T)\defeq T^{-1/2}\sum_{t=1}^{\smlfloor{\tau T}}\begin{bmatrix}u_{t}\\
e_{t}\\
\Delta\vec y_{t-1}e_{t}
\end{bmatrix}
\]
for $\tau\in[0,1]$. Each of the elements of $\mathbb{M}_{T}$ satisfy a conditional
Lyapunov condition, for every $\tau\in[0,1]$: the first two elements
because $\expect_{t-1}\smlabs{u_{t}}^{2+\delta_{u}}$ and $e_{t}$
are respectively bounded, and the final $k-1$ elements because
\[
T^{-1-\delta_{u}/2}\sum_{t=1}^{T}\smlnorm{\Delta\vec y_{t-1}}^{2+\delta_{u}}\expect_{t-1}\smlabs{e_{t}}^{2+\delta_{u}}\leq CT^{-\delta_{u}/2}O_{p}(1)=o_{p}(1)
\]
by Lemma~\ref{lem:bd2022}\ref{enu:mombnd}. Therefore, to apply
a (functional) CLT to $\mathbb{M}_{T}$, it suffices to determine the probability
limit of the conditional variance process
\begin{align}
\smlcv{\mathbb{M}_{T}(\tau T)} & =\frac{1}{T}\sum_{t=1}^{\smlfloor{\tau T}}\expect_{t-1}\begin{bmatrix}u_{t}^{2} & e_{t}u_{t} & \Delta\vec y_{t-1}^{\trans}e_{t}u_{t}\\
e_{t}u_{t} & e_{t}^{2} & \Delta\vec y_{t-1}^{\trans}e_{t}^{2}\\
\Delta\vec y_{t-1}e_{t}u_{t} & \Delta\vec y_{t-1}e_{t}^{2} & \Delta\vec y_{t-1}\Delta\vec y_{t-1}^{\trans}e_{t}^{2}
\end{bmatrix},\label{eq:Mcv}
\end{align}
for each fixed $\tau\in[0,1]$.

To compute the limit of the r.h.s.\ of the preceding, note that
\[
\indic\{x_{t-1}>0\}\expect_{t-1}\sgn([x_{t-1}+u_{t}]_{+}-[x_{t-1}]_{+})=\indic\{x_{t-1}>0\}\expect_{t-1}\sgn(u_{t})=0
\]
and
\begin{align*}
\indic\{x_{t-1}\leq0\}\expect_{t-1}\sgn([x_{t-1}+u_{t}]_{+}-[x_{t-1}]_{+}) & =\indic\{x_{t-1}\leq0\}\Prob_{t-1}\{u_{t}>-x_{t-1}\}\\
 & =\indic\{x_{t-1}\leq0\}[1-F(-x_{t-1})],
\end{align*}
and hence
\begin{align}
\expect_{t-1}e_{t}^{2} & =\expect_{t-1}\sgn([x_{t-1}+u_{t}]_{+}-[x_{t-1}]_{+})^{2}-\{\expect_{t-1}\sgn([x_{t-1}+u_{t}]_{+}-[x_{t-1}]_{+})\}^{2}\nonumber \\
 & =1-\indic\{x_{t-1}\leq0\}[F(-x_{t-1})+(1-F(-x_{t-1}))^2]\nonumber \\
 & \eqdef1+H_{1}(x_{t-1})\label{eq:Ee2}
\end{align}
where $H_{1}(x)$ is bounded, and (trivially) $H_{1}(x)\goesto0$
as $x\goesto+\infty$. Similarly, since
\[
\expect_{t-1}e_{t}u_{t}=\expect_{t-1}\sgn([x_{t-1}+u_{t}]_{+}-[x_{t-1}]_{+})u_{t}
\]
we obtain
\begin{align*}
\indic\{x_{t-1}>0\}\expect_{t-1}e_{t}u_{t} & =\indic\{x_{t-1}>0\}\expect_{t-1}\sgn(u_{t})u_{t}=\indic\{x_{t-1}>0\}\expect\smlabs{u_{1}}
\end{align*}
and
\begin{align*}
\indic\{x_{t-1}\leq0\}\expect_{t-1}e_{t}u_{t} & =\indic\{x_{t-1}\leq0\}\expect_{t-1}u_{t}\indic\{u_{t}>-x_{t-1}\}\\
 & =\indic\{x_{t-1}\leq0\}\int_{-x_{t-1}}^{\infty}uf_{u}(u)\diff u
\end{align*}
whence
\begin{align}
\expect_{t-1}e_{t}u_{t} & =\indic\{x_{t-1}>0\}\expect\smlabs{u_{1}}+\indic\{x_{t-1}\leq0\}\int_{-x_{t-1}}^{\infty}uf_{u}(u)\diff u\nonumber \\
 & =\expect\smlabs{u_{1}}+\indic\{x_{t-1}\leq0\}\left[\int_{-x_{t-1}}^{\infty}uf_{u}(u)\diff u-\expect\smlabs{u_{1}}\right]\nonumber \\
 & \eqdef\expect\smlabs{u_{1}}+H_{2}(x_{t-1})\label{eq:Eeu}
\end{align}
where $H_{2}$ is bounded, and (trivially) $H_{2}(x)\goesto0$ as
$x\goesto+\infty$. Deduce that for each $\tau\in[0,1]$,
\begin{equation}
\frac{1}{T}\sum_{t=1}^{\smlfloor{\tau T}}\expect_{t-1}\begin{bmatrix}u_{t}^{2}\\
e_{t}u_{t}\\
e_{t}^{2}
\end{bmatrix}=\frac{1}{T}\sum_{t=1}^{\smlfloor{\tau T}}\begin{bmatrix}\sigma^{2}\\
\expect\smlabs{u_{1}}+H_{2}(x_{t-1})\\
1+H_{1}(x_{t-1})
\end{bmatrix}=\tau\begin{bmatrix}\sigma^{2}\\
\expect\smlabs{u_{1}}\\
1
\end{bmatrix}+o_{p}(1)\label{eq:cvp1}
\end{equation}
by Lemma~\ref{lem:integ}\ref{enu:integ}. This yields the limit
of the upper left $2\times2$ block of the r.h.s.\ of (\ref{eq:Mcv}).

Regarding the remaining elements of the r.h.s.\ of (\ref{eq:Mcv}),
we next observe that
\begin{align*}
\frac{1}{T}\sum_{t=1}^{\smlfloor{\tau T}}\expect_{t-1}\Delta\vec y_{t-1}\begin{bmatrix}e_{t}u_{t}\\
e_{t}^{2}
\end{bmatrix}^{\trans} & =\frac{1}{T}\sum_{t=1}^{\smlfloor{\tau T}}\Delta\vec y_{t-1}\expect_{t-1}\begin{bmatrix}e_{t}u_{t}\\
e_{t}^{2}
\end{bmatrix}^{\trans}\\
 & =\left(\frac{1}{T}\sum_{t=1}^{\smlfloor{\tau T}}\Delta\vec y_{t-1}\right)\begin{bmatrix}\expect\smlabs{u_{1}}\\
1
\end{bmatrix}^{\trans}+\frac{1}{T}\sum_{t=1}^{\smlfloor{\tau T}}\Delta\vec y_{t-1}\begin{bmatrix}H_{2}(x_{t-1})\\
H_{1}(x_{t-1})
\end{bmatrix}^{\trans},
\end{align*}
by (\ref{eq:Ee2}) and (\ref{eq:Eeu}). Now
\[
\frac{1}{T}\sum_{t=1}^{\smlfloor{\tau T}}\Delta\vec y_{t-1}=\frac{1}{T}(\vec y_{\smlfloor{\tau T}}-\vec y_{0})=O_{p}(T^{-1/2})
\]
by Theorem~3.2 in \citet{BD22}, while
\begin{multline*}
\abs{\frac{1}{T}\sum_{t=1}^{\smlfloor{\tau T}}\Delta\vec y_{t-1}\begin{bmatrix}H_{2}(x_{t-1})\\
H_{1}(x_{t-1})
\end{bmatrix}^{\trans}}\\
\leq\left(\frac{1}{T}\sum_{t=1}^{T}\smlnorm{\Delta\vec y_{t-1}}^{2}\right)^{1/2}\left(\frac{1}{T}\sum_{t=1}^{T}\norm{\begin{bmatrix}H_{2}(x_{t-1})\\
H_{1}(x_{t-1})
\end{bmatrix}}^{2}\right)^{1/2}=o_{p}(1)
\end{multline*}
by Lemmas~\ref{lem:bd2022}\ref{enu:deltayt} and \ref{lem:integ}\ref{enu:integ}.
Hence for each $\tau\in[0,1]$
\begin{equation}
\frac{1}{T}\sum_{t=1}^{\smlfloor{\tau T}}\expect_{t-1}\Delta\vec y_{t-1}\begin{bmatrix}e_{t}u_{t}\\
e_{t}^{2}
\end{bmatrix}^{\trans}=o_{p}(1).\label{eq:cvp2}
\end{equation}
Finally, since
\[
\frac{1}{T}\sum_{t=1}^{\smlfloor{\tau T}}\expect_{t-1}\Delta\vec y_{t-1}\Delta\vec y_{t-1}^{\trans}e_{t}^{2}=\frac{1}{T}\sum_{t=1}^{\smlfloor{\tau T}}\Delta\vec y_{t-1}\Delta\vec y_{t-1}^{\trans}+\frac{1}{T}\sum_{t=1}^{\smlfloor{\tau T}}\Delta\vec y_{t-1}\Delta\vec y_{t-1}^{\trans}H_{1}(x_{t-1})
\]
where by H\"{o}lder's inequality, the norm of the final term is bounded
by
\[
\left(\frac{1}{T}\sum_{t=1}^{T}\smlnorm{\Delta\vec y_{t-1}}^{2+\delta_{u}}\right)^{2/(2+\delta_{u})}\left(\frac{1}{T}\sum_{t=1}^{T}\smlabs{H_{1}(x_{t-1})}^{(2+\delta_{u})/\delta_{u}}\right)^{\delta_{u}/(2+\delta_{u})}=o_{p}(1)
\]
by Lemmas~\ref{lem:bd2022}\ref{enu:mombnd} and \ref{lem:integ}\ref{enu:integ}.
Hence, for each $\tau\in[0,1]$,
\begin{equation}
\frac{1}{T}\sum_{t=1}^{\smlfloor{\tau T}}\expect_{t-1}\Delta\vec y_{t-1}\Delta\vec y_{t-1}^{\trans}e_{t}^{2}\inprob\tau\Omega\label{eq:cvp3}
\end{equation}
by Lemma~\ref{lem:bd2022}\ref{enu:deltayt}.

It follows from (\ref{eq:cvp1})--(\ref{eq:cvp3}) that for each $\tau\in[0,1]$,
\[
\smlcv{\mathbb{M}_{T}(\tau T)}\inprob\tau\begin{bmatrix}\sigma^{2} & \expect\smlabs{u_{1}} & 0\\
\expect\smlabs{u_{1}} & 1 & 0\\
0 & 0 & \Omega
\end{bmatrix}\eqdef\tau\Omega_{M}
\]
Therefore Theorem~2.3 in \citet{DR78AP}, and the Cram\'{e}r--Wold device,
imply that
\begin{equation}
\mathbb{M}_{T}(\tau T)=T^{-1/2}\sum_{t=1}^{\smlfloor{\tau T}}\begin{bmatrix}u_{t}\\
e_{t}\\
\Delta\vec y_{t-1}e_{t}
\end{bmatrix}\indist\begin{bmatrix}\sigma_0 W(\tau)\\
\widetilde{W}(\tau)\\
\Xi(\tau)
\end{bmatrix}\eqdef \mathbb{M}(\tau)\label{eq:Mcvg}
\end{equation}
on $D[0,1]$, where $\mathbb{M}$ is a $(k+1)$-dimensional Brownian motion
with variance matrix $\Omega_{M}$: and so, in particular, $\Xi$
is independent of $(W,\widetilde{W})$.

Finally, returning to (\ref{eq:sumze}), i.e.\ to
\[
\sum_{t=1}^{T}z_{t-1,T}e_{t}=\sum_{t=1}^{T}\begin{bmatrix}T^{-1/2}e_{t}\\
T^{-1}y_{t-1}e_{t}\\
T^{-1/2}\Delta\vec y_{t-1}e_{t}
\end{bmatrix}=\begin{bmatrix}T^{-1/2}\sum_{t=1}^{T}e_{t}\\
T^{-1}\sum_{t=1}^{T}y_{t-1}e_{t}\\
\Xi_{T}(1)
\end{bmatrix},
\]
we note that by Theorem~3.2 of \citet{BD22} and Theorem~2.1 of
\citet{ito_convergence},
\[
T^{-1}\sum_{t=1}^{T}y_{t-1}e_{t}\indist\int_{0}^{1}Y(\tau)\diff E(\tau)
\]
holds jointly with (\ref{eq:Mcvg}), where $Y$ is a function only
of $W$. Deduce
\[
\sum_{t=1}^{T}z_{t-1,T}e_{t}\indist\begin{bmatrix}\widetilde{W}(1)\\
\int_{0}^{1}Y(\tau)\diff \widetilde{W}(\tau)\\
\Xi(1)
\end{bmatrix},
\]
where $\Xi(1)\distrib\mathcal{N}[0,\Omega]$ independently of $(\widetilde{W},Y)$.\hfill\qedsymbol
\bibliographystyle{ecta}
\addcontentsline{toc}{section}{References}
\bibliography{tobit_lad_mle}

@Book{Revuz_Yor,
  title     = {Continuous Martingales and Brownian Motion},
  publisher = {Springer},
  year      = {1999},
  author    = {Revuz, D. and Yor, M.},
}

@Book{Tong90,
  title     = {Non-linear time series: a dynamical system approach},
  publisher = {O.U.P.},
  year      = {1990},
  author    = {Tong, H.},
}

@Book{Chan09,
  title     = {Exploration of a Nonlinear World: an appreciation of {Howell Tong's} contributions to statistics},
  publisher = {World Scientific},
  year      = {2009},
  editor    = {Chan, K.-S.},
}

@Book{TTG10,
  title     = {Modelling Nonlinear Economic Time Series},
  publisher = {O.U.P.},
  year      = {2010},
  author    = {Terasvirta, T. and Tj{\o}stheim, D. and Granger, C. W. J.},
}

@Article{ito_convergence,
  author  = {Liang, H. and Phillips, P. C. B. and Wang, H. and Wang, Q.},
  title   = {Weak convergence to stochastic integrals for econometric applications},
  journal = {Econometric Theory},
  year    = {2016},
  volume  = {32},
  number  = {6},
  pages   = {1349--1375},
}

@Article{bykh_JBES,
  author  = {Bykhovskaya, A.},
  title   = {Time series approach to the evolution of networks: prediction and estimation},
  journal = {Journal of Business \& Economic Statistics},
  year    = {2023},
  volume  = {41},
  number  = {1},
  pages   = {170--183},
}

@Article{stat_paper,
  author  = {de Jong, R. and Herrera, A. M.},
  title   = {Dynamic censored regression and the {Open Market Desk} reaction function},
  journal = {Journal of Business $\&$ Economic Statistics},
  year    = {2011},
  volume  = {29},
  number  = {2},
  pages   = {228--237},
}

@Article{Demiralp_Jorda2002,
  author  = {Demiralp, S. and Jord\`{a}, O.},
  title   = {The announcement effect: evidence from {Open Market Desk} data},
  journal = {FRBNY Economic Policy Review},
  year    = {2002},
  volume  = {8},
  pages   = {29--48},
}

@Article{cavaliere2005,
  author  = {Cavaliere, G.},
  title   = {Limited time series with a unit root},
  journal = {Econometric Theory},
  year    = {2005},
  volume  = {21},
  number  = {5},
  pages   = {907--945},
}

@Article{Mik07Ecta,
  author    = {Mikusheva, A.},
  title     = {Uniform inference in autoregressive models},
  journal   = {Econometrica},
  year      = {2007},
  volume    = {75},
  number    = {5},
  pages     = {1411--1452},
}

@Article{mikusheva2012one,
  author    = {Mikusheva, A.},
  title     = {One-Dimensional Inference in Autoregressive Models With the Potential Presence of a Unit Root},
  journal   = {Econometrica},
  year      = {2012},
  volume    = {80},
  number    = {1},
  pages     = {173--212},
  publisher = {Wiley Online Library},
}

@Book{Jungers09,
  title     = {The Joint Spectral Radius: {t}heory and applications},
  publisher = {Springer},
  year      = {2009},
  author    = {Jungers, R.}
}

@Article{AMSV21JoE,
  author  = {Aruoba, S. B. and Mlikota, M. and Schorfheide, F. and Villalvazo, S.},
  title   = {{SVARs} with occasionally-binding constraints},
  journal = {Journal of Econometrics},
  year    = {2022},
  volume  = {231},
  number  = {2},
  pages   = {477--499},
}

@Article{Mav21Ecta,
  author  = {Mavroeidis, S.},
  title   = {Identification at the zero lower bound},
  journal = {Econometrica},
  year    = {2021},
  volume  = {69},
  number  = {6},
  pages   = {2855--2885},
}

@Article{BMMV21JBF,
  author  = {Brezigar-Masten, A. and Masten, I. and Volk, M.},
  title   = {Modeling credit risk with a {Tobit} model of days past due},
  journal = {Journal of Banking \& Finance},
  year    = {2021},
  volume  = {122},
  pages   = {105984},
}

@Article{DSK12AE,
  author  = {Dong, D. and Schmit, T. M. and Kaiser, H.},
  title   = {Modelling household purchasing behaviour to analyse beneficial marketing strategies},
  journal = {Applied Economics},
  year    = {2012},
  volume  = {44},
  number  = {6},
  pages   = {717--725},
}

@Misc{LMS19,
  author = {Liu, L. and Moon, H. R. and Schorfheide, F.},
  title  = {Forecasting with a panel {Tobit} model},
  year   = {2019},
  note   = {{NBER} working paper 28571},
}

@Article{Hansen99REStat,
  author  = {Hansen, B. E.},
  title   = {The grid bootstrap and the autoregressive model},
  journal = {Review of Economics and Statistics},
  year    = {1999},
  volume  = {81},
  number  = {4},
  pages   = {594--607},
}

@Article{DMW22,
  author  = {Duffy, J. A. and Mavroeidis, S. and Wycherley, S.},
  journal = {Journal of Econometrics},
  title   = {Cointegration with occasionally binding constraints},
  year    = {2025},
  volume  = {in press},
}

@Article{PJ08LAA,
  author  = {Parrilo, Pablo A and Jadbabaie, Ali},
  title   = {Approximation of the joint spectral radius using sum of squares},
  journal = {Linear Algebra and its Applications},
  year    = {2008},
  volume  = {428},
  number  = {10},
  pages   = {2385--2402},
}

@Book{Hamilton94,
  title     = {Time Series Analysis},
  publisher = {Princeton University Press},
  year      = {1994},
  author    = {Hamilton, James D.},
  address   = {Princeton, NJ (USA)},
  owner     = {james},
  timestamp = {2014.08.31},
}

@article{BD22,
  title={The local to unity dynamic Tobit model},
  author={Bykhovskaya, Anna and Duffy, James A},
  journal={Journal of Econometrics},
  volume={241},
  number={2},
  pages={105764},
  year={2024},
  publisher={Elsevier}
}

@Article{BB05ET,
  author  = {Bagnoli, Mark and Bergstrom, Ted},
  title   = {Log-concave probability and its applications},
  journal = {Economic Theory},
  year    = {2005},
  volume  = {26},
  number  = {2},
  pages   = {445},
}

@Article{Kni89CJS,
  author  = {Knight, Keith},
  title   = {Limit theory for autoregressive-parameter estimates in an infinite-variance random walk},
  journal = {The Canadian Journal of Statistics},
  year    = {1989},
  pages   = {261--278},
}

@Article{Ols78Ecta,
  author  = {Olsen, Randall J},
  title   = {Note on the uniqueness of the maximum likelihood estimator for the Tobit model},
  journal = {Econometrica},
  year    = {1978},
  pages   = {1211--1215},
}

@Book{HH80,
  title     = {Martingale Limit Theory and Its Application},
  publisher = {Academic Press},
  year      = {1980},
  author    = {Hall, P. and Heyde, C. C.},
  address   = {New York (USA)},
}

@Article{Sam53AMS,
  author  = {Sampford, Michael R},
  title   = {Some inequalities on {Mill's} ratio and related functions},
  journal = {Annals of Mathematical Statistics},
  year    = {1953},
  volume  = {24},
  number  = {1},
  pages   = {130--132},
}

@Book{GS69,
  title     = {Introduction to the Theory of Random Processes},
  publisher = {W. B. Saunders},
  year      = {1969},
  author    = {Gikhman, I. I. and Skorokhod, A. V.},
  address   = {Philadelphia (USA)},
  owner     = {james},
  timestamp = {2015.01.07},
}

@Article{Poll91ET,
  author  = {Pollard, David},
  title   = {Asymptotics for least absolute deviation regression estimators},
  journal = {Econometric Theory},
  year    = {1991},
  volume  = {7},
  number  = {2},
  pages   = {186--199},
}

@Article{Pow84JoE,
  author  = {Powell, James L},
  title   = {Least absolute deviations estimation for the censored regression model},
  journal = {Journal of econometrics},
  year    = {1984},
  volume  = {25},
  number  = {3},
  pages   = {303--325},
}

@Article{Herce96ET,
  author  = {Herce, Miguel A},
  journal = {Econometric Theory},
  title   = {Asymptotic theory of LAD estimation in a unit root process with finite variance errors},
  year    = {1996},
  number  = {1},
  pages   = {129--153},
  volume  = {12},
}

@Book{Chung01,
  author    = {Chung, Kai Lai},
  publisher = {Academic Press},
  title     = {A Course in Probability Theory},
  year      = {2001},
  edition   = {3rd},
}

@Book{VVW96,
  author    = {Van der Vaart, A. W. and Wellner, J. A.},
  publisher = {Springer},
  title     = {Weak Convergence and Empirical Processes: {w}ith Applications to Statistics},
  year      = {1996},
  owner     = {james},
  timestamp = {2012.06.07},
}

@Book{Kal01,
  title     = {Foundations of Modern Probability},
  publisher = {Springer},
  year      = {2001},
  author    = {Kallenberg, Olav},
  address   = {New York (USA)},
  edition   = {2nd},
  owner     = {james},
  timestamp = {2014.01.10},
}

@Article{DR78AP,
  author  = {Durrett, Richard and Resnick, Sidney I},
  journal = {Annals of Probability},
  title   = {Functional limit theorems for dependent variables},
  year    = {1978},
  pages   = {829--846},
}

@article{ramey2018government,
  title={Government spending multipliers in good times and in bad: evidence from US historical data},
  author={Ramey, Valerie A and Zubairy, Sarah},
  journal={Journal of political economy},
  volume={126},
  number={2},
  pages={850--901},
  year={2018},
  publisher={University of Chicago Press Chicago, IL}
}

@misc{covid_USfacts,
title = {Cases and Deaths},
author = {USAFacts},
year = {accessed July 15, 2025},
howpublished = "\url{https://usafacts.org/visualizations/coronavirus-covid-19-spread-map}"
}

@article{gonccalves2021covid,
  title={COVID-19: Nothing is Normal in this Pandemic},
  author={Gon{\c{c}}alves, Luzia and Turkman, Maria Ant{\'o}nia Amaral and Geraldes, Carlos and Marques, Tiago A and Sousa, Lisete},
  journal={Journal of epidemiology and global health},
  volume={11},
  number={2},
  pages={146--149},
  year={2021},
  publisher={Springer}
}

@article{cohen2022covid,
  title={COVID-19 cases and deaths in the United States follow Taylor's law for heavy-tailed distributions with infinite variance},
  author={Cohen, Joel E and Davis, Richard A and Samorodnitsky, Gennady},
  journal={Proceedings of the National Academy of Sciences},
  volume={119},
  number={38},
  pages={e2209234119},
  year={2022},
  publisher={National Academy of Sciences}
}

@article{JM06Ecta,
	author = {Jansson, Michael and Moreira, Marcelo J},
	journal = {Econometrica},
	number = {3},
	owner = {james},
	pages = {681--714},
	timestamp = {2015.09.16},
	title = {Optimal inference in regression models with nearly integrated regressors},
	volume = {74},
	year = {2006}}

@Article{EMW15Ecta,
  author    = {Elliott, Graham and M{\"u}ller, Ulrich K and Watson, Mark W},
  title     = {Nearly optimal tests when a nuisance parameter is present under the null hypothesis},
  journal   = {Econometrica},
  year      = {2015},
  volume    = {83},
  number    = {2},
  pages     = {771--811},
  owner     = {james},
  timestamp = {2015.05.05},
}

@Article{BF97IER,
  author  = {Balke, Nathan S and Fomby, Thomas B},
  title   = {Threshold cointegration},
  journal = {International Economic Review},
  year    = {1997},
  pages   = {627--645},
}

@Article{HS02JoE,
  author  = {Hansen, Bruce E and Seo, Byeongseon},
  title   = {Testing for two-regime threshold cointegration in vector error-correction models},
  journal = {Journal of Econometrics},
  year    = {2002},
  volume  = {110},
  number  = {2},
  pages   = {293--318},
}

@Article{KR10JoE,
  author  = {Kristensen, Dennis and Rahbek, Anders},
  title   = {Likelihood-based inference for cointegration with nonlinear error-correction},
  journal = {Journal of Econometrics},
  year    = {2010},
  volume  = {158},
  number  = {1},
  pages   = {78--94},
}

@Unpublished{DM24,
  author = {Duffy, James A. and Mavroeidis, Sophocles},
  note   = {{arXiv:2404.05349}},
  title  = {Common trends and long-run identification in nonlinear structural {VARs}},
  year   = {2024},
}

@article{PP01Ecta,
	author = {Park, J. Y. and Phillips, P. C. B.},
	journal = {Econometrica},
	number = {1},
	pages = {117--161},
	publisher = {Wiley Online Library},
	title = {Nonlinear regressions with integrated time series},
	volume = {69},
	year = {2001}}

@article{PP00Ecta,
	author = {Park, J. Y. and Phillips, P. C. B.},
	journal = {Econometrica},
	number = {5},
	pages = {1249--1280},
	publisher = {Wiley Online Library},
	title = {Nonstationary binary choice},
	volume = {68},
	year = {2000}}

@Article{CW15JoE,
  author  = {Chan, Nigel and Wang, Qiying},
  title   = {Nonlinear regressions with nonstationary time series},
  journal = {Journal of Econometrics},
  year    = {2015},
  volume  = {185},
  number  = {1},
  pages   = {182--195},
}

@Article{Xiao09JoE,
  author  = {Xiao, Zhijie},
  journal = {Journal of Econometrics},
  title   = {Quantile cointegrating regression},
  year    = {2009},
  number  = {2},
  pages   = {248--260},
  volume  = {150},
}

@Article{PJH07JoE,
  author  = {Phillips, Peter C. B. and Jin, Sainan and Hu, Ling},
  journal = {Journal of Econometrics},
  title   = {Nonstationary discrete choice: A corrigendum and addendum},
  year    = {2007},
  number  = {2},
  pages   = {1115--1130},
  volume  = {141},
}

@Article{LL09ET,
  author  = {Li, Guodong and Li, Wai Keung},
  journal = {Econometric Theory},
  title   = {Least absolute deviation estimation for unit root processes with {GARCH} errors},
  year    = {2009},
  number  = {5},
  pages   = {1208--1227},
  volume  = {25},
}

@Unpublished{DMW23stat,
  author = {Duffy, James A and Mavroeidis, Sophocles and Wycherley, Sam},
  title  = {Stationarity with Occasionally Binding Constraints},
  note   = {{arXiv:2307.06190}},
  year   = {2023},
}

\end{document}